\def\dOi{9(4:5)2013}
\ifpdf\usepackage{hyperref}\else\usepackage[dvips]{hyperref}\fi
\definecolor{darkblue}{rgb}{0,0,0.6}
\ifpdf\hypersetup{
}\else\hypersetup{
	citecolor=black,
	linkcolor=black,
	urlcolor=black
}\fi
\tikzstyle{every picture}+=[remember picture]
\newcommand{\tikzlabel}[2]{\tikz[baseline, inner sep=0pt, outer sep=0pt, anchor=base]{\node[inner sep=0pt, outer sep=0pt](#1){#2};}}
\newcommand{\tikzoverlay}[1]{
\begin{tikzpicture}[overlay]
#1
\end{tikzpicture}
}
\newcommand{\tikzedge}[2]{
	\path[->] (#1) edge [draw=black] (#2);
}
\newcommand{\tikzloopedge}[2]{
	\path[->] (#1) edge [draw=black,in=0,out=0] (#2);
}
\newcommand{\tikzloopedgeleft}[2]{
	\path[->] (#1) edge [draw=black,in=180,out=180] (#2);
}
\def\myarm{1cm}
\def\myangle{0}
\tikzset{
  arm/.default=1cm,
  arm/.code={\def\myarm{#1}}, %
  angle/.default=0,
  angle/.code={\def\myangle{#1}} %
}
\tikzset{
    myncbar/.style = {to path={
        let
            \p1=($(\tikztotarget)+(\myangle:\myarm)$)
        in
            -- ++(\myangle:\myarm) coordinate (tmp)
            -- ($(\tikztotarget)!(tmp)!(\p1)$)\tikztonodes
            -- (\tikztotarget)%
    }}
}
\newtheorem{theorem}{Theorem}
\newtheorem{proposition}[theorem]{Proposition}
\newtheorem{lemma}[theorem]{Lemma}
\newtheorem{corollary}[theorem]{Corollary}
\theoremstyle{definition}
\newtheorem{definition}[theorem]{Definition}
\newtheorem{example}[theorem]{Example}
\newcommand{\refsection}[1]{Section~\ref{#1}}
\newcommand{\refsubsection}[1]{Subsection~\ref{#1}}
\newcommand{\EXPTIME}{\textsmaller{EXPTIME}\xspace}
\newcommand{\nxt}{\ensuremath{\mathtt{X}}\xspace}
\newcommand{\until}{\ensuremath{\mathtt{U}}\xspace}
\newcommand{\release}{\ensuremath{\mathtt{R}}\xspace}
\newcommand{\allpath}{\ensuremath{\mathtt{A}}\xspace}
\newcommand{\qpath}{\ensuremath{Q}\xspace}
\newcommand{\expath}{\ensuremath{\mathtt{E}}\xspace}
\newcommand{\finally}{\ensuremath{\mathtt{F}}\xspace}
\newcommand{\generally}{\ensuremath{\mathtt{G}}\xspace}
\newcommand{\true}{\ensuremath{\mathtt{t\!t}}}
\newcommand{\false}{\ensuremath{\mathtt{f\!f}}}
\newcommand{\fl}[1]{\ensuremath{\mathit{FL}(#1)}}
\newcommand{\flr}[1]{\ensuremath{\mathit{FL}_{\release}(#1)}}
\newcommand{\subf}[1]{\ensuremath{\mathit{Sub}(#1)}}
\newcommand{\goals}[1]{\ensuremath{\mathit{Conf}(#1)}}
\newcommand{\Transsys}{\ensuremath{\mathcal{T}}}
\newcommand{\States}{\ensuremath{\mathcal{S}}}
\newcommand{\Prop}{\ensuremath{\mathcal{P}}}
\newcommand{\ctlstar}{\textup{CTL}\ensuremath{{}^{*}}\xspace}
\newcommand{\ctlplus}{\textup{CTL}\ensuremath{{}^{+}}\xspace}
\newcommand{\ctl}{\textup{CTL}\xspace}
\newcommand{\ltl}{\textup{LTL}\xspace}
\newcommand{\pdl}{\textup{PDL}\xspace}
\newcommand{\Andrule}{\ensuremath{\mathtt{(}\!\land\!\mathtt{)}}}
\newcommand{\Orrule}{\ensuremath{\mathtt{(}\!\lor\!\mathtt{)}}}
\newcommand{\AAndrule}{\ensuremath{\mathtt{(A}\!\land\!\mathtt{)}}}
\newcommand{\EAndrule}{\ensuremath{\mathtt{(E}\!\land\!\mathtt{)}}}
\newcommand{\AOrrule}{\ensuremath{\mathtt{(A}\!\lor\!\mathtt{)}}}
\newcommand{\EOrrule}{\ensuremath{\mathtt{(E}\!\lor\!\mathtt{)}}}
\newcommand{\ALitrule}{\ensuremath{\mathtt{(Al)}}}
\newcommand{\ELitrule}{\ensuremath{\mathtt{(El)}}}
\newcommand{\AArule}{\ensuremath{\mathtt{(AA)}}}
\newcommand{\EArule}{\ensuremath{\mathtt{(EA)}}}
\newcommand{\AErule}{\ensuremath{\mathtt{(AE)}}}
\newcommand{\EErule}{\ensuremath{\mathtt{(EE)}}}
\newcommand{\AUrule}{\ensuremath{\mathtt{(AU)}}}
\newcommand{\AFrule}{\ensuremath{\mathtt{(AF)}}}
\newcommand{\EUrule}{\ensuremath{\mathtt{(EU)}}}
\newcommand{\EFrule}{\ensuremath{\mathtt{(EF)}}}
\newcommand{\ARrule}{\ensuremath{\mathtt{(AR)}}}
\newcommand{\AGrule}{\ensuremath{\mathtt{(AG)}}}
\newcommand{\ERrule}{\ensuremath{\mathtt{(ER)}}}
\newcommand{\EGrule}{\ensuremath{\mathtt{(EG)}}}
\newcommand{\Ettrule}{\ensuremath{\mathtt{(Et\!t)}}}
\newcommand{\XruleNoE}{\ensuremath{\mathtt{(X_0)}}}
\newcommand{\XruleWithE}{\ensuremath{\mathtt{(X_1)}}}
\newcommand{\placeholder}{\underline{\phantom X}}
\newcommand{\Nat}{\ensuremath{\mathbb{N}}}
\newcommand{\sideFormulas}{\Phi}   %
\newcommand{\addSideFormulas}[1]%
  {#1\def\addSideFormulasTest{#1}\ifx\addSideFormulasTest\empty\sideFormulas\else, \sideFormulas\fi}
\newcommand{\unaryRule}[3]{%
  \AxiomC{\ensuremath{\addSideFormulas{#2}}}%
  \LeftLabel{#1}%
  \UnaryInfC{\ensuremath{\addSideFormulas{#3}}}%
  \bottomAlignProof\DisplayProof}
\newcommand{\choiceRule}[4]{%
  \AxiomC{\ensuremath{\addSideFormulas{#2}}}%
  \AxiomC{\ensuremath{\hspace*{-5mm}|\hspace*{-5mm}}}%
  \AxiomC{\ensuremath{\addSideFormulas{#3}}}%
  \LeftLabel{#1}%
  \TrinaryInfC{\ensuremath{\addSideFormulas{#4}}}%
  \bottomAlignProof\DisplayProof}
\newcommand{\branchingRule}[4]{%
  \AxiomC{\ensuremath{\addSideFormulas{#2}}}%
  \AxiomC{\ensuremath{\hspace*{-6mm}|\hspace*{2mm}\cdots\hspace*{2mm}|\hspace*{-6mm}}}%
  \AxiomC{\ensuremath{\addSideFormulas{#3}}}%
  \LeftLabel{#1}%
  \TrinaryInfC{\ensuremath{\addSideFormulas{#4}}}%
  \bottomAlignProof\DisplayProof}
\begin{document}

\title[Satisfiability Games for Branching-Time Logics]{Satisfiability Games for Branching-Time Logics\rsuper*}

\author[O.~Friedmann]{Oliver Friedmann\rsuper a}
\address{{\lsuper{a,b}}Department of Computer Science, Ludwig-Maximilians-University Munich, Germany}
\email{\{oliver.friedmann, markus.latte\}@ifi.lmu.de}

\author[M.~Latte]{Markus Latte\rsuper b}
\address{\vspace{-18 pt}}

\author[M.~Lange]{Martin Lange\rsuper c}
\address{{\lsuper c}School of Electrical Engineering and Computer Science, University of Kassel, Germany}
\email{martin.lange@uni-kassel.de}
\thanks{Financial support was provided by the DFG Graduiertenkolleg 1480 (PUMA) and the European Research Council 
under the European Community's Seventh Framework Programme (FP7/2007-2013) / ERC grant agreement no 259267.}

\keywords{temporal logic, automata, parity games, decidability}
\subjclass{F.3.1, F.4.1} 
\ACMCCS{[{\bf Theory of computiation}]:
Logic---Modal and temporal logics; Computational complexity and
cryptography---Complexity theory and logic }
\titlecomment{{\lsuper*}A preliminary version appeared as \cite{ijcar2010}.} 

\begin{abstract}
The satisfiability problem for branching-time temporal logics like
\ctlstar, \ctl and \ctlplus has important applications in program
specification and verification. Their computational complexities are
known: \ctlstar and \ctlplus are complete for doubly exponential time,
\ctl is complete for single exponential time. Some decision procedures
for these logics are known; they use tree automata, tableaux or axiom
systems.

In this paper we present a uniform game-theoretic framework for the
satisfiability problem of these branching-time temporal logics. We
define satisfiability games for the full branching-time temporal logic
\ctlstar using a high-level definition of winning condition that
captures the essence of well-foundedness of least fixpoint
unfoldings. These winning conditions form formal languages of
$\omega$-words. We analyse which kinds of deterministic
$\omega$-automata are needed in which case in order to recognise these
languages. We then obtain a reduction to the problem of solving parity
or B\"uchi games.  The worst-case complexity of the obtained
algorithms matches the known lower bounds for these logics.

This approach provides a uniform, yet complexity-theoretically optimal
treatment of satisfiability for branching-time temporal logics. It
separates the use of temporal logic machinery from the use of automata
thus preserving a syntactical relationship between the input formula
and the object that represents satisfiability, i.e.\ a winning
strategy in a parity or B\"uchi game. The games presented here work on
a Fischer-Ladner closure of the input formula only. Last but not
least, the games presented here come with an attempt at providing tool
support for the satisfiability problem of complex branching-time
logics like \ctlstar and \ctlplus.
\end{abstract}
\maketitle

\section{Introduction}
\label{sec:intro}

The full branching-time temporal logic \ctlstar is an important tool for the specification and verification 
of reactive~\cite{journals/igpl/GabbayP08} or agent-based systems~\cite{conf/atal/LuoSSCL05}, and  
for program synthesis~\cite{Pnueli88}, etc. Emerson and Halpern have introduced 
\ctlstar~\cite{Emerson:1986:SNN} as a formalism which supersedes both the branching-time logic \ctl~\cite{ClEm81b} 
and the linear-time logic \ltl~\cite{Pnueli:1977}.  

\subsubsection*{Automata-theoretic approaches.}
As much as the introduction of \ctlstar has led to an easy unification of \ctl and \ltl, it has also proved to be quite a difficulty in 
obtaining decision procedures for this logic. The first procedure by Emerson and Sistla was 
automata-theoretic~\cite{IC::EmersonS1984} and roughly works as follows. A formula is translated into a doubly-exponentially
large tree automaton whose states are Hintikka-like sets of sets of subformulas of the input formula. This
tree automaton recognises a superset of the set of tree models of the input formula. It is lacking a mechanism
that ensures that certain temporal operators are really interpreted as least fixpoints of certain monotone
functions rather than arbitrary fixpoints. Such a mechanism is provided by intersecting this automaton with
a tree automaton that recognises a language which is defined as the set of all trees such that all paths in
such a tree belong to an $\omega$-regular word language, recognised by some B\"uchi automaton for instance.
In order to turn this into a tree automaton, it has to be determinised first. A series of improvements
on B\"uchi automata determinisation for this particular word language has eventually led to Emerson and Jutla's 
automata-theoretic decision procedure~\cite{Emerson:2000:CTA} whose asymptotic worst-case running time is optimal, 
namely doubly exponential~\cite{STOC85*240}.  
This approach has a major drawback though, as noted by Emerson~\cite{Emerson90a}: 
``{\em \ldots due to the delicate combinatorial constructions involved, there is usually no clear relationship
between the structure of automaton and the candidate formula.}'' 

The constructions he refers to are determinisation and complementation of B\"uchi automata. Determinisation in
particular is generally perceived as the bottleneck in applications that need deterministic automata for
$\omega$-regular languages. A lot of effort has been spent on attempts to avoid B\"uchi determinisation for
temporal branching-time logics. Kupferman, Vardi and Wolper introduced alternating automata \cite{Muller:1987:AAI} 
for branching-time temporal logics \cite{BVW94,KVW00}. The main focus of this approach was the model-checking problem
for such logics, though. While satisfiability checking and model checking for linear-time temporal logics are
virtually the same problem and therefore can be handled by the same machinery, i.e.\ class of automata and
algorithms, the situation for branching-time temporal logics is different. In the automata-theoretic framework,
satisfiability corresponds to the general emptiness problem whereas model-checking reduces to the simpler
1-letter emptiness problem. Still, alternating automata provide an alternative framework for the satisfiability
checking problem for branching-time logics, and some effort has been paid in order to achieve emptiness checks,
and therefore satisfiability checking procedures. Most notably, Kupferman and Vardi have suggested a way to test
tree automata for emptiness which avoids B\"uchi determinisation \cite{conf/focs/KupfermanV05}. However, it is based
on a satisfiability-preserving reduction only rather than an equivalence-preserving one. Thus, it avoids the 
``{\em delicate combinatorial constructions}'' which are responsible for the lack of a ``{\em clear relationship 
between the structure of automaton and the candidate formula}'' in Emerson's view~\cite{Emerson90a}, but to 
avoid these constructions it gives up any will to preserve such a clear relationship.

\subsubsection*{Other approaches.}
Apart from these automata-theoretic approaches, a few different ones have been presented as well. For instance, there
is Reynolds' proof system for validity \cite{Reynolds00}. Its completeness proof is rather intricate and relies on 
the presence of a rule which violates the subformula property. In essence, this rule quantifies over an arbitrary
set of atomic propositions. Thus, while it is possible to check a given tree for whether ot not it is a proof for 
a given \ctlstar formula, it is not clear how this system could be used in order to find proofs for given \ctlstar
formulas.

Reynolds has also presented a tableaux system for \ctlstar \cite{conf/fm/Reynolds09,Rey:startab} which shares some 
commonalities with the automata-theoretic approach by Emerson and others as well as the game-based approach presented 
here. However, one of the main differences between tableaux on one side and automata and games on the other has a major
effect in the case of such a complex branching-time logic: while automata- and game-based approaches typically separate
the characterisation (e.g.\ tree automaton or parity game) from the algorithm (e.g.\ emptiness test or check for 
winning strategy), tableaux are often designed monolithically, i.e.\ with the characterisation and algorithm as one
procedure. As a result, Reynolds' tableaux rely on some repetition test which, done in a na\"{\i}ve way, is hopelessly
inefficient in practice. On the other hand, it is not immediately clear how a more clever and thus more efficient 
repetition check could be designed for these tableaux, and we predict that it would result in the introduction of
B\"uchi determinisation. 

A method that is traditionally used for predicate logics is resolution. It has also been used to devise decision 
procedures for temporal logics, starting with the linear-time temporal logic \ltl \cite{ijcai91*99}, followed by 
the simple branching-time temporal logic \ctl \cite{journals/jetai/BolotovF99,journals/aicom/ZhangHD10}. Finally, 
there is also a resolution-based approach to \ctlstar which combines linear-time temporal logic resolution with 
additional techniques to handle path quantification \cite{conf/mfcs/BolotovDF99}. However, all resolution methods 
rely on the fact that the input formula is transformed into a specialised normal form. The known transformations
are not trivial, and they only produce equi-satisfiable formulas. Thus, such methods also do not preserve a close
connection between the models of the input formula and its subformulas.

\subsubsection*{The game-based framework.}
In this paper we present a game-based characterisation of \ctlstar satisfiability. In such games, two players
play against each other with competing objectives: player 0 should show that the input formula is satisfiable 
whereas player 1 should show that it is not. 
Formally, the \ctlstar satisfiability game for some input formula
is a graph of doubly exponential size on which the two players move a token along its edges. There is a 
winning condition in the form of a formal language of infinite plays which describes the plays that are won by
player 0. This formal language turns out to be $\omega$-regular, and it is known that arbitrary games with such
a winning condition can be solved by a reduction to parity games. 
This yields an asymptotically optimal decision procedure. Still, the games only use subformulas of the input formula,
and automata are only needed in the actual decision procedure but not in the definition of the satisfiability games as
such. Thus, it moves the ``{\em delicate combinatorial constructions}'' to a place where they do not destroy a 
``{\em clear relationship between the [\ldots] input formula}'' and the parity game anymore. This is very useful 
in the setting of a user interacting with a satisfiability checker or theorem prover for \ctlstar, when they may want 
to be given a reason for why a formula is not satisfiable for instance. 

The delicate combinatorial procedures, i.e.\ B\"uchi determinisation and complementation is kept at minimum by
analysing carefully where it is needed. We decompose the winning condition such that the transformation of a nondeterministic
B\"uchi into a deterministic parity automaton \cite{FOCS88*319,conf/lics/Piterman06,conf/icalp/KahlerW08,conf/fossacs/Schewe09} 
is only needed for some part. The other is handled directly using manually defined deterministic automata.

We also consider two important fragments of \ctlstar, namely the well-known \ctl and the lesser known \ctlplus. 
The former has less expressive power and is computationally simpler: \ctl satisfiability is complete for 
deterministic singly exponential time only~\cite{EmersonHalpern85}. The latter already carries the full complexity 
of \ctlstar despite sharing its expressive power with the weaker \ctl~\cite{EmersonHalpern85}: \ctlplus satisfiability 
is also complete for doubly exponential time~\cite{JL-icalp03}. The simplicity of \ctl when compared to \ctlstar 
also shows through in this game-based approach. The rules can be simplified a lot when only applied to \ctl formulas,
resulting in an exponential time procedure only. Even more so, the simplification gets rid of the need for 
automata determinisation procedures at all. Again, it is possible to construct a very small and deterministic 
B\"uchi automaton directly that can be used to check the winning conditions when simplified to \ctl formulas. 

The computational complexity of \ctlplus suggests that no major simplifications in comparison to \ctlstar are
possible. Still, an analysis of the combinatorics imposed by \ctlplus formulas on the games shows that for such
formulas it suffices to use determinisation for co-B\"uchi automata~\cite{Miyano:1984:AFA} instead of that for B\"uchi 
automata. This yields asymptotically smaller automata, is much easier to implement and also results in B\"uchi games 
rather than general parity games.

\subsubsection*{Advantages of the game-based approach.}
The game-theoretic framework achieves the following advantages.
\begin{itemize}[leftmargin=1em]
\item[--] The framework uniformally treats the standard branching-time logics from the relatively simple \ctl to the
      relatively complex \ctlstar.
\item[--] It yields \emph{complexity-theoretic optimal} results, i.e.\ satisfiability checking using this framework
      is possible in exponential time for \ctl and doubly exponential time for \ctlstar and \ctlplus.
\item[--] Like the automata-theoretic approaches, it separates the characterisation of satisfiability through a syntactic
      object (a parity game) from the test for satisfiability (the problem of solving the game). Thus, advances in the 
      area of parity game solving carry over to satisfiability checking.
\item[--] Like the tableaux-based approach, it keeps a very close relationship between the input formula and the structure
      of the parity game thus enabling feedback from a (counter-)model for applications in specification and verification.
\item[--] Satisfiability checking procedures based on this framework are implemented in the\break \textsc{MLSolver} platform%
      ~\cite{FriedmannL:MLSolver} which uses the high-performance parity game solver \textsc{PGSolver}~\cite{fl-atva09} as 
      its algorithmic backbone --- see the corresponding remark about the separation between characterisation and algorithm 
      above.
\end{itemize}

\subsubsection*{Organisation.} The rest of the paper is organised as follows. \refsection{sec:ctlstar} recalls \ctlstar.
\refsection{sec:tableaux} presents the satisfiability games. \refsection{sec:correctness} gives the formal
soundness and completeness proofs for the presented system. \refsection{sec:decproc} describes the
decision procedure, i.e.\ the reduction to parity games. \refsection{sec:fragments} presents the simplifications
one can employ in both the games and the reduction when dealing with formulas of \ctl, respectively \ctlplus. 
\refsection{sec:compare} compares the games presented here with other decision procedures for branching-time logics,
in particular with respect to technical similarities, pragmatic aspects, results that follow from them, etc. 
\refsection{sec:further} concludes with some remarks on possible further work into this direction.

\section{The Full Branching Time Logic}
\label{sec:ctlstar}

Let $\Prop$ be a countably infinite set of propositional constants. A transition system is a tuple
$\Transsys = (\States,s^*,\to,\lambda)$ with $(\States,\to)$ being a directed graph, $s^* \in \States$ being
a designated starting state and
$\lambda: \States \to 2^{\Prop}$ is a labeling function. We assume transition systems to be total, i.e.\
every state has at least one successor.
A \emph{path} $\pi$ in $\Transsys$ is an infinite sequence of states $s_0,s_1,\ldots$ s.t.\ $s_i \to s_{i+1}$
for all $i$. With $\pi^k$ we denote the suffix of $\pi$ starting with state $s_k$, and $\pi(k)$ denotes $s_k$
in this case. 

Branching-time temporal formulas in negation normal form\footnote{Alternatively, we could have admitted negations 
everywhere---not only in front of a proposition. However, for any formula of one form there is an equivalent 
and linearly sized formula of the other form: just apply De Morgan's laws to the binary propositional connectors, 
e.g.\ $\neg (\varphi_1 \wedge \varphi_2) \equiv (\neg \varphi_1) \vee (\neg \varphi_2)$, fixpoint duality to fixpoints, 
e.g.\ $\neg(\varphi_1 \until \varphi_2) \equiv (\neg \varphi_1) \release (\neg \varphi_2)$ and the property 
$\neg \nxt \varphi \equiv \nxt \neg \varphi$.
} are given by the following grammar.
\begin{displaymath}
  \varphi \quad ::= \quad \true \mid \false \mid p \mid \neg p \mid \varphi \lor \varphi \mid \varphi \land \varphi \mid  \nxt\varphi \mid
    \varphi \until \varphi  \mid \varphi \release \varphi \mid \expath\varphi \mid \allpath\varphi
\end{displaymath}
where $p \in \Prop$. Formulas of the form $\true$, $\false$, $p$ or $\neg p$ are called \emph{literals}. 

Boolean constructs other than conjunction and disjunction, like $\to$ for instance, are derived as usual. Temporal
operators other than the ones given here are also defined as usual: $\finally \varphi := \true \until \varphi$ and
$\generally \varphi := \false \release \varphi$.

The set of \emph{subformulas} of a formula $\varphi$, written as $\subf{\varphi}$, is defined as usual, in particular the set contains $\varphi$.
In contrast, a formula $\psi$ is a \emph{proper subformula} of $\varphi$ if both
are different and $\psi$ is a subformula of $\varphi$.
The \emph{Fischer-Ladner} closure of $\varphi$ is the least set $\fl{\varphi}$ that %
is closed under taking subformulas, and contains, for each $\psi_1 \until \psi_2$ or $\psi_1 \release \psi_2$, also the formulas
$\nxt(\psi_1 \until \psi_2)$ respectively $\nxt(\psi_1 \release \psi_2)$. Note that $|\fl{\varphi}|$ is at most twice
the number of subformulas of~$\varphi$. Let $\flr{\varphi}$ consist of all formulas in $\fl{\varphi}$ that
are of the form $\psi_1 \release \psi_2$ or $\nxt(\psi_1 \release \psi_2)$.
The notation is extended to formula sets in the usual way. 
The size $|\varphi|$ of a formula $\varphi$ is number of its subformulas.
Formulas are interpreted over paths $\pi$ of a transition systems 
$\Transsys = (\States,s^*,\to,\lambda)$. We have $\Transsys, \pi \models \true$ but not $\Transsys, \pi \models \false$
for any $\Transsys$ and $\pi$; and the semantics of the other constructs is given as follows.
\[\begin{array}{@{\Transsys, \pi \models\;}l@{\qquad\mbox{iff}\qquad}l}
  p                  & p \in \lambda(\pi(0)) \\
  \neg p             & p \notin \lambda(\pi(0)) \\
  \varphi \lor \psi  & \Transsys, \pi \models \varphi \mbox{ or } \Transsys, \pi \models \psi \\
  \varphi \land \psi & \Transsys, \pi \models \varphi \mbox{ and } \Transsys, \pi \models \psi \\
  \nxt\varphi &\Transsys, \pi^1 \models \varphi \\
  \varphi \until \psi & \exists k \in \Nat, \Transsys, \pi^k \models \psi  \mbox{ and } \forall j<k: \Transsys, \pi^j \models \varphi \\
  \varphi \release \psi & \forall k \in \Nat, \Transsys, \pi^k \models \psi  \mbox{ or } \exists j<k: \Transsys, \pi^j \models \varphi \\
  \expath\varphi & \exists \pi', \mbox{ s.t. } \pi'(0) = \pi(0) \mbox{ and } \Transsys, \pi' \models \varphi \\
  \allpath\varphi & \forall \pi', \mbox{ if } \pi'(0) = \pi(0) \mbox{ then } \Transsys, \pi' \models \varphi \\
\end{array}\]

Two formulas $\varphi$ and $\psi$ are equivalent, written $\varphi \equiv \psi$, if for all paths $\pi$
of all transition systems $\Transsys$: $\Transsys, \pi \models \varphi$ iff $\Transsys, \pi \models \psi$.

A formula $\varphi$ is called a \emph{state formula} if for all $\Transsys, \pi,\pi'$ with 
$\pi(0) = \pi'(0)$ we have $\Transsys, \pi \models \varphi$ iff $\Transsys, \pi' \models \varphi$. Hence,
satisfaction of a state formula in a path only depends on the first state of the path. Note that $\varphi$
is a state formula iff $\varphi \equiv\expath\varphi$. For state formulas we also write 
$\Transsys, s \models \varphi$ for $s \in \States$. \ctlstar is the set of all branching-time formulas which
are state formulas. A \ctlstar formula $\varphi$ is \emph{satisfiable} if there is a transition system 
$\Transsys$ with an initial state $s^*$ s.t.\ $\Transsys, s^* \models \varphi$. 

Finally, we introduce the two most well-known fragments of \ctlstar, namely \ctl and \ctlplus. In \ctl, no
Boolean combinations or nestings of temporal operators are allowed; they have to be immediately preceded by
a path quantifier. The syntax is given by the following grammar starting with $\varphi$.
\begin{align}
  \label{eq:ctl grammar:varphi}
  \varphi \quad &{::=} \quad \true \mid \false \mid p \mid  \neg p  \mid  \varphi \lor \varphi  \mid \varphi \land \varphi  \mid  \expath \psi \mid \allpath \psi 
\\
  \label{eq:ctl grammar:psi}
  \psi    \quad &{::=} \quad   \nxt \varphi     \mid  \varphi \until \varphi   \mid  \varphi \release \varphi
\end{align}
Formulas generated by $\varphi$ are state formulas.

The logic \ctlplus lifts the syntactic restriction slightly: it allows Boolean combinations of path operators inside
a path quantifier, but no nestings thereof. It is defined by the following grammar starting with $\varphi$.
\begin{align}
  \label{eq:ctlplus grammar:varphi}
  \varphi \quad &{::=} \quad \true \mid \false \mid p \mid  \neg p  \mid  \varphi \lor \varphi  \mid \varphi \land \varphi  \mid  \expath \psi \mid \allpath \psi 
\\
  \label{eq:ctlplus grammar:psi}
  \psi    \quad &{::=} \quad \varphi     \mid  \psi \lor \psi  \mid \psi \land \psi       \mid  \nxt \varphi     \mid  \varphi \until \varphi
\mid  \varphi \release \varphi
\end{align}
It should be clear that \ctl is a fragment of \ctlplus which is, in turn, a fragment of \ctlstar. However, only
the latter inclusion is proper w.r.t.\ expressivity as stated in the following.

\begin{proposition}[\cite{Emerson:1986:SNN}]
\ctlstar is strictly more expressive than \ctlplus, and \ctlplus is as expressive as \ctl.
\end{proposition}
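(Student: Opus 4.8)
The plan is to handle the two claims separately, since they are of quite different character. The equivalence $\ctlplus \equiv \ctl$ is a translation result, whereas the strictness of $\ctlstar$ over $\ctlplus$ is an inexpressibility result requiring a semantic separation. For the equivalence, one inclusion is immediate: every \ctl formula is syntactically already a \ctlplus formula, the \ctl grammar being obtained from the \ctlplus grammar by dropping the productions $\psi ::= \psi \lor \psi \mid \psi \land \psi \mid \varphi$. So \ctl is at most as expressive as \ctlplus. For the converse I would give an effective translation of every \ctlplus formula into an equivalent \ctl formula. Since \ctlplus differs from \ctl only by permitting positive Boolean combinations of path formulas directly under a single path quantifier (with no nesting), it suffices to show how to eliminate a quantifier applied to such a combination, i.e.\ to rewrite $\expath\chi$ and $\allpath\chi$ where $\chi$ is a positive Boolean combination of formulas of the shapes $\nxt\varphi$, $\varphi_1 \until \varphi_2$, $\varphi_1 \release \varphi_2$, the $\varphi_i$ being state formulas already \ctl-expressible by the outer structural induction. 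The key idea is to put $\chi$ into disjunctive normal form under $\expath$ (conjunctive under $\allpath$, by duality) and then, for a conjunction of until/release requirements imposed on one and the same path, to case-split on the relative order in which the various eventualities are discharged. Unfolding each operator one step via $\varphi_1\until\varphi_2 \equiv \varphi_2 \lor (\varphi_1 \land \nxt(\varphi_1\until\varphi_2))$ and the dual for $\release$, and observing that each continuation again sits under a single $\expath$, drives a recursion on the multiset of pending path formulas; as there are only finitely many requirements and each step removes or simplifies one, the recursion terminates, yielding a (possibly exponentially larger) \ctl formula whose correctness is checked by induction, verifying that the disjunction over orderings captures exactly the existence of one path meeting all conjuncts at once.

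For the strictness $\ctlstar \not\le \ctlplus$, the equivalence just established reduces the task to separating \ctlstar from \ctl. I would exhibit a single \ctlstar state formula provably inequivalent to every \ctl formula; a standard witness is the fairness-type property $\allpath\finally\generally p$, asserting that along every path $p$ eventually holds forever. To show it is not \ctl-definable I would use the usual bounded-distinguishability argument: for each $n$ construct two pointed transition systems $\Transsys_n$ and $\Transsys_n'$ such that (i) $\allpath\finally\generally p$ holds in $\Transsys_n$ but fails in $\Transsys_n'$, while (ii) $\Transsys_n$ and $\Transsys_n'$ agree on all \ctl formulas whose nesting depth of temporal operators is at most $n$. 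Property (ii) is established by an Ehrenfeucht--Fra\"{\i}ss\'e / bisimulation-rank game between the two structures that Duplicator survives for $n$ rounds. Since any fixed \ctl formula has some finite temporal nesting depth $n_0$, it cannot distinguish $\Transsys_{n_0}$ from $\Transsys_{n_0}'$ and hence cannot be equivalent to the witness, which does distinguish them.

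The main obstacle is the translation direction of $\ctlplus \equiv \ctl$: eliminating a path quantifier over a conjunction of several independent eventualities along a common path forces the case split over temporal orderings, and the delicate points are setting up the recursion correctly and proving its termination — rather than the inexpressibility argument, which follows the well-trodden pattern of separating branching-time logics by a depth-indexed family of indistinguishable models. Throughout, I would also need to confirm that the state-formula side conditions are preserved, so that the output of the translation is a genuine \ctl state formula and the separating witness is indeed a \ctlstar state formula.
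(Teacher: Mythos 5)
The paper does not prove this proposition at all --- it is imported verbatim from Emerson and Halpern with a citation --- so there is no in-paper argument to measure yours against. What you have written is essentially a reconstruction of the classical proofs, and the overall strategy is sound: the translation of \ctlplus into \ctl by normalising the path formula under a single quantifier and case-splitting on the relative order in which the pending eventualities are discharged is exactly the Emerson--Halpern argument (with its unavoidable exponential blow-up), and the separation of \ctlstar from \ctlplus, reduced via the first part to separating \ctlstar from \ctl, by exhibiting $\allpath\finally\generally p$ together with a depth-indexed family of pointed structures that disagree on it but agree on all \ctl formulas of bounded temporal nesting, is the standard inexpressibility proof.

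One step would fail as literally written: your termination argument for the translation. One-step unfolding of $\varphi_1 \until \varphi_2$ via $\varphi_2 \lor (\varphi_1 \land \nxt(\varphi_1 \until \varphi_2))$ reproduces the very same until-formula under an \nxt, so it is not true that ``each step removes or simplifies one'' requirement; a recursion driven by unfolding alone does not terminate. The measure that actually decreases is the number of pending eventualities, and it decreases only through the ordering case-split: having guessed which eventuality is discharged first, you express the prefix up to that first fulfilment point by a single $\expath(\cdot\until\cdot)$ whose left argument collects the other until-formulas' left-hand sides and the release-formulas' safety obligations, and whose right argument restarts the construction on a strictly smaller multiset of pending eventualities. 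The $\allpath$ case should then be obtained by negating an $\expath$ instance (swapping $\until$ and $\release$ and using closure of \ctl under negation), as your ``by duality'' remark intends. With that repair, and with the routine check that the witness $\allpath\finally\generally p$ and all translated formulas are state formulas (any $\qpath\psi$ is), the argument is the standard one and is correct.
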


Nevertheless, there are families of properties which can be expressed in \ctlplus using a family of formulas that 
is linearly growing in size, whereas every family of \ctl formulas expressing these properties must have exponential 
growth. This is called an exponential succinctness gap between the two logics.

\begin{proposition}[\cite{Wilke99,LISC01*197,ipl-ctlplus08}]
There is an exponential succinctness gap between \ctlplus and \ctl.
\end{proposition}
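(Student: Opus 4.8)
The plan is to exhibit an explicit family of \ctlplus formulas whose sizes grow linearly while the smallest equivalent \ctl formulas must grow exponentially. A natural witnessing family is
\[
  \varphi_n \; := \; \expath\bigl(\finally p_1 \land \cdots \land \finally p_n\bigr),
\]
which asserts the existence of a \emph{single} path along which each of the propositions $p_1,\dots,p_n$ holds at some (possibly different) point. Each $\varphi_n$ has size $O(n)$ and lies syntactically in \ctlplus, since it is a Boolean combination of unnested path operators (each $\finally p_i$ abbreviating $\true \until p_i$) under a single path quantifier. Thus the succinct side of the gap is immediate, and all the work lies in the matching lower bound for \ctl.

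For the lower bound I would argue that any \ctl formula $\theta$ with $\theta \equiv \varphi_n$ satisfies $|\theta| \ge 2^{\Omega(n)}$. The core difficulty is that, unlike \ctlplus, \ctl cannot conjoin several eventualities \emph{along the same path}: every path quantifier in \ctl guards exactly one temporal operator, so to speak of one path meeting all $n$ goals a \ctl formula is forced to commit, at each branching point, to the relative order in which the remaining goals will be fulfilled. Intuitively this forces the formula to encode information equivalent to the subset of still-unsatisfied propositions, and there are $2^n$ such subsets.

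To make this rigorous I would use a formula-size game in the style of Adler and Immerman. One designs two families of tree models: a positive family $\mathcal{A}_n$ of trees satisfying $\varphi_n$ and a negative family $\mathcal{B}_n$ of trees violating it --- for instance trees in which, on every path, at least one $p_i$ is absent --- arranged so that separating a positive from a negative tree requires tracking which propositions have already been met. The game characterises the minimal \ctl formula size as (essentially) the number of leaves in a winning game tree for the player who must separate $\mathcal{A}_n$ from $\mathcal{B}_n$, and one then shows combinatorially that this player cannot do better than branching over the $2^n$ possible ``sets of goals seen so far''.

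The main obstacle is precisely this combinatorial lower bound on the game: one must verify that no clever \ctl encoding collapses the exponentially many cases, which amounts to a careful adversary argument showing that whenever the separating player makes fewer than $2^{\Omega(n)}$ choices, the opponent can produce both a positive and a negative model consistent with all those choices. An alternative, and in some respects cleaner, route replaces the game by alternating tree automata: translate an arbitrary \ctl formula into an alternating tree automaton of size linear in $|\theta|$, and then prove that every automaton recognising the tree models of $\varphi_n$ needs exponentially many states by a fooling-set argument over the $2^n$ goal-subsets. Either way the crux is the same --- quantifying the inability of \ctl to remember an arbitrary subset of $\{p_1,\dots,p_n\}$ without paying an exponential price in formula size --- and this is the step I expect to consume the bulk of the proof.
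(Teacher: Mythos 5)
This proposition is stated in the paper without proof: it is imported from the cited literature (Wilke; Adler--Immerman; the later IPL note), so the only meaningful comparison is against those sources. Your witness family $\expath(\finally p_1 \land \cdots \land \finally p_n)$ is exactly Wilke's, and your main plan --- an Adler--Immerman formula-size game separating trees that have a single path meeting all $n$ goals from trees in which every path misses some goal --- is precisely the argument of the Adler--Immerman paper the proposition cites (which in fact yields the stronger $n!$ lower bound). So the approach is the right one.

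The genuine gap is that the proposal stops exactly where the proof begins: the adversary argument showing that the separating player cannot avoid branching over the $2^n$ goal-subsets is announced but not carried out, and this is the entire content of the theorem --- the upper (succinctness) side is a one-liner, and the intuition that ``\ctl must commit to an order of fulfilment'' is the claim to be proved, not a step towards proving it. You would need to actually define the two model families with the right combinatorial structure (Adler--Immerman use trees indexed by permutations/subsets of $\{1,\dots,n\}$) and prove the invariant that after $k$ rounds the remaining position still contains an exponential-sized ``shattered'' set of model pairs. I would also warn you off the proposed ``cleaner'' alternative: a fooling-set argument bounds the size of \emph{nondeterministic} automata, not alternating ones, and since \ctl translates into alternating tree automata with linearly many states, the state lower bound you would need is for alternating automata --- for which no simple fooling-set technique exists. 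As written, that alternative route would fail; the game-based route is the one that works, but it still has to be executed.
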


Such a succinctness gap can cause different complexities of decision procedures for the involved logics despite
equal expressive power. This is true in this case. 

\begin{proposition}[\cite{EmersonHalpern85,Fischer79}] 
Satisfiability checking for \ctl is \EXPTIME-complete.
\end{proposition}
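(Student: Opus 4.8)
The plan is to prove the two matching bounds separately: membership in \EXPTIME\ and \EXPTIME-hardness. For the upper bound I would follow the standard tableau-style construction on the Fischer--Ladner closure. First, compute $\fl{\varphi}$, which by the remark above has size at most $2|\varphi|$, hence linear in the input. The candidate states of a model are the \emph{Hintikka sets}, i.e.\ subsets $H \subseteq \fl{\varphi}$ that are locally consistent: they respect the Boolean connectives, never contain both $p$ and $\neg p$, and satisfy the fixpoint-unfolding equivalences such as $\psi_1 \until \psi_2 \equiv \psi_2 \lor (\psi_1 \land \nxt(\psi_1 \until \psi_2))$ and its dual for $\release$. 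There are at most $2^{O(|\varphi|)}$ such sets, and one builds a graph on them whose edges respect the $\nxt$-requirements while the path quantifiers $\expath$ and $\allpath$ constrain which successor sets must or may appear.

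The crucial and most delicate point --- and precisely the one the game framework of this paper is designed to isolate --- is enforcing that every \emph{eventuality}, i.e.\ each $\until$-subformula demanded in a state, is actually discharged after finitely many steps; this is exactly the well-foundedness of least fixpoint unfoldings. I would handle it by an iterative pruning (least-fixpoint elimination) procedure on the graph: repeatedly delete every state that cannot realise all its successor obligations consistently, and every state from which some required eventuality cannot be fulfilled along a suitable reachable region. Each round removes at least one state, so the elimination terminates and runs in time polynomial in the number of states, hence in $2^{O(|\varphi|)}$, yielding the \EXPTIME\ bound. Correctness then splits into soundness (from a surviving state containing $\varphi$ one unravels a genuine model) and completeness (every model induces surviving Hintikka sets), both by induction on the structure of $\until$- and $\release$-formulas.

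For the lower bound I would reduce from a canonical \EXPTIME-complete problem, namely acceptance by an alternating Turing machine running in polynomial space, following the encoding idea underlying the Fischer--Ladner argument for \pdl. Given such a machine $M$ and an input $w$ of length $n$, every configuration uses only polynomially many tape cells and can therefore be described by $O(p(n))$ propositions recording the tape contents, the head position and the control state. I would write a \ctl\ formula $\varphi_{M,w}$ of size polynomial in $n$ asserting the existence of a well-formed accepting computation tree: $\nxt$ steps from a configuration to its successor, $\allpath$ captures the universal (conjunctive) branching of $M$ and $\expath$ the existential branching, and a global invariant enforced via $\allpath\generally(\cdots)$ guarantees that successor configurations respect $M$'s transition relation and that accepting configurations are reached. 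Then $M$ accepts $w$ iff $\varphi_{M,w}$ is satisfiable.

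The main obstacle differs by direction. For the upper bound the hard part will be proving correctness of the eventuality-pruning step, i.e.\ that the least-fixpoint elimination exactly characterises satisfiability; this is the technically subtle point, and it is the reason the present paper factors it out into a game-theoretic winning condition. For the lower bound the real work lies in the faithful, polynomially-sized encoding: the alternation and the local transition consistency of $M$ must be expressed using only the restricted \ctl\ combinators --- with no Boolean nesting of temporal operators available --- so that the branching of the model mirrors the machine's alternation exactly.
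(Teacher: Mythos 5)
Your sketch is correct in outline, but note that the paper does not prove this proposition at all: it is quoted from the literature, and your two halves reproduce exactly the cited arguments --- the upper bound is Emerson and Halpern's filtration (a graph of Hintikka sets over $\fl{\varphi}$ with iterative deletion of states whose eventualities cannot be fulfilled, which the paper itself summarises in its comparison section), and the lower bound is the Fischer--Ladner-style encoding of alternating polynomial-space Turing machines that establishes \EXPTIME-hardness already for the relevant fragment of \pdl. Where the paper does contribute its own argument is only for the upper bound, and by a genuinely different route: Section 5.2 replaces the monolithic pruning procedure by a satisfiability game on singly exponentially many configurations whose winning condition (``no $\until$-thread'') is checked by a small, explicitly constructed deterministic B\"uchi automaton, reducing the problem to solving a B\"uchi game of size $2^{\mathcal{O}(|\varphi|)}$. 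The trade-off is precisely the one you identify as the delicate point of your approach: the correctness of eventuality elimination, which in the filtration proof is entangled with the algorithm, is factored in the paper into a declarative winning condition (soundness and completeness of the game) plus an off-the-shelf game-solving step, at the cost of introducing the automaton-annotation machinery. Both routes yield the same $2^{\mathcal{O}(|\varphi|)}$ bound, and your hardness direction is unaffected by the choice. The only place your sketch is thinner than the cited proofs is the fulfilment check for eventualities: for $\expath(\varphi_1\until\varphi_2)$ one needs reachability of a fulfilling state, but for $\allpath(\varphi_1\until\varphi_2)$ one needs a well-founded (finite-depth) fulfilling subdag, and ``delete every state from which some required eventuality cannot be fulfilled'' must be instantiated differently for the two cases; this is standard but is exactly the step that has to be argued carefully.
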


The exponential succinctness gap causes on exponentially more difficult satisfiability problem which is shared with
that of the more expressive \ctlstar.

\begin{proposition}[\cite{Emerson:2000:CTA,STOC85*240,JL-icalp03}]
Satisfiability checking for \ctlstar and for \ctlplus are both 2\EXPTIME-complete.
\end{proposition}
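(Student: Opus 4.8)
The plan is to establish matching upper and lower bounds for both logics, exploiting the fact that \ctlplus is a syntactic fragment of \ctlstar to transfer one direction in each case. I would first treat the upper bound for \ctlstar via the automata-theoretic route sketched in the introduction. Given a formula $\varphi$, one builds a tree automaton whose states are Hintikka-like sets of subformula sets drawn from the Fischer-Ladner closure $\fl{\varphi}$; this automaton has doubly exponentially many states and accepts a superset of the tree models of $\varphi$. To rule out the unfoldings in which a least-fixpoint operator $\until$ is never discharged, one intersects it with a second tree automaton enforcing that along every path a certain $\omega$-regular word condition is respected. Turning that path condition into a deterministic tree automaton requires B\"uchi determinisation, which is the source of the second exponential blow-up. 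Since emptiness of the resulting tree automaton is decidable in time polynomial in its size, the whole procedure runs in doubly exponential time, giving membership in 2\EXPTIME for \ctlstar. Because \ctlplus is a fragment of \ctlstar, its satisfiability problem inherits the same 2\EXPTIME upper bound directly.

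For the lower bounds I would again use that \ctlplus is a fragment of \ctlstar: it suffices to prove 2\EXPTIME-hardness for \ctlplus, as this immediately transfers to the more expressive \ctlstar. The reduction would start from a problem known to be 2\EXPTIME-complete, most naturally the word problem for alternating Turing machines working in exponential space, since alternating exponential space coincides with doubly exponential time. The task is to encode an accepting computation tree of such a machine on a given input by a \ctlplus formula whose size is polynomial in the machine and the input. Configurations have exponential length, so a single configuration is represented along a path by exponentially many consecutive states, each carrying a tape symbol together with an address given in binary by polynomially many propositions; branching of the transition system models the universal and existential branching of the machine, matched by the \allpath and \expath quantifiers.

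The main obstacle, and the crux of the whole proposition, is this lower bound. The difficulty is to express, with only polynomially many symbols, the local consistency conditions linking successive configurations --- in particular that the content of a cell at a given exponential-length address is updated correctly with respect to the three cells around the corresponding address in the predecessor configuration. Checking such a constraint naively would require naming each of the exponentially many addresses, which a \ctl formula cannot avoid and which accounts for the \EXPTIME-completeness of \ctl; the extra power needed is exactly the ability of \ctlplus to place Boolean combinations of path operators inside a single quantifier. This is where the exponential succinctness gap between \ctlplus and \ctl is put to work: a formula of the form $\allpath(\ldots \until \ldots \land \ldots \until \ldots)$ can compare two addresses bit by bit across an entire path while remaining polynomially sized. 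Verifying that this encoding is faithful --- that the encoded trees are exactly the accepting computations --- is where the bulk of the technical effort lies, and it is carried out in detail in \cite{JL-icalp03}.
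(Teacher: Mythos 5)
Your proposal is correct and follows essentially the same route as the sources this proposition is cited from: the 2\EXPTIME upper bound via the Emerson--Jutla tree-automata construction with B\"uchi determinisation \cite{Emerson:2000:CTA}, inherited by \ctlplus as a syntactic fragment, and the matching lower bound via the Johannsen--Lange reduction from alternating exponentially space-bounded Turing machines to \ctlplus \cite{JL-icalp03}, correctly pinpointing the Boolean combinations of path formulas under a single quantifier as the mechanism for polynomial-size address comparison. The paper offers no proof of its own here---it only cites these results---and the sole (harmless) deviation is that you obtain \ctlstar-hardness by inclusion from \ctlplus rather than from the earlier direct argument of \cite{STOC85*240}.
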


\newcommand{\badtableauexample}{{
{
                                        \AxiomC{\tikzlabel{LoopLeft0}{$\allpath({\generally p},{\finally\generally p}), \allpath($\tikzlabel{bt15}{$\finally\generally p$}$)$}}
                                      \LeftLabel{\XruleNoE}
                                      \UnaryInfC{{$\allpath({\nxt \generally p},{\nxt \finally\generally p}), \allpath($\tikzlabel{bt14}{$\nxt \finally\generally p$}$)$}}
                                    \UnaryInfC{$ \allpath({\nxt \generally p},{\nxt \finally\generally p}), \allpath( p,$\tikzlabel{bt13}{$\nxt \finally\generally p$}$)$}
                                \UnaryInfC{$\allpath(\generally p,$ \tikzlabel{bt12}{$\nxt \finally\generally p$}$)$}
                              \doubleLine
                              \UnaryInfC{\tikzlabel{LoopLeft1}{$\allpath({\generally p},{\finally\generally p}), \allpath($\tikzlabel{bt11}{$\finally\generally p$}$)$}}
                            \LeftLabel{\XruleNoE}
                          \UnaryInfC{{$ \allpath({\nxt \generally p}, {\nxt \finally\generally p}), \allpath($\tikzlabel{bt10}{$\nxt \finally\generally p$}$), \neg p$}}
                        \UnaryInfC{$ \allpath({\nxt \generally p}, {\nxt \finally\generally p}), \allpath(p,$\tikzlabel{bt9}{$\nxt \finally\generally p$}$), \neg p$}
                    \doubleLine
                    \UnaryInfC{$ \allpath({\generally p},$ \tikzlabel{bt8}{$\nxt \finally\generally p$}$), \expath(\neg p)$}
                  \doubleLine
                  \UnaryInfC{$ \allpath({\generally p},$ \tikzlabel{bt7}{$\finally\generally p$}$), \expath(\finally\neg p)$}
                    \AxiomC{\tikzlabel{LoopRight0}{${\allpath({\generally p}, {\nxt \finally\generally p})}, {\expath({\expath\finally\neg p}, {\nxt \generally\expath\finally\neg p})}$}}
                  \doubleLine
                  \UnaryInfC{$ {\allpath({\generally p}, {\finally\generally p})}, {\expath({\generally\expath\finally\neg p})}$}
                \LeftLabel{\XruleWithE}
                \BinaryInfC{{$p, \allpath({\nxt \generally p},$ \tikzlabel{bt6}{$\nxt \finally\generally p$}$), \expath(\nxt \finally\neg p), {\expath({\nxt \generally\expath\finally\neg p})}$}}
            \doubleLine
            \UnaryInfC{$ {\allpath( p, \nxt \finally\generally p)}, \allpath({\nxt \generally p},$ \tikzlabel{bt5}{$\nxt \finally\generally p$}$), \expath(\finally\neg p), {\expath({\nxt \generally\expath\finally\neg p})}$}
      \doubleLine
      \UnaryInfC{\tikzlabel{LoopRight1}{$\allpath({\generally p},$ \tikzlabel{bt4}{$\nxt \finally\generally p$}$), {\expath(\expath\finally\neg p, {\nxt \generally\expath\finally\neg p})}$}}
    \doubleLine
    \UnaryInfC{$\allpath($\tikzlabel{bt3}{$\finally\generally p$}$), {\expath({\generally\expath\finally\neg p})}$}
  \doubleLine
  \UnaryInfC{$\expath($\tikzlabel{bt2}{$\allpath\finally\generally p$}$, \expath\generally\expath\finally\neg p)$}
  \UnaryInfC{$\expath($\tikzlabel{bt1}{${\allpath\finally\generally p} \wedge {\expath\generally\expath\finally\neg p})$}}
\DisplayProof

\tikzoverlay{
    \draw[draw, ->, thick, rounded corners=20pt] (LoopLeft0.west) to[myncbar,angle=180,arm=2cm] (LoopLeft1.west);%
	\draw[draw, ->, thick, rounded corners=20pt] (LoopRight0.east) to[myncbar,angle=0,arm=1cm] (LoopRight1.east);
	\tikzedge{bt1}{bt2}
	\tikzedge{bt2}{bt3}
	\tikzedge{bt3}{bt4}
	\tikzedge{bt4}{bt5}
	\tikzedge{bt5}{bt6}
	\tikzedge{bt6}{bt7}
	\tikzedge{bt7}{bt8}
	\tikzedge{bt8}{bt9}
	\tikzedge{bt9}{bt10}
	\tikzedge{bt10}{bt11}
	\tikzedge{bt11}{bt12}
	\tikzedge{bt12}{bt13}
	\tikzedge{bt13}{bt14}
	\tikzedge{bt14}{bt15}
	\tikzloopedge{bt15}{bt11}
}
}

}}

\newcommand{\goodtableauexample}{{
{
                                          \AxiomC{\tikzlabel{LoopLeft0}{$\allpath($\tikzlabel{gt16a}{$\generally p$}, $\nxt \finally\generally p {)}$}}
                                        \UnaryInfC{$\allpath($\tikzlabel{gt15a}{$\generally p$}, $\finally\generally p {)}$}
                                      \LeftLabel{\XruleNoE}
                                      \UnaryInfC{$p, \allpath($\tikzlabel{gt14a}{$\nxt \generally p$}, $\nxt \finally\generally p {)}$}
                                    \UnaryInfC{$ \allpath( p, \nxt \finally\generally p), \allpath($\tikzlabel{gt13a}{$\nxt \generally p$}, $\nxt \finally\generally p {)}$}
                                \UnaryInfC{\tikzlabel{LoopLeft1}{$\allpath($\tikzlabel{gt12a}{$\generally p$}, $\nxt \finally\generally p {)}$}}
                              \UnaryInfC{$\allpath(\finally\generally p), \allpath($\tikzlabel{gt11a}{$\generally p$}, $\finally\generally p {)}$}
                            \LeftLabel{\XruleNoE}
                          \UnaryInfC{$\allpath(\nxt\finally\generally p), \allpath($\tikzlabel{gt10a}{$\nxt\generally p$}, $\nxt\finally\generally p {)}, \neg p$}
                        \UnaryInfC{$\allpath(p, \nxt\finally\generally p), \allpath($\tikzlabel{gt9a}{$\nxt\generally p$}, $\nxt\finally\generally p {)}, \neg p$}
                        \doubleLine
                    \UnaryInfC{$\allpath($\tikzlabel{gt8a}{$\generally p$}, $\nxt\finally\generally p {)}, \expath(\neg p)$}
                  \doubleLine
                  \UnaryInfC{$\allpath($\tikzlabel{gt7a}{$\generally p$}, $\finally\generally p {)}, \expath(\finally\neg p)$}
                    \AxiomC{\tikzlabel{LoopRight0}{$\allpath($\tikzlabel{gt8b}{$\generally p$}, $\nxt\finally\generally p), {\expath(} \expath\finally\neg p$, \tikzlabel{gt8c}{$\nxt\generally\expath\finally\neg p$}$)$}}
                  \doubleLine
                  \UnaryInfC{$\allpath($\tikzlabel{gt7b}{$\generally p$}, $\finally\generally p), \expath($\tikzlabel{gt7c}{$\generally\expath\finally\neg p)$}}
                \LeftLabel{\XruleWithE}
                \BinaryInfC{$p, \allpath($\tikzlabel{gt6a}{$\nxt\generally p$}, $\nxt \finally\generally p {)}, \expath(\nxt \finally\neg p), \expath($\tikzlabel{gt6b}{$\nxt \generally\expath\finally\neg p)$}}
            \doubleLine
            \UnaryInfC{$\allpath(p, \nxt\finally\generally p), \allpath($\tikzlabel{gt5a}{$\nxt\generally p$}, $\nxt\finally\generally p {)}, \expath(\finally\neg p), \expath($\tikzlabel{gt5b}{$\nxt\generally\expath\finally\neg p)$}}
      \doubleLine
      \UnaryInfC{\tikzlabel{LoopRight1}{$\allpath($\tikzlabel{gt4a}{$\generally p$}, $\nxt\finally\generally p), \expath(\expath\finally\neg p,$ \tikzlabel{gt4b}{$\nxt\generally\expath\finally\neg p$}$)$}}
    \doubleLine
    \UnaryInfC{$\allpath($\tikzlabel{gt3a}{$\finally\generally p),$} $\expath($\tikzlabel{gt3b}{$\generally\expath\finally\neg p)$}}
  \doubleLine
  \UnaryInfC{$\expath($\tikzlabel{gt2a}{$\allpath\finally\generally p,$} \tikzlabel{gt2b}{$\expath\generally\expath\finally\neg p)$}}
  \UnaryInfC{$\expath($\tikzlabel{gt1}{$\allpath\finally\generally p \wedge \expath\generally\expath\finally\neg p)$}}
\DisplayProof

\tikzoverlay{
    \draw[draw, ->, thick, rounded corners=20pt] (LoopLeft0.west) to[myncbar,angle=180,arm=2cm] (LoopLeft1.west);%
    \draw[draw, ->, thick, rounded corners=20pt] (LoopRight0.east) to[myncbar,angle=0,arm=1cm] (LoopRight1.east);
	\tikzedge{gt1}{gt2a}
	\tikzedge{gt1}{gt2b}
	\tikzedge{gt2a}{gt3a}
	\tikzedge{gt2b}{gt3b}
	\tikzedge{gt3a}{gt4a}
	\tikzedge{gt3b}{gt4b}
	\tikzedge{gt4a}{gt5a}
	\tikzedge{gt4b}{gt5b}
	\tikzedge{gt5a}{gt6a}
	\tikzedge{gt5b}{gt6b}
	\tikzedge{gt6a}{gt7a}
	\tikzedge{gt6a}{gt7b}
	\tikzedge{gt6b}{gt7c}
	\tikzedge{gt7a}{gt8a}
	\tikzedge{gt7b}{gt8b}
	\tikzedge{gt7c}{gt8c}
	\tikzloopedgeleft{gt8b}{gt4a}
	\tikzloopedge{gt8c}{gt4b}
	\tikzedge{gt8a}{gt9a}
	\tikzedge{gt9a}{gt10a}
	\tikzedge{gt10a}{gt11a}
	\tikzedge{gt11a}{gt12a}
	\tikzedge{gt12a}{gt13a}
	\tikzedge{gt13a}{gt14a}
	\tikzedge{gt14a}{gt15a}
	\tikzedge{gt15a}{gt16a}
	\tikzloopedgeleft{gt16a}{gt12a}
}
}

}}

\section{Satisfiability Games for \ctlstar}
\label{sec:tableaux}

Here we are concerned with special 2-player zero-sum games of perfect information. They can be seen as a finite, directed
graph whose node set is partitioned into sets belonging to each player. Formally, a game is a tuple $\mathcal{G} = (V,V_0,E,v_0,L)$
where $(V,E)$ is a directed graph. We restrict our attention to total graphs, i.e.\ every node is assumed to have at least
one successor. The set $V_0 \subseteq V$ consists of all nodes owned by player 0. This naturally induces the set $V_1 := V \setminus V_0$
of all nodes owned by player 1. The node $v_0$ is the designated initial node. 

Any play starts from this initial node by placing a token there. Whenever the token is on a node that belongs to player $i$, 
it is his/her turn to push it along an edge to a successor node. In the infinite, this results in a play, and the winning 
condition $L \subseteq V^\omega$ prescribes which of these plays are won by player 0.

A \emph{strategy} for player $i$ is a function $\sigma: V^*V_i \to V$ which tells him/her how to move in any given 
situation in a play. Formally, a play $v_0,v_1,\ldots$ conforms
to strategy $\sigma$ for player $i$, if for every $j$ with $v_j \in V_i$ we have $v_{j+1} = \sigma(v_0 \ldots v_j)$.
A \emph{winning strategy} for player $i$ is a strategy such that he/she wins any play regardless of the opponent's choices. Formally, $\sigma$
is a winning strategy if for all plays $\pi = v_0,v_1,\ldots$ that conform to $\sigma$ we have $\pi \in L$.

It is easy to relax the requirements of totality. In that case we attach two
additional designated nodes $\mathsf{win}_0$ and $\mathsf{win}_1$ such that
every node originally without successors gets an edge to either of them, each of
these only has one edges to itself, and the winning condition $L$ includes all words of the form $V^* \mathsf{win}_0^\omega$ and excludes all of the form $V^*\mathsf{win}_1^\omega$.
In the following we will therefore allow ourselves to have plays ending in states without successors which can be turned into total
games using this simple transformation. In other words every dead end is lost by the player who owns the node.

\subsection{The Game Rules}
\label{subsec:rules}

We present satisfiability games for branching-time state formulas in negation normal form. Let $\vartheta$ be
a \ctlstar-formula fixed for the remainder of this section. For convenience, the games will be presented w.r.t.\ to 
this particular formula $\vartheta$.

We need the following notions: $\Sigma$ and $\Pi$ are finite (possibly empty) sets of formulas with $\Sigma$ 
being interpreted as a disjunction of formulas and $\Pi$ as a conjunction. A \emph{quantifier-bound formula block} 
is an $\expath$- or $\allpath$-labelled set of formulas, i.e.\ either a $\expath\Pi$ or a $\allpath\Sigma$. 
Any set under an $\expath$-bound resp.\ $\allpath$-bound block is assumed to be
read as a conjunction resp.\ disjunction of the formulas.
We identify 
an empty $\Sigma$ with $\false$ and an empty $\Pi$ with $\true$. We write 
$\Lambda$ for a set of literals. For a set of formulas $\Gamma$ let $\nxt\Gamma := \{ \nxt\psi \mid \psi \in \Gamma \}$.

In order to ease readability we will omit as many curly brackets as possible and often use round brackets to group formulas
into a set. For instance $\expath(\varphi \wedge \psi, \Pi)$ denotes a block that is prefixed by $\expath$ and which consists
of the union of $\Pi$ and $\{\varphi \wedge \psi\}$, implicitly assuming that this does not occur in $\Pi$ already.

A \emph{configuration (for $\vartheta$)} is a non-empty set of the form
\[
\allpath\Sigma_1, \ldots, \allpath\Sigma_n, \expath\Pi_1, \ldots, \expath\Pi_m, \Lambda
\]
where $n,m \ge 0$, and $\Sigma_1,\ldots,\Sigma_n,\Pi_1,\ldots,\Pi_m,\Lambda$ are subsets of
$\fl{\vartheta}$. The meaning of such a configuration is given by the state formula
\[
\bigwedge_{i=1}^n \allpath\big( \hspace*{-1mm}\bigvee_{\psi \in \Sigma_i}\hspace*{-1mm} \psi\big)  \wedge 
\bigwedge_{i=1}^m \expath\big(\hspace*{-1mm} \bigwedge_{\psi \in \Pi_i}\hspace*{-1mm} \psi\big)  \wedge 
\bigwedge_{\ell \in \Lambda} \ell\ .
\]
Note that a configuration only contains existentially quantified conjunctions and universally quantified disjunctions as
blocks. There are no blocks of the form $\expath\Sigma$ or $\allpath\Pi$ simply because an existential path quantifier
commutes with a disjunction, and so does a universal path quantifier with a conjunction. Thus, $\allpath\Sigma$ would
be equivalent to $\bigwedge \{ \allpath\varphi \mid \varphi \in \Sigma \}$ for instance.

A configuration $\mathcal C$ is \emph{consistent} if it does not contain $\false$ and there is no $p \in \Prop$ s.t.\ 
$p \in \mathcal C$ and $\neg p \in \mathcal C$. Note that the meaning of an inconsistent configuration is unsatisfiable, 
but the converse does not hold because consistency is only concerned with the occurrence of literals. Unsatisfiability 
can also be given be conflicting temporal operators, e.g.\ $\expath(\nxt p, \nxt \neg p)$.
 
We write $\goals{\vartheta}$ for the set of all consistent configurations for $\vartheta$. Note that this is a
finite set of at most doubly exponential size in $|\vartheta|$.

\begin{figure}[t]%
\[
  \unaryRule{\AAndrule}
    {\allpath(\varphi, \Sigma), \allpath(\psi, \Sigma)}
    {\allpath(\varphi \land \psi, \Sigma)}
\qquad
  \unaryRule{\AOrrule}
    {\allpath(\varphi, \psi, \Sigma)}
    {\allpath(\varphi \lor \psi, \Sigma)}
\qquad
  \choiceRule{\ALitrule}
    {\ell}
    {\allpath\Sigma}
    {\allpath(\ell, \Sigma)}
\]
\[
  \unaryRule{\AUrule}
    {\allpath(\psi, \varphi, \Sigma), \allpath(\psi, \nxt(\varphi \until \psi), \Sigma)}
    {\allpath(\varphi \until \psi, \Sigma)}
\qquad
  \unaryRule{\ARrule}
    {\allpath(\psi, \Sigma), \allpath(\varphi, \nxt(\varphi \release \psi), \Sigma)}
    {\allpath(\varphi \release \psi, \Sigma)}
\]
\[
  \choiceRule{\AArule}
    {\allpath\varphi}
    {\allpath\Sigma}
    {\allpath(\allpath\varphi, \Sigma)}
\qquad
  \choiceRule{\AErule}
    {\expath\varphi}
    {\allpath\Sigma}
    {\allpath(\expath\varphi, \Sigma)}
\qquad
  \unaryRule{\Ettrule}
    {}
    {\expath\true}
\]
\[
  \choiceRule{\EOrrule}
    {\expath(\varphi,\Pi)}
    {\expath(\psi,\Pi)}
    {\expath(\varphi \lor \psi,\Pi)}
\qquad
  \unaryRule{\EAndrule}
    {\expath(\varphi, \psi, \Pi)}
    {\expath(\varphi \land \psi, \Pi)}
\qquad
  \unaryRule{\ELitrule}
    {\expath\Pi,\ell}
    {\expath(\ell,\Pi)}
\]
\[
  \choiceRule{\EUrule}
    {\expath(\psi,\Pi)}
    {\expath(\varphi,\nxt(\varphi \until \psi),\Pi)}
    {\expath(\varphi \until \psi,\Pi)}
\qquad
  \unaryRule{\EErule}
    {\expath\varphi, \expath\Pi}
    {\expath(\expath\varphi,\Pi)}
\]
\[
  \choiceRule{\ERrule}
    {\expath(\psi,\varphi,\Pi)}
    {\expath(\psi,\nxt(\varphi \release \psi),\Pi)}
    {\expath(\varphi \release \psi,\Pi)}
\qquad
  \unaryRule{\EArule}
    {\allpath\varphi, \expath\Pi}
    {\expath(\allpath\varphi,\Pi)}
\]
\[
  \def\addSideFormulas{}
  \unaryRule{\XruleNoE}
    {\allpath\Sigma_1, \ldots, \allpath\Sigma_m}
    {\allpath\nxt\Sigma_1, \ldots, \allpath\nxt\Sigma_m, \Lambda}
\qquad
  \branchingRule{\XruleWithE}
    {\expath\Pi_1, \allpath\Sigma_1, \ldots, \allpath\Sigma_m}
    {\expath\Pi_n, \allpath\Sigma_1, \ldots, \allpath\Sigma_m}
    {\expath\nxt\Pi_1, \ldots, \expath\nxt\Pi_n, \allpath\nxt\Sigma_1, \ldots, \allpath\nxt\Sigma_m, \Lambda}
\]

\caption{The game rules for \ctlstar.}
\label{fig:pretableaurules}
\end{figure}

\begin{definition}\label{def:sat game} 
The satisfiability game $\mathcal{G}_{\vartheta}$ for the formula $\vartheta$ is a game $(\goals{\vartheta},V_0,E,v_0,L)$
whose nodes are all possible configurations and whose edge relation is given by the game rules in Figure~\ref{fig:pretableaurules}.
They also determine which configurations belong to player 0, i.e.\ to $V_0$, namely all but those to which rule $\XruleWithE$ 
is applied. 

Note that the rules are written such that a configuration at the bottom of the rule has, as its successors, all configurations
at the top of the rule. It is only rules $\ALitrule$, $\AArule$, $\AErule$, $\EOrrule$, $\EUrule$, and $\ERrule$ which always produce two
successors, rule $\XruleWithE$ can have an arbitrary number of successors that
is at least one.
It is understood that the formulas which are stated explicitly under the line do not occur 
in the sets~$\Lambda$ or~$\sideFormulas$. The symbol $\ell$ stands for an arbitrary literal.

The initial configuration is $v_0 = \expath\vartheta$. The winning condition $L$ will be described 
in Definition~\ref{def:G winning condition} in the next subsection.
\end{definition}

As for the representation, examples in this paper will use tailored rules for the abbreviations $\finally$ and 
$\generally$ instead of the rules \AUrule{}, \EUrule{}, \ARrule{} and \ERrule{}. Take for instance a construct of the form
$\allpath\finally\psi$. A rule for this can easily be derived by applying the rules for the unabbreviated version of this.
\[
  \AxiomC{$\allpath(\psi, \nxt\finally\psi, \Sigma), \sideFormulas, \true$}
  \LeftLabel{\ALitrule}
  \UnaryInfC{$\allpath(\psi, \true, \Sigma), \allpath(\psi, \nxt\finally\psi, \Sigma), \sideFormulas$}
  \LeftLabel{\AUrule}
  \UnaryInfC{$\allpath(\,\underbrace{\true \until \psi}_{\enspace=\enspace \finally\psi}\,, \Sigma), \sideFormulas$}
  \bottomAlignProof\DisplayProof
\]
The additional $\true$ in the literal part does not affect 
consistency of a configuration, nor the applicability of any other rule. Hence, it can be dropped. 
Therefore, we can use the following rule for this construct.
\[
  \unaryRule{\AFrule}
    {\allpath(\psi, \nxt\finally\psi,\Sigma)}
    {\allpath(\finally\psi,\Sigma)}
\]
The other abbreviated rules follow the same lines.
\[
  \unaryRule{\AGrule}
    {\allpath(\psi, \Sigma), \allpath(\nxt\generally\psi, \Sigma)}
    {\allpath(\generally \psi, \Sigma)}
  \qquad 
  \choiceRule{\EFrule}
    {\expath(\psi,\Pi)}
    {\expath(\nxt\finally\psi,\Pi)}
    {\expath(\finally \psi,\Pi)}
\]
\[
  \unaryRule{\EGrule}
    {\expath(\psi,\nxt\generally\psi,\Pi)}
    {\expath(\generally \psi,\Pi)}
\]
Note that for the $\EGrule$ rule --- which is based on the $\ERrule$ rule --- it is
never the wrong choice to select the right alternative instead of the left one.
Choosing the left one would leave us with a configuration denoting $\expath(\psi\land\false\land\bigwedge\Pi)\land\bigwedge\sideFormulas$
which can never be satisfied because of the constant~$\false$.

\begin{example}
\label{ex:pre-tableau}
A strategy for player 0 in the game on $\allpath\finally\generally p \; \wedge \; \expath\generally\expath\finally\neg p$ is represented
in Figure~\ref{fig: bad tableau}. Note that such strategies can be seen as infinite trees. The bold arrows in 
Figure~\ref{fig: bad tableau} point towards repeating configurations in this strategy. This is meant to represent the infinite tree that
is obtained by repeatedly continuing as it is done in the two finite branches. Also note that in general, strategies may not
be representable in a finite way like this. 
The twin lines indicate hidden configurations whenever unary rules can be applied in parallel.
For instance, the double line at the bottom represents the application of the rules \AFrule{} and \EGrule{}.
The thin arrows will only be used in the next subsection in order to explain the
winning conditions in the satisfiability games.
A strategy for player 0 induces canonically a tree model by collapsing successive configurations that are not separated by applications
of the rules $\XruleNoE$ and $\XruleWithE$. Doing this to the strategy in Figure~\ref{fig: bad tableau} results in the
following transition system. Note that the tableau of Figure~\ref{fig: bad
tableau} gives no specification on whether $p$ should be included in the
right-most node. It is natural to only include those propositions that are
required to be true.
\begin{center}
\includegraphics{./examplesystems.1}
\end{center}
Note that it does not satisfy the formula $\allpath\finally\generally p \wedge \expath\generally\expath\finally\neg p$. The overall
goal is to characterise satisfiability in \ctlstar through these games. Hence, it is important to define the winning conditions 
such that this strategy is not a winning strategy. 
\end{example}

\begin{figure}[t]
\begin{center}
\includegraphics{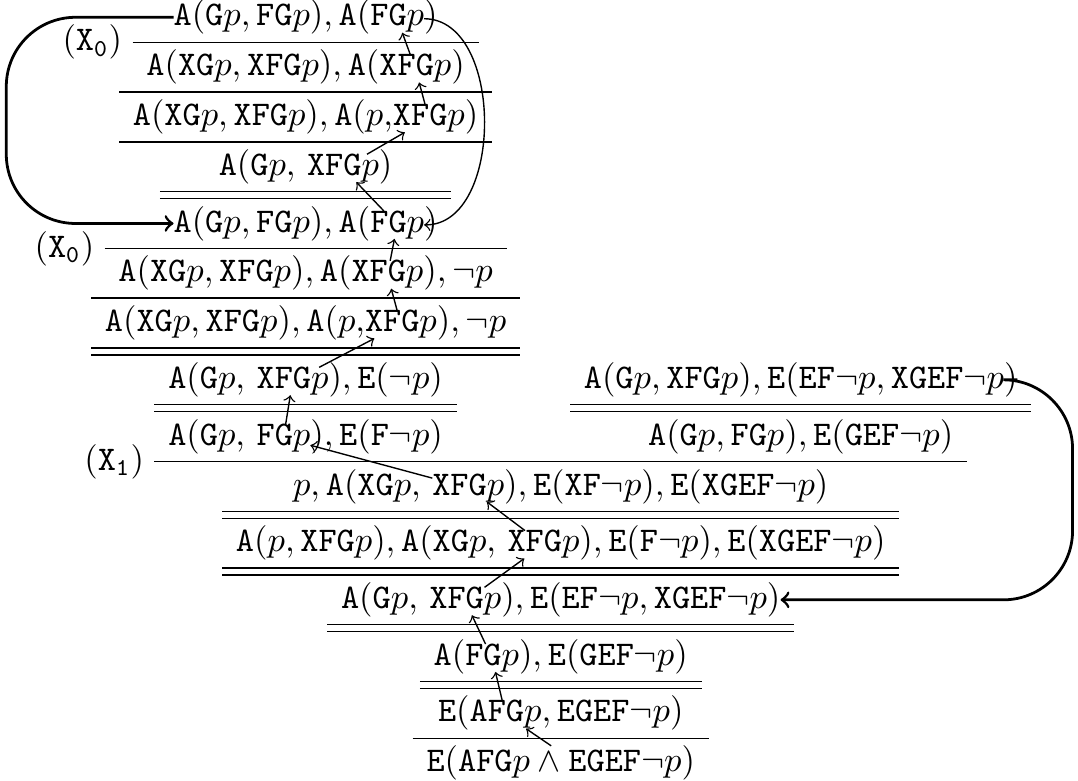}
\end{center}
\caption{A strategy for player 0 in the satisfiability game for $\allpath\finally\generally p \wedge \expath\generally\expath\finally\neg p$.}
\label{fig: bad tableau}
\end{figure}

\subsection{The Winning Conditions}
\label{sub:winning:cond}

An occurrence of a formula is called \emph{principal} if it gets transformed by a rule. For example, 
the occurrence of $\varphi \wedge \psi$ is principal in $\EAndrule$. A principal formula has
\emph{descendants} in the successor configurations. For example, both occurrences of $\varphi$ and $\psi$ are 
descendants of the principal $\varphi \wedge \psi$ in rule $\EAndrule$.

Note that in the modal rules 
$\XruleNoE$ and $\XruleWithE$, every formula apart from those in the literal part is principal. 
Literals in the literal part can never be principal, but literals inside of an $\allpath$- or $\expath$-block
are principal in rules $\ALitrule$ and $\ELitrule$. 
Finally, any non-principal occurrence of a formula in a configuration may have a \emph{copy} in one of the successor configurations.
The copy is the same formula since it has not been transformed. For instance, any formula in $\Sigma$ in 
rule $\ALitrule$ has a copy in the successor written on the right but does not have a copy in the successor
on the left.

The gap between the existence of strategies for player 0 and satisfiability is caused by unfulfilled eventualities:
an eventuality is a formula of the form $\until$ or its abbreviation $\finally$. Note how the rules handle these by
unfolding using the \ctlstar equivalence $Q(\psi_1 \until \psi_2) \equiv Q(\psi_2 \vee (\psi_1 \wedge \nxt(\psi_1 \until \psi_2)))$
for any $Q \in \{\expath,\allpath\}$. The rules for the Boolean operators and for the $\nxt$-modalities can lead to a 
configuration in which $\psi_1 \until \psi_2$ occurs again inside of a $Q$-block. Note that inside an $\expath$-block
this is only possible if player 0 decides not to choose the successor containing $\psi_2$. Inside of an $\allpath$-block
the situation is slightly different; player 0 has no choices there. Still, it is important to note that a $\until$-formula
should not be unfolded infinitely often because $\psi_1 \until \psi_2$ asserts that eventually $\psi_2$ will be true, and
unfolding postpones this by one state in a possible model. Thus, the winning conditions have to ensure that player 0 cannot
let an eventuality formula get unfolded infinitely many times without its right argument being satisfied infinitely many
times as well.

In order to track the infinite behaviour of eventualities, one needs to follow single
formulas through the branches that get transformed by a rule from time to time. Note that 
a formula can occur inside of several blocks. Thus it is important to keep track of the block
structure as well.

In the following we develop the technical definitions that are necessary in order to capture such unfulfilled 
eventualities and present some of their properties.

\begin{definition}
A quantifier-bound block $\allpath\Sigma$ or $\expath\Pi$ is called \emph{principal} as well if it
contains a principal formula. A quantifier-bound block might have \emph{descendants} in the successor(s). For example,
$\allpath(\varphi \land \psi,\Sigma)$ has two descendants $\allpath(\varphi,\Sigma)$ and
$\allpath(\psi,\Sigma)$ in an application of $\AAndrule$.
\end{definition}

\begin{definition}
Let $\mathcal C_1$ be a configuration to which a rule $r$ is applicable and
let $\mathcal C_2$ be one of its successors. Furthermore, let $\qpath_1\Delta_1$, resp.\ 
  $\qpath_2\Delta_2$ with $\qpath_1,\qpath_2 \in \{\expath,\allpath\}$ and $\Delta_1,\Delta_2 \subseteq
  \fl{\vartheta}$ be quantifier-bound blocks occurring in the $\allpath$- or $\expath$-part of
  $\mathcal C_1$, resp.\ $\mathcal C_2$.  We say that $\qpath_1\Delta_1$ is \emph{connected} to
  $\qpath_2\Delta_2$ in $\mathcal C_1$ and $\mathcal C_2$, if either
\begin{itemize}[leftmargin=1em]
\item $\qpath_1\Delta_1$ is principal in $r$, and $\qpath_2\Delta_2$ is one of its descendants in $\mathcal C_2$; or
\item $\qpath_1\Delta_1$ is not principal in $r$ and $\qpath_2\Delta_2$ is a copy of $\qpath_1\Delta_1$ in $\mathcal C_2$.
\end{itemize}
We write this as $(\mathcal C_1,\qpath_1\Delta_1) \leadsto (\mathcal C_2,\qpath_2\Delta_2)$. If the rule
instance can be inferred from the context we may also simply write $\qpath_1\Delta_1 \leadsto
\qpath_2\Delta_2$.  Additionally, let $\psi_1$, resp.\ $\psi_2$ be a formula occurring in $\Delta_1$,
resp.\ $\Delta_2$.  We say that $\psi_1$ is \emph{connected} to $\psi_2$ in
$(\mathcal C_1 ,\qpath_1\Delta_1)$ and $(\mathcal C_2,\qpath_2\Delta_2)$, if either
\begin{itemize}[leftmargin=1em]
\item $\psi_1$ is principal in $r$, and $\psi_2$ is one of its descendants in $\mathcal C_2$; or
\item $\psi_1$ is not principal in $r$ and $\psi_2$ is a copy of $\psi_1$ in $\mathcal C_2$.
\end{itemize}
We write this as $(\mathcal C_1,\qpath_1\Delta_1,\psi_1) \leadsto (\mathcal C_2,\qpath_2\Delta_2,\psi_2)$.
If the rule instance can be inferred from the context we may also simply write $(\qpath_1\Delta_1,\psi_1)
\leadsto (\qpath_2\Delta_2,\psi_2)$.
A block connection $(\mathcal C_1, \qpath_1\Delta_1) \leadsto
(\mathcal C_2, \qpath_2\Delta_2)$ is called \emph{spawning} if there is a formula $\psi$ s.t.\
$\qpath_2 \psi \in \Delta_1$ is principal and $\Delta_2 = \{\psi\}$.
The only rules that possibly induce a spawning block connection are $\EErule$, $\EArule$,
$\AArule$ and $\AErule$. For example $(C_1,
\allpath\{q,\expath p\}) \leadsto (C_2, \expath \{p\})$ is spawning while $(C_1,
\allpath\{q,\expath p\}) \leadsto (C_2, \allpath \{q\})$ is not.
\end{definition}

\begin{definition}\label{def:trace}
Let $\mathcal{C}_0,\mathcal{C}_1,\ldots$ be an infinite play of a satisfiability game for some formula $\vartheta$. 
A \emph{trace} $\Xi$ in this play is an infinite sequence $\qpath_0\Delta_0,\qpath_1\Delta_1,\ldots$ s.t.\
for all $i \in \Nat$: $(\mathcal{C}_i,\qpath_i\Delta_i) \leadsto (\mathcal{C}_{i+1},\qpath_{i+1}\Delta_{i+1})$.
A trace $\Xi$ is called an \emph{$\expath$-trace}, resp.\ \emph{$\allpath$-trace} if there is an $i \in \Nat$ s.t.\
$\qpath_j = \expath$, resp.\ $\qpath_j = \allpath$ for all $j \ge i$.
We say that a trace is \emph{finitely spawning} if it contains only finitely many spawning block connections.
\end{definition}

\begin{lemma} 
\label{modal rule infinitely often}
Every infinite play contains infinitely many applications of rules \XruleNoE{} or \XruleWithE{}.
\end{lemma}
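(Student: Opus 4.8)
The plan is to reduce the lemma to a termination statement about the \emph{non-modal} rules, i.e.\ all rules except $\XruleNoE$ and $\XruleWithE$. Concretely, I would show that there is no infinite sequence of configurations in which each is obtained from its predecessor by a rule other than $\XruleNoE$ or $\XruleWithE$. Granting this, the lemma follows by contradiction: an infinite play consists of infinitely many rule applications, so if only finitely many of them were instances of $\XruleNoE$ or $\XruleWithE$, the play would have an infinite suffix built entirely from non-modal rules, contradicting termination.

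To prove termination I would attach to every formula $\psi$ a weight $\|\psi\| \in \Nat$ that measures its syntactic complexity \emph{down to the first} $\nxt$: put $\|\psi\| = 1$ whenever $\psi$ is a literal or of the form $\nxt\chi$, set $\|\varphi\lor\psi\| = \|\varphi\land\psi\| = \|\varphi\until\psi\| = \|\varphi\release\psi\| = 1 + \|\varphi\| + \|\psi\|$, and $\|\expath\chi\| = \|\allpath\chi\| = 1 + \|\chi\|$. Stopping at $\nxt$ is justified because the non-modal rules never inspect what lies beneath a $\nxt$, so a formula $\nxt\chi$ is inert until a modal rule fires. For a quantifier-bound block $B = \qpath\Delta$ set $r(B) := \sum_{\psi \in \Delta} \|\psi\|$, and to a configuration $\mathcal C$ assign the finite multiset $\mu(\mathcal C) := \{\!\{\, r(B) \mid B \text{ a block of } \mathcal C \,\}\!\}$ of block weights, to be compared in the Dershowitz--Manna multiset ordering over $(\Nat,<)$, which is well founded. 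Literals residing in the literal part contribute nothing to $\mu$.

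The core of the argument is then to verify that every non-modal rule strictly decreases $\mu$; that is, for each of its successors the new multiset arises from the old one by deleting a single block weight $R$ and inserting only values strictly below $R$. This is routine but must be carried out rule by rule. The decomposition rules $\AOrrule$, $\EOrrule$, $\EAndrule$, the quantifier rules $\AArule$, $\AErule$, $\EErule$, $\EArule$, and the literal rules $\ALitrule$, $\ELitrule$ each replace one block by one or two lighter blocks (in the choice rules, each alternative does so), since every descendant block's content is a proper part of the principal formula; for $\ALitrule$ and $\ELitrule$ the strict drop comes from moving a literal into the literal part, which erases its weight. The only genuinely delicate cases are the \emph{context-duplicating} rules $\AAndrule$, $\AUrule$ and $\ARrule$, where the side set $\Sigma$ is copied into two successor blocks: a naive \emph{sum}-of-weights measure would actually increase there, and this is precisely why $\mu$ is a multiset of \emph{per-block} weights rather than their sum. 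Writing $s = \sum_{\chi \in \Sigma}\|\chi\|$, the block $\allpath(\varphi\land\psi,\Sigma)$ has weight $R = 1 + \|\varphi\| + \|\psi\| + s$, whereas its two descendants have weights $\|\varphi\|+s$ and $\|\psi\|+s$, both at most $R-2 < R$; the unfolding rules are handled identically, using $\|\nxt(\varphi\until\psi)\| = 1$ together with $\|\varphi\|,\|\psi\| \ge 1$ to keep the $\nxt$-branch strictly below $R$. The remaining rule $\Ettrule$ merely deletes a block (or closes the branch), which trivially decreases $\mu$.

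I expect the main obstacle to be exactly the duplication of the side formulas $\Sigma$ (and symmetrically $\Pi$) in the conjunction and least-/greatest-fixpoint unfolding rules, which defeats any additive measure. The key idea that makes every per-rule check go through uniformly is to measure a configuration by the \emph{multiset} of its block weights: duplicating $\Sigma$ then merely produces two blocks that are each strictly lighter than the block they replace, so the multiset decreases even though the total weight may grow. Once all rules are seen to decrease the well-founded measure $\mu$, termination of the non-modal rules follows, and with it the lemma.
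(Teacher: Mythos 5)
Your proof is correct and follows essentially the same route as the paper: both attach to each formula a complexity measure that treats $\nxt$-formulas as atoms, aggregate it per block, and then compare configurations by a well-founded multiset-style ordering on block measures, which is exactly what absorbs the duplication of the side sets $\Sigma$ and $\Pi$ in rules like $\AAndrule$, $\AUrule$ and $\ARrule$. The only (immaterial) difference is that the paper uses $\max$-based heights at the formula level and therefore needs a second multiset ordering inside each block, whereas your sum-based weights let you collapse the block measure to a single natural number.
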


\begin{proof}
First, we define the \emph{duration of a formula} $\psi$ as the syntactic height
when $\nxt$-subformulas are treated as atoms. More formally:
\begin{displaymath}
\mathrm{dur}: \psi \mapsto \begin{cases}
1 & \text{ if } \psi \equiv \true, \false, p, \neg p, \nxt\psi' \\
1 + \mathrm{dur}(\psi') & \text{ if } \psi \equiv \expath \psi', \allpath \psi' \\
1 + \max(\mathrm{dur}(\psi_1),\mathrm{dur}(\psi_2)) & \text{ if } \psi \equiv \psi_1 \lor \psi_2, \psi_1 \land \psi_2, \psi_1 \until \psi_2, \psi_1 \release \psi_2
\end{cases}
\end{displaymath}
A well-ordering $<$ on the duration of formulas is induced by the well-ordering on
natural numbers. Let $F$ be $\{\mathrm{dur}(\varphi) \mid \varphi \in \fl{\vartheta}\}$, the range of these durations,
and let $B$ be the range of all block sizes, that is $\{0, \ldots, |\fl{\vartheta}|\}$. Both sets are finite.

Second, we define the \emph{duration of a block} $\qpath \Delta$ as a map 
$\mathrm{dur}(\qpath \Delta): F \rightarrow B$ that returns the number of
subformulas of a certain duration. More formally:
\begin{displaymath}
\mathrm{dur}(\qpath \Delta): n \mapsto |\Delta \cap \mathrm{dur}^{-1}(n)|
\end{displaymath}
A well-ordering $\prec$ on the duration of blocks is given as follows (as the domain of the duration is finite and its range is well-founded).
\begin{displaymath}
f \prec g
 \quad:\iff\quad \exists n\in F: f(n) < g(n) \; \wedge \; \forall m > n: f(m) = g(m)
\end{displaymath}

Third, we define the \emph{duration of a configuration} $\mathcal{C}$ as a map
$\mathrm{dur}(\mathcal{C}): B^F \rightarrow \Nat$ that returns the number of
blocks of a certain duration. More formally:
\begin{displaymath}
\mathrm{dur}(\mathcal{C}): f \mapsto |\mathcal{C} \cap \mathrm{dur}^{-1}(f)|
\end{displaymath} 
A well-ordering $\lhd$ on the duration of configurations is given as follows.
\begin{displaymath}
C \lhd D
 \quad:\iff\quad \exists f\in B^F: C(f) < D(f) \; \wedge \; \forall g \succ f: C(g) = D(g)
\end{displaymath}
Indeed, $\lhd$ is well-founded as the domain of durations, $B^F$, is finite.

The claim now follows from the fact that every rule application except for
$\XruleNoE$ and $\XruleWithE$ strictly decreases the duration of the configuration.
\end{proof}

\begin{definition}\label{def:thread}
Let $\mathcal{C}_0,\mathcal{C}_1,\ldots$ be an infinite play. 
A \emph{thread} $t$ in a trace $\Xi=\qpath_0\Delta_0,\qpath_1\Delta_1,\ldots$ within $\mathcal{C}_0,\mathcal{C}_1,\ldots$
is an infinite sequence $\psi_0,\psi_1,\ldots$ s.t.\ for all $i \in \Nat$:
$(\mathcal{C}_i,\qpath_i\Delta_i,\psi_i) \leadsto (\mathcal{C}_{i+1},\qpath_{i+1}\Delta_{i+1},\psi_{i+1})$.
Such a thread $t$ is called a $\until$-thread, resp.\ $\release$-thread if there is a formula
$\varphi \until \psi \in \fl{\vartheta}$, resp.\ $\varphi \release \psi \in \fl{\vartheta}$ s.t.\
$\psi_j = \varphi \until \psi$, resp.\ $\psi_j = \varphi \release \psi$ for infinitely many~$j$.

An $\expath$-trace is called \emph{good} iff it has no $\until$-thread; similarly, an $\allpath$-trace
is called \emph{good} iff it has an $\release$-thread. In other words, an $\expath$-trace is called \emph{bad} if it 
contains an $\until$-thread, and an $\allpath$-trace is called \emph{bad} if it contains no $\release$-thread.
\end{definition}

\begin{lemma}\label{every trace form}
Every trace in an infinite play is either an $\allpath$-trace or an $\expath$-trace, and is only finitely spawning.
\end{lemma}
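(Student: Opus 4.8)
The plan is to reduce both assertions to the single fact that along any trace only finitely many block connections are spawning, and to derive the $\allpath$/$\expath$ dichotomy as a corollary. The starting observation is that the quantifier label $\qpath_i$ of the blocks in a trace can change only at a spawning connection. For a non-principal block the successor is a copy, so its label is unchanged; for a principal block the successor is a descendant, and reading the descendants off Figure~\ref{fig:pretableaurules} shows that every rule keeps the label, the only exceptions being the spawning descendants of $\EArule$ (where $\expath(\allpath\varphi,\Pi)$ yields $\allpath\{\varphi\}$) and of $\AErule$ (where $\allpath(\expath\varphi,\Sigma)$ yields $\expath\{\varphi\}$). Hence any label change forces a spawning connection, so finitely many spawning connections leave the sequence $\qpath_0,\qpath_1,\ldots$ eventually constant, which is exactly the claim that the trace is an $\allpath$-trace or an $\expath$-trace.

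For finite spawning itself I would use a well-founded measure on blocks based on quantifier nesting. Define $\mathrm{qd}(\chi)=0$ for literals $\chi$, set $\mathrm{qd}(\nxt\chi)=\mathrm{qd}(\chi)$, let $\mathrm{qd}$ take the maximum of the arguments on $\lor,\land,\until,\release$, and put $\mathrm{qd}(\expath\chi)=\mathrm{qd}(\allpath\chi)=1+\mathrm{qd}(\chi)$. For a block $\qpath\Delta$ let $\mu(\qpath\Delta):=\max(\{0\}\cup\{\mathrm{qd}(\chi):\chi\in\Delta\})$. Along a trace $\qpath_0\Delta_0,\qpath_1\Delta_1,\ldots$ I would prove $\mu(\qpath_{i+1}\Delta_{i+1})\le\mu(\qpath_i\Delta_i)$ in general, with strict inequality whenever the $i$-th connection is spawning. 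The strict drop is immediate from the definition of spawning: there $\Delta_{i+1}=\{\psi\}$ with $\qpath_{i+1}\psi\in\Delta_i$, whence $\mu(\qpath_i\Delta_i)\ge\mathrm{qd}(\qpath_{i+1}\psi)=1+\mathrm{qd}(\psi)>\mathrm{qd}(\psi)=\mu(\qpath_{i+1}\Delta_{i+1})$.

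The non-increase is a rule-by-rule check: the Boolean rules replace a principal formula by subformulas of it, the $\until$/$\release$ rules unfold it into formulas of no greater quantifier depth (the reintroduced $\nxt(\cdots)$ has the same $\mathrm{qd}$ since $\mathrm{qd}$ ignores $\nxt$), the modal rules strip a leading $\nxt$, and copies leave formulas untouched -- so none of them raises $\mu$. Consequently $(\mu(\qpath_i\Delta_i))_i$ is a non-increasing sequence in $\Nat$ that strictly decreases at each spawning step, so it can have only finitely many spawning steps; combined with the first paragraph this yields both the dichotomy and finite spawning.

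I expect the main work to be purely the exhaustive bookkeeping of the two rule-by-rule checks -- that no rule raises the quantifier depth of a block, and that the label is preserved along every non-spawning connection. Both are routine once the connections are read from Figure~\ref{fig:pretableaurules}; the only points needing slight care are that the unfolding rules reinsert a $\nxt$-guarded copy of the eventuality (harmless for $\mathrm{qd}$) and that $\EArule$ and $\AErule$ are exactly where one nested quantifier is peeled off, which simultaneously accounts for the strict decrease of $\mu$ and for the only possible flip of the quantifier label.
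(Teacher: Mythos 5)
Your proof is correct and follows essentially the same route as the paper: the dichotomy is reduced to the observation that the block quantifier can only flip at a spawning connection, and finiteness of spawning is obtained by well-founded descent on quantifier structure. The only cosmetic difference is that the paper observes directly that the singleton formulas produced at consecutive spawning connections form a strictly decreasing chain in the proper-subformula order, whereas you package the same descent into an explicit non-increasing integer measure $\mu$ that strictly drops at each spawning step.
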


\begin{proof}
  Let $\qpath_0\Delta_0,\qpath_1\Delta_1,\ldots$ be a trace and assume that $\{i \mid
  \qpath_i\Delta_i \leadsto \qpath_{i+1}\Delta_{i+1} \text{ is spawning}\}$ is infinite. Let $i_0 < i_1 <
  \ldots$ be the ascending sequence of numbers in this infinite set and let $\phi_{i_j}$ denote the
  formula in the singleton set $\Delta_{i_j+1}$. Note that for all $j$ it is the case that
  $\phi_{i_{j+1}}$ is a proper subformula of $\phi_{i_j}$. Hence the set cannot be
  infinite. Now note that every finitely spawning trace eventually must be either an $\allpath$- or an
  $\expath$-trace because the change of the quantifier on the current block in a trace is only possible in a
  moment that the trace is spawning.
\end{proof}

\begin{lemma}\label{every thread form}
Every thread in a trace of an infinite play is either a $\until$- or an $\release$-thread.
\end{lemma}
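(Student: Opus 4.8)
The plan is to follow the single tracked formula along a thread $\psi_0,\psi_1,\ldots$ and classify each one-step transition $(\qpath_i\Delta_i,\psi_i) \leadsto (\qpath_{i+1}\Delta_{i+1},\psi_{i+1})$ by its effect on the syntactic size $|\psi_i|$ (the number of subformulas). First I would go through the rules of Figure~\ref{fig:pretableaurules} and record what can happen to a principal formula sitting \emph{inside} an $\allpath$- or $\expath$-block. For the Boolean rules ($\AAndrule,\AOrrule,\EAndrule,\EOrrule$), the quantifier rules ($\AArule,\AErule,\EArule,\EErule$), and the ``argument'' descendants of the fixpoint rules, the descendant $\psi_{i+1}$ is always a proper subformula of $\psi_i$, so $|\psi_{i+1}|<|\psi_i|$. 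The only transitions that do not strictly decrease the size are copies (then $\psi_{i+1}=\psi_i$) and \emph{unfoldings}, where a principal $\varphi\until\psi$ becomes $\nxt(\varphi\until\psi)$ (or $\varphi\release\psi$ becomes $\nxt(\varphi\release\psi)$), raising the size by exactly one. The crucial observation is that the modal rules $\XruleNoE$ and $\XruleWithE$ also \emph{decrease} the size along a thread: at such a step the tracked block is principal and hence consists only of $\nxt$-formulas, so $\psi_i=\nxt\psi_{i+1}$ and $|\psi_{i+1}|=|\psi_i|-1$. Thus, among all rules, only the two unfolding transitions can ever make the size grow.

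From this I would show that the thread must contain infinitely many unfolding transitions. Suppose, for contradiction, only finitely many occur, and pick $N$ past the last one. Then for every $i\ge N$ we have $|\psi_{i+1}|\le|\psi_i|$, with strict inequality at every non-copy transition; since the number of strict decreases of a natural number is bounded by $|\psi_N|$, only finitely many non-copy transitions happen after $N$. This contradicts Lemma~\ref{modal rule infinitely often}: the play applies $\XruleNoE$ or $\XruleWithE$ infinitely often, and along the infinite thread each such application is a non-copy transition (a strict size decrease $\nxt\psi_{i+1}\mapsto\psi_{i+1}$) by the observation above. Hence infinitely many unfolding transitions occur.

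To conclude, I would apply pigeonhole. There are only finitely many $\until$- and $\release$-formulas in the finite set $\fl{\vartheta}$, so among the infinitely many unfoldings some fixed $\varphi\until\psi$, resp.\ $\varphi\release\psi$, is unfolded infinitely often. At each such step the tracked formula $\psi_i$ is exactly $\varphi\until\psi$, resp.\ $\varphi\release\psi$, so this formula equals $\psi_j$ for infinitely many $j$; by Definition~\ref{def:thread} the thread is then a $\until$-thread, resp.\ a $\release$-thread.

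I expect the only genuine obstacle to be the rule-by-rule bookkeeping of the first paragraph. In particular one must verify that the thread stays inside a block throughout, so that the literal rules $\ALitrule$ and $\ELitrule$ — whose principal descendant leaves every block — cannot occur along an infinite thread, and that at each modal step the tracked block is genuinely principal, which is what forces $\psi_i=\nxt\psi_{i+1}$ and makes modal steps size-decreasing. Once these two facts are pinned down, the measure argument and the pigeonhole step are routine.
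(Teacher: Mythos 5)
Your argument is correct and is essentially the paper's proof in a different dress: the paper argues contrapositively that if no $\until$- or $\release$-formula recurs, then from some point on $\psi_{i+1}$ is always a subformula of $\psi_i$ while Lemma~\ref{modal rule infinitely often} forces $\psi_{i+1}\neq\psi_i$ infinitely often --- which is exactly your size-measure-plus-infinitely-many-modal-steps contradiction --- and the recurrence of a single fixed formula is the same pigeonhole you invoke. The only piece you omit is that the paper's proof also establishes that the two alternatives are mutually exclusive (a thread cannot be both a $\until$- and an $\release$-thread, since two occurrences of the same $\release$-formula separated by a $\until$-formula would make that $\release$-formula a proper subformula of itself); your observation that the thread is weakly decreasing in the subformula order modulo frontal $\nxt$'s yields this in one line, so it is worth adding.
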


\begin{proof}
Let $t = \psi_0,\psi_1,\ldots$ be a thread.
Assume that $t$ is neither a $\until$- nor an $\release$-thread, hence there is a position $i^*$ s.t.\
$\psi_i$ is neither of the form $\psi'\until\psi''$ nor of the form $\psi'\release\psi''$ for all $i \geq i^*$,
hence $\psi_{i+1}$ is a subformula of $\psi_{i}$ for all $i \geq i^*$. By Lemma~\ref{modal rule infinitely often}
it follows that $\psi_{i+1} \not= \psi_i$ for infinitely many $i$ which cannot be the case, hence $t$
has to be a $\until$- or an $\release$-thread.
Finally, assume that $t$ is both a $\until$- and an $\release$-thread, i.e.\ there are positions $i_0 <
i_1 < i_2$ s.t.\ $\psi_{i_0} = \psi_{i_2} = \psi'\release\psi''$ and $\psi_{i_1} = \varphi'\until\varphi''$. Hence
$\psi_{i_1}$ is a proper subformula of $\psi_{i_0}$ and $\psi_{i_2}$ is a proper subformula of $\psi_{i_1}$, 
thus $\psi_{i_0}$ would be a proper subformula of itself.
\end{proof}

\begin{lemma}\label{every UR-thread form}
For every $\until$- and every $\release$-thread $\psi_0, \psi_1, \ldots$ in a trace of an infinite play 
there is an $i \in \Nat$ such that $\psi_i$ is a $\until$-, or an $\release$-formula resp., and
$\psi_j = \psi_i$ or $\psi_j = \nxt \psi_i$ for all $j \geq i$.
\end{lemma}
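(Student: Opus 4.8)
The plan is to analyse, for a fixed $\until$-thread $\psi_0,\psi_1,\ldots$ (the $\release$ case being symmetric), the possible one-step transitions of the two formulas $\varphi\until\psi$ and $\nxt(\varphi\until\psi)$, where $\varphi\until\psi\in\fl{\vartheta}$ is the until-formula that, by definition of a $\until$-thread, occurs as $\psi_j$ for infinitely many $j$. I would take $i$ to be the least index with $\psi_i=\varphi\until\psi$ and prove by induction on $j$ that $\psi_j\in\{\varphi\until\psi,\nxt(\varphi\until\psi)\}$ for all $j\ge i$; this is precisely the assertion, with $\psi_i$ the required $\until$-formula.

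The first ingredient is a case distinction on the connection $\psi_j\leadsto\psi_{j+1}$. If $\psi_j$ is non-principal in the applied rule, then $\psi_{j+1}=\psi_j$ is a copy. If $\psi_j=\varphi\until\psi$ is principal, then it can only be principal in an until-rule ($\AUrule$ or $\EUrule$, respectively the abbreviated $\AFrule$/$\EFrule$), since a modal rule $\XruleNoE$/$\XruleWithE$ applies only to blocks all of whose formulas are $\nxt$-formulas; hence $\psi_{j+1}\in\{\psi,\varphi,\nxt(\varphi\until\psi)\}$. Symmetrically, $\nxt(\varphi\until\psi)$ is principal only in a modal rule, where it is unfolded to $\psi_{j+1}=\varphi\until\psi$. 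Thus from $\nxt(\varphi\until\psi)$ the thread cannot leave the set $\{\varphi\until\psi,\nxt(\varphi\until\psi)\}$, and the only way to escape this set at all is the ``drop'' $\varphi\until\psi\leadsto\varphi$ or $\varphi\until\psi\leadsto\psi$ to a proper subformula.

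The main obstacle is to rule out this drop under the recurrence assumption, i.e.\ to show that once the thread reaches a proper subformula of $\varphi\until\psi$ it can never return to $\varphi\until\psi$. For this I would set $S:=\subf{\varphi\until\psi}$ and $S^{+}:=S\cup\{\nxt\rho\mid\rho\in S\text{ a }\until\text{- or }\release\text{-formula}\}$, and verify that $S^{+}$ is closed under $\leadsto$: copies stay put; a principal Boolean, quantifier or modal step produces a proper subformula of its principal (hence lands in $S$); and the only step that enlarges a formula, namely the unfolding $\rho\leadsto\nxt\rho$, occurs for temporal $\rho\in S$ and stays in $S^{+}$. Within $S^{+}$ the formula $\varphi\until\psi$ has as $\leadsto$-predecessors only itself (via a copy) and $\nxt(\varphi\until\psi)$ (via a modal unfolding), because no other element of $S^{+}$ has $\varphi\until\psi$ as a proper subformula and $\varphi\until\psi$ is not of the shape $\nxt(\cdot)$; likewise $\nxt(\varphi\until\psi)$, being $\leadsto$-maximal in $S^{+}$, is reachable only from $\{\varphi\until\psi,\nxt(\varphi\until\psi)\}$. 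Hence $\{\varphi\until\psi,\nxt(\varphi\until\psi)\}$ is a trap inside $S^{+}$: from any other formula of $S^{+}$, and in particular from a proper subformula, the value $\varphi\until\psi$ is unreachable.

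Combining these observations the induction closes. If at some $j\ge i$ with $\psi_j=\varphi\until\psi$ the thread dropped to $\varphi$ or $\psi$, then by the trap it would never meet $\varphi\until\psi$ again, leaving only finitely many occurrences and contradicting that $\psi_0,\psi_1,\ldots$ is a $\until$-thread; so the only admissible principal descendant of $\varphi\until\psi$ is $\nxt(\varphi\until\psi)$, while every occurrence of $\nxt(\varphi\until\psi)$ is followed by itself or by $\varphi\until\psi$. This confines the thread to $\{\varphi\until\psi,\nxt(\varphi\until\psi)\}=\{\psi_i,\nxt\psi_i\}$ for all $j\ge i$. The $\release$-thread case follows verbatim, replacing the until-rules by the release-rules $\ARrule$/$\ERrule$ (and $\AGrule$/$\EGrule$) and $\varphi\until\psi$ by the recurring $\varphi\release\psi$.
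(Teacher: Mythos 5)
Your proof is correct. It rests on the same fundamental observation as the paper's: along a thread, each connection either copies the formula, exchanges a $\until$- or $\release$-formula with its $\nxt$-prefixed version (unfolding and the modal rules being the only steps in either direction), or passes to a proper subformula, and the subformula order is well-founded. The paper packages this as a single monotone measure: strip the frontal $\nxt$, observe that the resulting chain is weakly decreasing, hence eventually constant, and then invoke Lemma~\ref{modal rule infinitely often} to find a position $i$ past the stabilisation point at which a modal rule has just fired, so that $\psi_i$ is the $\until$- or $\release$-formula itself rather than its $\nxt$-prefixed variant. You instead fix the recurring formula $\varphi\until\psi$, take $i$ to be its first occurrence, and run an induction whose crucial step is the trap argument: once the thread drops to a proper subformula it can never climb back to $\varphi\until\psi$, contradicting recurrence. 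Your route buys a marginally stronger conclusion (confinement from the \emph{first} occurrence onwards) and dispenses with Lemma~\ref{modal rule infinitely often}, at the price of the explicit closure analysis of the set $S^{+}$; the paper's measure-based phrasing is shorter but only yields some sufficiently late index. Both are valid instantiations of the same well-foundedness idea, organised differently.
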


\begin{proof}
  For all $i \in \Nat$, it holds that $\psi_{i+1}$ is a subformula $\psi_i$, or $\psi_{i+1} = \nxt \psi_i$
  provided that $\psi_i$ is a $\until$- or an $\release$-formula.  The map which removes the 
  frontal $\nxt$ from a formula converts the thread into a chain which is weakly decreasing with respect to the
  subformula order. Because this order is well-founded, the chain is eventually constant, say from $n$ onwards.
  By Lemma~\ref{modal rule infinitely often}, either \XruleNoE{} or \XruleWithE{} has been applied at a position $i-1$
  for some $i > n$.  Hence, $\psi_i$ is either a $\until$- or an $\release$-formula, and $i$ meets the claimed property.
\end{proof}

We now have obtained all the necessary technical material that is needed to define the winning conditions in 
the satisfiability game $\mathcal{G}_\vartheta$.

\begin{definition}\label{def:G winning condition}
The winning condition $L$ of $\mathcal{G}_{\vartheta} = (\goals{\vartheta},V_0,E,v_0,L)$ 
consists of every finite play which ends in a consistent set of literals,
and of every infinite play which does not contain a bad trace.
\end{definition}

In other words, player 0's objective is to create a play in which every $\until$-formula inside of an 
$\expath$-trace gets fulfilled eventually. She can control this using rule $\EUrule$. Inside of an $\allpath$-trace
She must hope that not every formula that gets unfolded infinitely often is of the $\until$-type. Note that sets 
inside of an $\expath$-block are conjunctions, hence, one unfulfilled formula makes the entire block false. Inside of
an $\allpath$-block the sets are disjunctions though, hence, in order to make this block true it suffices to satisfy
one of the formula therein. An $\release$-formula that gets unfolded infinitely often is---unlike an 
$\until$-formula---indeed satisfied.

\begin{figure}[t]
\begin{center}
\includegraphics{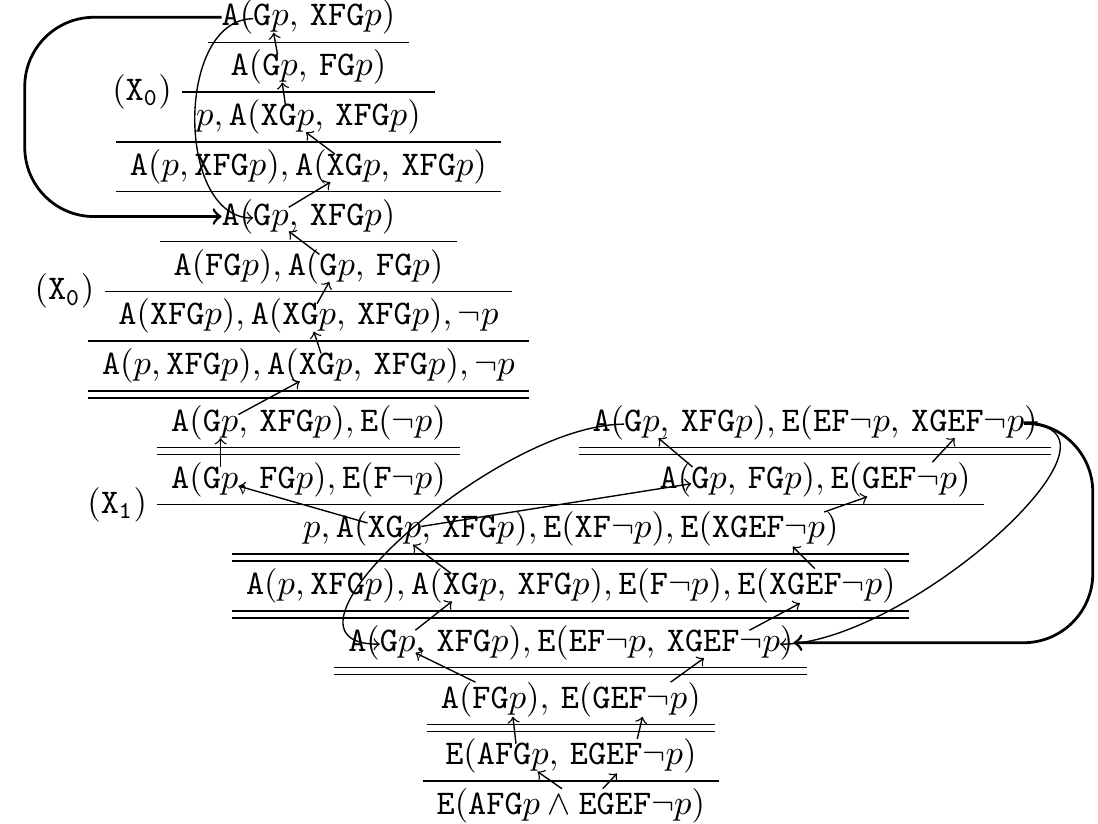}
\end{center}
\caption{A winning strategy for player 0 in the satisfiability game on $\allpath\finally\generally p \wedge \expath\generally\expath\finally\neg p$.}
\label{fig: good tableau}
\end{figure}

\begin{example}
Consider the strategy in Figure~\ref{fig: bad tableau} again. It is not a winning strategy because its left branch contains a bad
$\allpath$-trace, i.e.\ the eventuality $\finally\generally p$ is postponed for an infinite number of steps, which is
the only thread contained in the trace. Since this thread is an $\until$-thread, there is no $\release$-thread contained in the trace.

Figure~\ref{fig: good tableau} shows a winning strategy for player 0 in the game on this formula 
$\allpath\finally\generally p \; \wedge \; \expath\generally\expath\finally\neg p$. Infinite threads are 
being depicted using thin arrows. It is not hard to see that every $\allpath$-trace
contains a $\release$-thread and that every $\expath$-trace only contains $\release$-threads.
Again, this strategy induces a canonic model, but this time a satisfying one because it is in fact a winning strategy:
\begin{center}
\includegraphics{./examplesystems.2}
\end{center}
Note that in this case, all paths starting in the leftmost state will eventually only visit states that satisfy $p$.
Furthermore, there is a path---namely the loop on this state---on which every state is the beginning of a 
path---namely the one moving over to the right---on which $\neg p$ holds at some point.
\end{example}

Winning strategies, as opposed to ordinary strategies, exactly characterise
satisfiability of \ctlstar-formulas in the following sense.

\begin{theorem}
\label{thm:correctness}
For all $\vartheta \in$ \ctlstar: $\vartheta$ is satisfiable iff player 0 has a winning strategy for the satisfiability game
$\mathcal{G}_{\vartheta}$.
\end{theorem}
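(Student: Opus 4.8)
The plan is to establish the two implications separately. The direction from satisfiability to the existence of a winning strategy amounts to reading a strategy off a given model, while the converse---extracting a model from a winning strategy---is the substantial part and is where the winning condition does its work.

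For the first direction, fix a transition system $\Transsys$ with $\Transsys, s^* \models \vartheta$. I would let player 0 play ``guided by $\Transsys$'', maintaining as an invariant that the current configuration is satisfied at a current state $s$ in the following strong form: to each existential block $\expath\Pi_i$ there is assigned a path of $\Transsys$ from $s$ witnessing the conjunction $\bigwedge\Pi_i$, while every path from $s$ satisfies each universal disjunction $\bigvee\Sigma_j$. Player 0's choices in the branching rules $\EOrrule$, $\EUrule$ and $\ERrule$ are then dictated by the assigned witnessing paths; in particular she discharges an eventuality exactly when the witnessing path first fulfils it. The non-modal rules merely decompose formulas and visibly preserve this invariant, and a modal step ($\XruleNoE$ or $\XruleWithE$) advances $s$ by one transition along the relevant paths. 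It remains to check that no play consistent with this strategy contains a bad trace. For an $\expath$-trace, the assigned witnessing path discharges every eventuality, so no such trace carries an $\until$-thread. For an $\allpath$-trace, the corresponding path of $\Transsys$ satisfies the evolving disjunction at every step; following a true disjunct and discharging untils as soon as the path fulfils them produces, by Lemmas~\ref{every thread form} and~\ref{every UR-thread form}, a thread that cannot stabilise on an unfulfilled until-formula and is therefore an $\release$-thread. Hence every $\allpath$-trace is good, and since finite plays under this strategy end in consistent literal sets, player 0 wins.

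For the converse, assume player 0 has a winning strategy $\sigma$ and view the plays conforming to $\sigma$ as an infinite tree. As indicated in Example~\ref{ex:pre-tableau}, I would build a transition system $\Transsys$ by collapsing each maximal segment of configurations lying strictly between two successive applications of the modal rules into a single state, labelling it with the literals it contains and installing transitions as prescribed by $\XruleNoE$ and $\XruleWithE$, i.e.\ one successor per existential block. By Lemma~\ref{modal rule infinitely often} every branch meets infinitely many modal steps, so the construction is well defined and total. The core is then a truth lemma asserting that each configuration of the tree is satisfied at its corresponding state in the sense of its denoting state formula---so every existential block is witnessed by a path and every universal disjunction holds along all paths. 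This is proved by induction on formula structure, where the literal, Boolean, $\nxt$ and quantifier cases follow directly from the shape of the rules and the collapsing construction, and $\Transsys, s^* \models \vartheta$ falls out by applying the lemma to the initial block $\expath\vartheta$.

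The main obstacle, and the only place the winning condition enters, is the truth lemma for the fixpoint cases, i.e.\ for $\until$- and $\release$-formulas; this is precisely the well-foundedness of least-fixpoint unfoldings mentioned in the abstract. For an $\until$-formula occurring in an existential block I would argue by contradiction: were it never fulfilled along the relevant path of $\Transsys$, then tracking it through the play via Lemmas~\ref{every trace form}, \ref{every thread form} and~\ref{every UR-thread form} would assemble an $\until$-thread inside an $\expath$-trace, hence a bad trace, contradicting that $\sigma$ is winning. The $\release$-formula case for universal blocks is dual, using that every $\allpath$-trace in a won play carries an $\release$-thread, which supplies the path along which the release obligation holds. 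Combining the cases yields the truth lemma and thereby the theorem.
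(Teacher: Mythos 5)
Your overall architecture coincides with the paper's: the two implications are handled separately, a winning strategy is collapsed into a transition system by merging the configurations between applications of \XruleNoE{} and \XruleWithE{}, and a model guides player~0's choices with eventualities discharged at their first fulfilment. The completeness direction is essentially the paper's proof of Theorem~\ref{thm:completeness}; the one device you gloss over is that the paper fixes a well-ordering on the paths of $\Transsys$ and always assigns to an $\expath$-block its \emph{minimal} witnessing path $\xi_\Transsys(s,\psi)$. This is what makes ``the assigned witnessing path'' a canonical object that is \emph{stable} under the non-modal rules (the proof explicitly checks that $\xi_\Transsys(s_i,\Pi_i)=\xi_\Transsys(s_{i+1},\Pi_{i+1})$ across such steps), so that the path whose first fulfilment point triggers the left branch of \EUrule{} is the \emph{same} path throughout the trace; with a witness re-chosen arbitrarily at each step, the claim that the eventuality is eventually discharged would not close.

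The gap is in the soundness direction. A truth lemma ``proved by induction on formula structure'' is not available for \ctlstar: the meaning of a block $\expath\Pi$ is that a \emph{single} path satisfies every member of $\Pi$ simultaneously, so $s\models\expath\Pi$ cannot be assembled from induction hypotheses about the individual formulas of $\Pi$ (each might be witnessed by a different path), and the descendant of a block under a rule is another block of incomparable syntactic complexity, not a structurally smaller formula --- even your ``easy'' Boolean, $\nxt$ and quantifier cases appeal to the truth of a \emph{later configuration} in the play rather than of a subformula. Once the lemma must be stated at the level of whole blocks and argued along the play rather than along the syntax tree, you have in effect rewritten the proof as the paper's single global contradiction (Theorem~\ref{thm:soundness}): assume $\widehat r\not\models\expath\vartheta$ and propagate \emph{falsity} block-by-block through the connection relation $\leadsto$, maintaining the invariants ($\Xi$-1)--($\Xi$-6) --- for an $\expath$-block, that its state falsifies the conjunction; for an $\allpath$-block, an explicitly carried falsifying path $\pi_i$ --- until Lemmas~\ref{every trace form}, \ref{every thread form} and~\ref{every UR-thread form} produce a bad trace inside a play conforming to the winning strategy. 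Your patch for the fixpoint cases is exactly this argument; it has to carry the entire proof, not only the $\until$/$\release$ cases.
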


The proof is given in the following section.

\section{Correctness Proofs}
\label{sec:correctness}

This section contains the proof of Theorem~\ref{thm:correctness};
both implications -- soundness and completeness -- are considered separately.
The completeness proof is technically tedious but does not use any heavy machinery once the right invariants
are found. Given a model for $\vartheta$ we use these invariants to construct a winning strategy 
for player~$0$ in a certain way. 
Soundness can be shown by collapsing a winning strategy into a tree-like transition system and verifying that
it is indeed a model of $\vartheta$.

\subsection{Soundness}
\begin{theorem}%
  \label{thm:soundness}
  Suppose that player 0 has a winning strategy 
  for the satisfiability game $\mathcal{G}_{\vartheta}$.
  Then $\vartheta$ is satisfiable.
\end{theorem}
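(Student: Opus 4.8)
The plan is to take a winning strategy $\sigma$ for player~0 and extract from it a transition system together with an initial state witnessing satisfiability of $\vartheta$. Following the hint in Example~\ref{ex:pre-tableau}, I would first collapse the strategy tree by contracting all edges that do \emph{not} arise from an application of the modal rules $\XruleNoE$ or $\XruleWithE$. Since every configuration in the strategy carries a literal part $\Lambda$, the nodes of the resulting transition system $\Transsys_\sigma$ are the ``modal configurations'' reached just before an $\XruleNoE$/$\XruleWithE$ step, with state $s$ labelled by $\lambda(s) := \Lambda \cap \Prop$ (the positive literals occurring in the literal part). The starting state is the collapse of the root $\expath\vartheta$, and the edges are induced by the modal steps: an $\XruleWithE$ step, owned by player~1, branches into one successor per $\expath$-block, while an $\XruleNoE$ step produces the single ``universal'' successor. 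By Lemma~\ref{modal rule infinitely often} every infinite play passes through infinitely many modal steps, so the collapse is well defined and every path of $\Transsys_\sigma$ corresponds to an infinite play consistent with $\sigma$.

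The heart of the argument is a truth lemma: for every configuration $\mathcal C$ reachable in the strategy and every block $\qpath\Delta$ occurring in $\mathcal C$, the state $s$ of $\Transsys_\sigma$ to which $\mathcal C$ collapses satisfies the state formula denoted by that block. I would prove this by induction on the structure of formulas, using the well-founded duration measure $\mathrm{dur}$ from Lemma~\ref{modal rule infinitely often} to handle the local (non-modal) rules: each Boolean, literal, quantifier and unfolding rule preserves the intended meaning, so truth of a block at $s$ reduces to truth of its descendant blocks at $s$ or (for $\nxt$ and the modal rules) at successors of~$s$. The consistency of configurations and the fact that finite plays in $L$ end in consistent literal sets guarantee that literals are interpreted correctly, so the base case and the propositional cases go through. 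For an $\expath\Pi$ block one uses the corresponding successor chosen by the $\XruleWithE$ branching to exhibit the required path; for an $\allpath\Sigma$ block one must argue that \emph{every} path out of $s$ satisfies the disjunction $\bigvee\Sigma$.

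The main obstacle, as always for least-fixpoint operators, is the eventuality/until case, and this is exactly where the winning condition enters. The local induction on $\mathrm{dur}$ breaks the semantic equivalence $Q(\psi_1\until\psi_2)\equiv Q(\psi_2\vee(\psi_1\wedge\nxt(\psi_1\until\psi_2)))$ only finitely, so I cannot conclude $\until$ from finitely many unfoldings alone: I must rule out an infinite deferral. Here I would trace, along a candidate path of $\Transsys_\sigma$ on which some $\psi_1\until\psi_2$ is claimed but never fulfilled, a corresponding trace in the underlying infinite play (using the connectedness relation $\leadsto$ and Lemmas~\ref{every trace form}, \ref{every thread form}, \ref{every UR-thread form} to see that the trace is a well-formed $\expath$- or $\allpath$-trace carrying a $\until$-thread). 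By Definition~\ref{def:trace}, \ref{def:thread} such a trace is \emph{bad}, contradicting that the play lies in~$L$ because $\sigma$ is winning. Thus every $\expath$-trace is good (no $\until$-thread, so $\expath\Pi$ blocks really are satisfied) and every $\allpath$-trace is good (it has an $\release$-thread, which is genuinely satisfied), and in particular no $\until$-eventuality is postponed forever. Symmetrically, for $\allpath\Sigma$ blocks the good $\allpath$-trace supplies, on every path, a thread whose formula is eventually satisfied, discharging the universal case. Combining the truth lemma applied to the initial block $\expath\vartheta$, I obtain $\Transsys_\sigma, s^* \models \vartheta$, so $\vartheta$ is satisfiable.
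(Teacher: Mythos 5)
Your overall plan---collapsing the strategy tree at the modal rules \XruleNoE{}/\XruleWithE{} into a transition system and reducing failure of satisfaction to the existence of a bad trace---is the paper's plan, and your handling of $\expath$-blocks is essentially sound, because those blocks are read conjunctively: every formula in the block must hold on the one path the strategy designates, and an infinitely deferred $\until$-formula there yields a $\until$-thread in an $\expath$-trace, which is bad by definition. The genuine gap is the universal case. Two of your steps fail there. First, you conclude that a trace ``carrying a $\until$-thread'' is bad ``by Definition''; that is the badness criterion for $\expath$-traces only. An $\allpath$-trace is bad iff it has \emph{no} $\release$-thread, and a single $\allpath$-trace can simultaneously carry $\until$-threads and $\release$-threads, so exhibiting a $\until$-thread proves nothing about an $\allpath$-trace. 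Second, the positive claim that ``the good $\allpath$-trace supplies, on every path, a thread whose formula is eventually satisfied'' is not provable as stated: a fixed play extends a given $\allpath$-block into \emph{many} traces (rules such as \ARrule{}, \AUrule{}, \AAndrule{} create two descendant blocks), and the existence of some $\release$-thread in each of them does not identify a true disjunct of $\bigvee\Sigma$ on the path at hand. Worse, the $\release$-formula sitting on such a thread need not itself be satisfied: $\varphi \release \psi$ requires $\psi$ at every unfolding, but after \ARrule{} the formula $\psi$ lives in the \emph{sibling} descendant $\allpath(\psi,\Sigma)$, which is again a disjunction, so its truth does not localize to $\psi$ on the particular path---this is exactly the circularity your truth lemma cannot break by induction on $\mathrm{dur}$.

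The missing idea is falsity propagation. Assume some path $\pi$ falsifies $\bigvee\Sigma$ and build the \emph{specific} trace that at every rule application continues with a descendant block that is still false on (the appropriate suffix of) $\pi$; this is what the paper's invariants \ref{soundness:trace:A}--\ref{soundness:trace:AR} encode. At an application of \ARrule{} the trace continues with $\allpath(\psi,\Sigma)$ whenever $\pi \not\models \psi$, so if the trace nevertheless follows $\nxt(\varphi \release \psi)$ forever then $\psi$ holds at every step, making $\varphi \release \psi$ true and contradicting falsity via Lemma~\ref{every UR-thread form}. Hence this particular trace has no $\release$-thread, is therefore a bad $\allpath$-trace, and contradicts the assumption that the play conforms to a winning strategy. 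Without this bookkeeping the universal case---which is the heart of the soundness proof---is not discharged.
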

\begin{proof}
We treat the winning strategy, say $\sigma$, as a tree $T$ with nodes $\mathcal V$ and a root $r$.
The nodes are labelled with configurations corresponding to the strategy.
Thus, labels which belong to player 0 have at most one successor.
Only a node which is the objective of the rule \XruleWithE{} can
have more successors.

Let $\mathcal S$ be those nodes which are leaves or on which the rules \XruleNoE{} or \XruleWithE{}
are applied.  The tree defines a transition system as described just before of
\refsubsection{sub:winning:cond}.
Formally, for any node $s$ let $\widehat s$ be 
the oldest descendants ---including~$s$--- of~$s$ in~$\mathcal S$. 
Since player~0 owns all configurations besides those which rule \XruleWithE{} can handle,
Lemma~\ref{modal rule infinitely often} ensures that this assignment is total.
The edge relation $\mathord{\to} \subseteq \mathcal S \times \mathcal S$ is defined as 
\[
  \{(t,\widehat s) \mid s \text{ is a child of } t \text{ in }T\} 
  \; \cup \; 
  \{(s,s) \mid s \text{ is a leaf in }T\}
  \,\text.
\]
The induced transition system is $\Transsys_\vartheta = (\mathcal S, \widehat r, \to, \ell)$ where
$\ell(s) = \mathcal C \cap \Prop$ for any $s \in S$ labelled with a configuration $\mathcal C$.
Note that $\Transsys_\vartheta$ is total.
In the following, we omit the transition system $\Transsys_\vartheta$ in front of the symbol~$\models$.
Moreover, we identify a node with its annotated configuration.

For the sake of a contradiction, assume that $\Transsys_\vartheta, \widehat r \not\models \expath\vartheta$.
We will show that the winning strategy $\sigma$ admits an infinite play which contains a bad trace.
For this purpose, we simultaneously construct a maximal play $\mathcal C_0,
\mathcal C_1, \ldots$ which conforms to $\sigma$, a maximal connected
sequence of blocks $Q_0\Gamma_0, Q_1\Gamma_1, \ldots$ in this play, and a partial sequence $\pi_i$ of paths in $\Transsys_\vartheta$ such that the following properties hold for all indices $i$ and for all formulas $\varphi$ and $\psi$.
\begin{enumerate}[label=($\Xi$\hbox{-}\arabic*), labelindent=\parindent, leftmargin=2.7em]
  \item \label{soundness:trace:E}
    If $Q_i = \expath$ then $\widehat{\mathcal C_i} \not\models \expath(\bigwedge\Gamma_i)$.
  \item \label{soundness:trace:EQ}
    If $Q_i = \expath$, the rule \EErule{} or \EArule{} is applied to $\mathcal C_i$ with
    $\expath\Gamma_i$ and $\varphi$ as principals, and $\widehat{\mathcal C_i} \not\models \varphi$,
    then $Q_i \Gamma_i = \varphi$.
  \item \label{soundness:trace:A}
    If $Q_i = \allpath$ then $\pi_i$ is defined, $\widehat{\mathcal C_i} = \pi_i(0)$,
    and  $\pi_i \not\models \bigvee\Gamma_i$.
  \item \label{soundness:trace:AX}
    If $Q_i = Q_{i+1} = \allpath$, and the rule \XruleNoE{} or \XruleWithE{}
    is applied to $\mathcal C_i$ 
    then $\pi_{i+1} = \pi_i^1$ holds.  %
  \item \label{soundness:trace:nonAX}
    If $Q_i = Q_{i+1} = \allpath$, and neither
    \XruleNoE{} nor \XruleWithE{} is applied to $\mathcal C_i$
    then $\pi_{i+1} = \pi_i$ holds.   %
  \item \label{soundness:trace:AR}
    If $Q_i\Gamma_i = \allpath(\varphi \release \psi, \Sigma)$,
    if the rule \ARrule{} is applied to $\mathcal C_i$ such that $\varphi \release \psi$
    and $Q_i\Gamma_i$ are principal,
    and if $Q_{i+1}\Gamma_{i+1} = \allpath(\varphi, \nxt(\varphi \release \psi), \Sigma)$
    then $\pi_i \models \psi$.
\end{enumerate}

\noindent The construction is straight forward. We detail the proof for some cases,
and thereto use formulas and notations as shown in Figure~\ref{fig:pretableaurules}.
As for the rule \EArule{}, if $\widehat{\mathcal C_i} \not\models \expath(\allpath \varphi \wedge \bigwedge\Pi)$
then $\widehat{\mathcal C_i} \not\models \allpath \varphi$ or 
$\widehat{\mathcal C_i} \not\models \expath(\bigwedge\Pi)$.
If the first case does not apply then the trace is continued with $\expath \Pi$.
Otherwise, $Q_{i+1}\Gamma_{i+1} = \allpath \varphi$ holds and $\pi_{i+1}$ is
an arbitrary path in $\Transsys_{\vartheta}$ which starts at $\widehat{\mathcal C_i}$ %
and which fulfills $\pi_{i+1} \not\models \varphi$.
As for the rule \ARrule{}, we have $\pi_i \not\models \varphi \release \psi \vee \bigvee\Sigma$.
Using that $\pi_i = \pi_{i+1}$ and an unrolling of the $\release$-operator, 
$\pi_{i+1} \not\models \psi$ or $\pi_{i+1} \not\models \varphi \vee \nxt(\varphi\release\psi)$ holds.
In the first case the trace is continued with $\allpath(\psi, \Sigma$), and with 
$\allpath(\varphi,\nxt(\varphi\release\psi),\Sigma)$ otherwise.
As for case of \XruleNoE{} and \XruleWithE{}, the constraints determine the successor uniquely.

Back to the main proof: if the play is finite the last configuration consists
of literals only.  On the other hand, the last block of the sequence $\Xi$
reaches this leaf.  Therefore, the play must be infinite.  In particular, the sequence $\Xi$ is
a trace, and by Lemma~\ref{every trace form} it is either an $\expath$- or an $\allpath$-trace.
\begin{description}
\item[Trace $\Xi$ is an $\expath$-trace]
Let $n$ be minimal such that $(Q_i \Gamma_i, Q_{i+1} \Gamma_{i+1})$ 
is not spawning for all $i \geq n$.  Therefore, all these quantifiers $Q_i$s are $\expath$,
and the set $\Gamma_n$ is a singleton.
By~$\pi$ we denote the subsequence of the play $(\mathcal C_i)_{i \geq n}$ 
which consists of nodes in $\mathcal S$ only. 
For a node $\mathcal C$ in the play, we write $\pi^{\mathcal C}$ to denote the
suffix of $\pi$ starting at $\widehat{\mathcal C}$.
The trace contains a thread $\xi_0, \xi_1, \ldots$ such that
\begin{enumerate}[label=($\xi$\hbox{-}\arabic*), labelindent=\parindent, leftmargin=*]
\item \label{item:soundness:thread:notmodels}
  $\pi^{\mathcal C_i} \not\models \xi_i$, and
\item \label{item:soundness:thread:ER} 
  if $\xi_i = \varphi \release \psi$, the rule \ERrule{} is applied to $\mathcal C_i$
  with $\expath \Gamma_i$ and $\xi_i$ as principals, and $\pi^{\mathcal C_i} \not\models \psi$,
  then $\xi_{i+1} = \psi$.
\end{enumerate}
for all $i \geq n$ and all formulas $\varphi$ and $\psi$.
Indeed, the thread can be constructed step by step.  Obviously, there is a sequence of 
connected formulas $\xi_0, \ldots \xi_n$ within
the trace because the set $\Gamma_n$ is a singleton.
The rules \EOrrule{}, \EAndrule{}, \EUrule{} and \ERrule{} clearly preserve 
the properties~\ref{item:soundness:thread:notmodels} and~\ref{item:soundness:thread:ER}.
As for the rule \ELitrule{}, the formula $\xi_i$ cannot be the principal literal because 
$\pi^{\mathcal C_i}$ is a countermodel of $\xi_i$ but the literal survives until 
the next application of the model rules which defines the first state of $\pi^{\mathcal C_i}$.
If the rule \EErule{} or \EArule{} is applied, the property~\ref{soundness:trace:EQ}
keeps $\xi_i$ from being the principal formula because the considered suffix is the trace is not spawning.

By Lemma~\ref{every thread form}, $\xi$ is either a $\until$- or an $\release$-thread.
In the first case, the thread~$\xi$ attests that the trace is bad although player 0 wins the play.
Otherwise, suppose that $\xi$ is an $\release$-thread.
By Lemma~\ref{every UR-thread form}, there are $m \geq n$
and formulas $\varphi$ and $\psi$ such that 
$\xi_m = \varphi \release \psi$, and
$\xi_i = \varphi \release \psi$ or $\xi_i = \nxt (\varphi \release \psi)$ for all $i \geq m$,
Along the play $(\mathcal C_i)_{i \geq m}$, between any two consecutive 
applications of the rules \XruleNoE{} or \XruleWithE{},
the rule \ERrule{} must have been applied with $\xi_i = \varphi \release \psi$ and $Q_i\Gamma_i$ as
principals for some $i\geq m$. 
The property~\ref{item:soundness:thread:ER} ensures that
$\pi^{\mathcal C_i} \models \psi$.
Since this is true for any such two consecutive applications, 
$\pi^{\mathcal C_i} \models \psi$ for all $i\geq m$.
Therefore, $\pi^{\mathcal C_m}$ models 
$\varphi \release \psi$.
But this situation contradicts the property~\ref{item:soundness:thread:notmodels}
for $i$ being one the infinity many positions on which the rule \ERrule{} is applied to $Q_i \Gamma_i$ and $\xi_i$.

\item[Trace $\Xi$ is an $\allpath$-trace]
It suffices to show that $\Xi$ is a bad trace. 
Suppose for the sake of a contradiction  
that $\Xi$ contains an $\release$-thread $(\xi_i)_{i \in \Nat}$.
Let $n \in \Nat$ and $\varphi, \psi \in \fl{\vartheta}$
such that $Q_i = \allpath$, $\xi_n = \varphi \release \psi$,
and $\xi_i = \varphi \release \psi$
or $\xi_i = \nxt (\varphi \release \psi)$ for all $i \geq n$,
cf.\ Lemma~\ref{every UR-thread form}.

Along the play $(\mathcal C_i)_{i \geq n}$, between any two consecutive 
applications of the rules \XruleNoE{} or \XruleWithE{},
the rule \ARrule{} must have been applied such that $\xi_i$ and $Q_i\Gamma_i$ are principal
for some $i\in \Nat$. In this situation, the formula~$\xi_i$ is $\varphi \release \psi$.
Because $\xi_{i+1}$ is either $\varphi \release \psi$ or $\nxt(\varphi \release \psi)$,
the following element, $Q_{i+1}\Gamma_{i+1}$, of the trace is 
$\allpath(\varphi, \nxt(\varphi \release \psi), \Sigma)$ for some
$\Sigma \subseteq \fl{\vartheta}$.
Hence, thanks to~\ref{soundness:trace:AR} we have $\pi_i \models \psi$.
Because the block quantifier remains $\allpath$, the properties~\ref{soundness:trace:AX}
and~\ref{soundness:trace:nonAX} show that $\pi_n^j \models \psi$ for all $j\in\Nat$.
Therefore, $\pi_n \models \varphi \release \psi$ holds.
As the formula $\varphi \release \psi$ is $\xi_n$, the path $\pi_n$ satisfies $\bigvee \Gamma_n$.
However, this situation contradicts the property~\ref{soundness:trace:A}.
Thus, the considered play contains $\Xi$ as a bad trace.
\qedhere
\end{description}
\end{proof}

\subsection{Completeness}
To show completeness, we need a witness for satisfiable $\expath$-formulas.
For this purpose, let $\Transsys = (\States,s^*,\to,\lambda)$ be a transition system, $s \in \States$ be a state and
$\psi$ be a formula such that $s \models \expath\psi$.  We may assume a well-ordering $\lhd_\Transsys$
on the set of paths in~$\Transsys$~\cite{Zermelo04}.
The \emph{minimal $s$-rooted path that satisfies $\psi$}
is denoted by $\xi_\Transsys(s,\psi)$ and fulfills the following properties: $\xi_\Transsys(s,\psi)(0) = s$,
$\xi_\Transsys(s,\psi) \models \psi$, and there is no path $\pi$ with $\pi \lhd_\Transsys \xi_\Transsys(s,\psi)$,
$\pi(0) = s$ and $\pi \models \psi$.

A \emph{$\Transsys$-labelled (winning) strategy}
is a (winning) strategy with every configuration being labelled with a state such that the root is labelled with $s^*$, and for
every $s$-labelled configuration and every $s'$-labelled successor configuration it holds that $s \to s'$ if the corresponding
rule application is $\XruleWithE$ or $\XruleNoE$ and $s=s'$ otherwise.

\begin{theorem}%
\label{thm:completeness}
Let $\vartheta \in \ctlstar$ be satisfiable. 
Then player 0 has a winning strategy for the satisfiability game $\mathcal{G}_{\vartheta}$.
\end{theorem}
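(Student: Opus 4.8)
The plan is to construct a winning strategy for player~0 directly from a model $\Transsys$ of $\vartheta$, using the model to resolve every choice the rules offer to player~0 and to supply a $\Transsys$-labelling that certifies the choices are sound. First I would fix a transition system $\Transsys = (\States, s^*, \to, \lambda)$ with $s^* \models \vartheta$, together with the well-ordering $\lhd_\Transsys$ and the minimal-path-witness function $\xi_\Transsys$ introduced just above. The guiding idea is to maintain the invariant that every configuration $\mathcal{C}$ appearing in the strategy is labelled with a state $s$ such that $s \models \mathcal{C}$, i.e.\ the state formula denoted by $\mathcal{C}$ holds at $s$; equivalently every $\allpath\Sigma$ block is witnessed by $s \models \allpath(\bigvee\Sigma)$, every $\expath\Pi$ block by $s \models \expath(\bigwedge\Pi)$, and every literal by $s$. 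I would then specify, rule by rule, how player~0 moves and how the labelling is extended so that the invariant is preserved.

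Most rules are deterministic from player~0's standpoint and preserve the invariant by the semantics: the Boolean rules $\AAndrule,\AOrrule,\EAndrule$ simply rewrite a block according to the De Morgan-style equivalences, while the branching rules $\ALitrule,\AArule,\AErule,\EOrrule,\EUrule,\ERrule$ require player~0 to pick the branch dictated by the model. For a disjunct under $\expath$ (rules $\EOrrule,\EUrule,\ERrule$) she follows the minimal witnessing path $\xi_\Transsys(s,\cdot)$ to decide which disjunct to keep; for the unfolding of an $\expath$-eventuality $\varphi\until\psi$ she chooses the left branch (satisfying $\psi$) exactly when the minimal witness path satisfies $\psi$ now, and the right branch (postponing) otherwise, and symmetrically uses the least time-to-fulfilment as a decreasing rank. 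At the modal rules $\XruleNoE$ and $\XruleWithE$ the labelled state $s$ advances to a successor $s'$ with $s \to s'$: for $\XruleNoE$ any $\to$-successor works because only $\allpath$-blocks survive and they hold along every successor by the semantics of $\allpath\nxt$, whereas for $\XruleWithE$ the chosen $\expath\Pi_1$ block dictates a successor on the witnessing path and the $\allpath$-blocks propagate to that same successor. The new labels are the states reached along the (minimal) witness paths, so the invariant is re-established at each successor.

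The main obstacle, as always for such logics, will be showing that the resulting strategy is \emph{winning}, i.e.\ that no play conforming to it contains a bad trace in the sense of Definition~\ref{def:thread}. For an $\expath$-trace I must rule out a $\until$-thread: the trace eventually stabilises into a single $\expath$-block (by Lemma~\ref{every trace form}) following one fixed witness path $\pi$ in $\Transsys$, and for each $\until$-formula carried in the thread I would attach the ordinal/natural-number rank ``distance to the state on $\pi$ where the right argument becomes true.'' Because player~0 always chooses the left ($\psi$-satisfying) branch as soon as $\pi$ reaches a $\psi$-state, this rank strictly decreases across each unfolding and cannot remain defined forever; hence the $\until$-formula is fulfilled and the thread is not a $\until$-thread, so the $\expath$-trace is good. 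For an $\allpath$-trace I must instead exhibit a $\release$-thread. Here the trace follows a path $\pi$ that, by construction of the labelling via properties analogous to \ref{soundness:trace:A}--\ref{soundness:trace:AR}, satisfies $\bigwedge$ of the negated disjuncts only up to the one disjunct guaranteed true by $\allpath$; tracking that true disjunct through the block yields a thread, and since along $\pi$ a satisfied $\release$-formula stays satisfied (its right argument holding at every state unless the left argument has already released it), the thread that gets unfolded infinitely often must be an $\release$-thread. The delicate points will be threefold: verifying that the chosen descendant formula is genuinely \emph{connected} in the sense of the $\leadsto$ relation so that it forms a legitimate thread; handling the spawning block connections at $\EErule,\EArule,\AArule,\AErule$ (which, by Lemma~\ref{every trace form}, occur only finitely often, so the eventual stable block is what matters); and checking that the finiteness of the play, when it occurs, lands in a consistent literal set, which follows from the invariant since an inconsistent labelled configuration would contradict $s \models \mathcal{C}$. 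Assembling these rank arguments into a single coherent invariant over all six properties is where the technical tedium, but no heavy machinery, resides.
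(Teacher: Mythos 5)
Your proposal follows essentially the same route as the paper's own proof: a $\Transsys$-labelled strategy built from the minimal witness paths $\xi_\Transsys$, the invariant $s \models \mathcal{C}$ at every labelled configuration, and a thread-tracking argument (with the fulfilment of $\until$-formulas along the witness path as the decreasing rank) showing that every trace in a conforming play is good. The only differences are organisational --- the paper phrases the trace analysis as a proof by contradiction and, in the $\allpath$-case, obtains the $\release$-thread by elimination via Lemma~\ref{every thread form} rather than by your (somewhat looser) direct appeal to persistence of satisfied $\release$-formulas --- but the substance is the same.
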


\begin{proof}
Let $\vartheta$ be a formula, $\Transsys = (\States,s^*,\to,\lambda)$ be a transition system, and $s^* \in \mathcal{S}$
be a state s.t.\ $\Transsys, s^* \models \expath\vartheta$.  In the following we may omit the system $\Transsys$ 
in front of the symbol~$\models$.

We inductively construct an $\States$-labelled strategy for player 0 as follows. Starting with the labelled configuration
$s^* : \expath\vartheta$, we apply the rules in an arbitrary but eligible ordering systematically by preserving
$s \models \Phi$ for every state-labelled configuration $s: \Phi$ and by preferring small formulas.  In particular,
the strategy is defined the following properties.  %
\begin{enumerate}[label=(S-\arabic*), labelindent=\parindent, leftmargin=2.7em]
\item \label{item:completeness:exists:Astate}
      If the rule application to follow $\Phi$ is $\ALitrule$, $\AErule$ or $\AArule$, with $\allpath(\psi, \Sigma)$ being
      the principal block in $\Phi$ and $\psi$ being the principal (state) formula, and $s \models \psi$,
      then the successor configuration of $\Phi$ follows $\psi$ and discards the original $\allpath$-block.
\item \label{item:completeness:exists:EU}
      If the rule application to follow $\Phi$ is $\EUrule$, with $\expath(\varphi \until \psi, \Pi)$ being
      the principal block in $\Phi$ and $\varphi \until \psi$ being the principal formula, then
      the successor configuration of $\Phi$ follows $\psi$ iff $\xi_\mathcal{T}(s, (\varphi \until \psi) \wedge \bigwedge \Pi) \models \psi$. 
\item If the rule application to follow $\Phi$ is $\EOrrule$, with $\expath(\psi_1 \vee \psi_2, \Pi)$ being
      the principal block in $\Phi$ and $\psi_1 \vee \psi_2$ being the principal formula, and
      $\xi_\mathcal{T}(s,(\psi_1 \vee \psi_2) \wedge \bigwedge \Pi) \models \psi_i$ for some $i \in \{1,2\}$, then the successor
      configuration of $\Phi$ follows $\psi_i$.
\item If the rule application to follow $\Phi$ is \XruleNoE{} and its successor configuration is labelled
      with a state $s'$ such that $s \to s'$ and successor configuration $\Phi'$ then player~0 labels this successor with the state $s'$.
\item \label{item:completeness:exists:E1}
      If player~1 applies the rule \XruleWithE{} to a configuration $\Phi$ which is labelled with a state $s$
      and obtains successor configuration $\expath\Pi, \Phi'$ then player~0 labels this successor with the state $\xi_\Transsys(s,\nxt\Pi)(1)$.
\end{enumerate}
Such a strategy exists because the property $s: \Phi$ can be maintained.
Note that every finite play ends in a node labelled with consistent literals only.
Clearly, player~0 wins such a play.

For the sake of contradiction, assume that player~0 does not win if she follows the strategy. Hence, there is an infinite labelled play
$s_0 : \Phi_0$, $s_1 : \Phi_1$, $\ldots$ (with $s_0 = s^*$ and $\Phi_0 = \expath\vartheta$) containing a bad trace $B_0, B_1, \ldots$.
We define a \emph{lift operation} $\widehat i$ that selects the next modal rule application as follows.
\begin{displaymath}
\widehat i := \min \{j \geq i \mid \Phi_j \textrm{ is the bottom of an application of $\XruleWithE$ or $\XruleNoE$}\}
\end{displaymath}
Due to Lemma~\ref{modal rule infinitely often}, $\widehat i$ is well-defined for every $i$.
Additionally, we define the \emph{modal distance} 
\begin{displaymath}
  \delta(i,k) := |\{j \mid i \leq j < k \text{ and } j = \widehat j\}|
\end{displaymath}
as well that counts the number of modal rule application between $i$ and $k$. Every position $i$ induces a generic path $\pi_i$ by
\begin{displaymath}
  \pi_i \colon j \mapsto s_{\min\{k \mid k \geq i \text{ and } \delta(i,k) =j\}}
\end{displaymath}
and note that the path $\pi_i$ indeed is well-defined for every $i$.

By Lemma~\ref{every trace form}, the bad trace is either an $\allpath$- or an
$\expath$-trace that is eventually not spawning, i.e.\ there is a position $n$ such that $B_i \equiv
\expath\Pi_i$ or $B_i \equiv \allpath\Sigma_i$ for all $i \geq n$ with $(B_i, B_{i+1})$ being not spawning. Let
$n$ be the least of such kind.

Next, the bad trace gives rise to a $\until$-thread in it that is satisfied by
the transition system. For this purpose we construct a $\until$-thread $\phi_0, \phi_1, \ldots$ in $B_0, B_1, \ldots$ 
such that all $i \geq n$ satisfy the following properties.

\begin{enumerate}[label=($\phi$\hbox{-}\arabic*), leftmargin=4em]
\item \label{item:completeness:Phi:model} $\pi_i \models \phi_i$.
\item \label{item:completeness:Phi:subf}
  For all formulas $\varphi$ and $\psi$ we have:
  If $\phi_i = \varphi \until \psi$, $\phi_i \not= \phi_{i+1}$ and if $\pi_i \models \psi$, then $\phi_{i+1} = \psi$.
\end{enumerate}
The construction of the thread depends on the kind of the trace.

\begin{description}
\item[Trace $B_0,B_1,\ldots$ is an $\expath$-trace]
The paths $\pi_i$ and $\xi_\Transsys(s_i, \Pi_i)$ coincide for all $i \geq n$ for two reasons.
First, whenever a rules besides \XruleNoE{} and \XruleWithE{} justifies the move from the configuration 
$\Phi_i$ to $\Phi_{i+1}$ for $i \geq n$, then $\xi_\Transsys(s_i, \Pi_i)$ and $\xi_\Transsys(s_{i+1}, \Pi_{i+1})$
are equal.
Second, this $\expath$-trace overcomes the application of the rules~\XruleNoE{} and \XruleWithE{}.
Thus, the minimal paths $\xi_\Transsys$ define the labels $s_n, s_{n+1}, \ldots$ 
and, in this way, the paths $\pi$.

Since $n$ is the \emph{least} index s.t.\ $(B_i, B_{i+1})$ is not spawning for all $i \geq n$, the set $\Pi_n$ has
to be a singleton. Define $\phi_n$ to be the single formula in $\Pi_n$.
Because $s_n \models \expath\Pi_n$, the path $\xi_\Transsys(s_n, \Pi_n)$
satisfies $\phi_n$.

As the trace is assumed to be bad, it contains a $\until$-thread, say $\phi_0, \phi_1, \ldots$.
The construction of the strategy ensures that
$\xi_\mathcal{T}(s_i, \Pi_i) \models \phi_i$ for all $i \geq n$.  Hence, $\pi_i \models \phi_i$.
Additionally, the constraint~\ref{item:completeness:exists:EU} yields the property~\ref{item:completeness:Phi:subf}.

\item[Trace $B_0,B_1,\ldots$ is an $\allpath$-trace]
Since $n$ is the least index such that $(B_i, B_{i+1})$ is not spawning $i \geq n$, the set $\Sigma_n$ has
to be a singleton. Define $\phi_n$ to be the single formula in $\Sigma_n$.

For $i\geq n$ the formula $\phi_{i+1}$ bases on $\phi_i$ as follows.
If $\widehat i = i$, that is, one of the modal rules \XruleNoE{} and \XruleWithE{} is to be applied next, 
then set $\phi_{i+1} = \phi'$ where $\phi_i = \nxt(\phi')$ for some formula $\phi'$.
Otherwise, $\widehat i \neq i$ holds. If $B_i$ or $\phi_i$ is not principal in the
rule instance then set $\phi_{i+1} := \phi_i$.  Because $(B_i, B_{i+1})$ is not spawning, 
$\phi_{i+1}$ belongs to $B_{i+1}$.  Otherwise, $B_i$ and $\phi_i$ are principal.
The formula $\phi$ is neither a literal nor an $\expath$- nor an $\allpath$-formula, 
because otherwise the property~\ref{item:completeness:exists:Astate} together with $\pi_i \models \phi_i$ would entail 
the end of this sequence of blocks or would show that the connection $(B_i, B_{i+1})$ is spawning.
Thus, the applied rule is either \ARrule{}, \AUrule{}, \AAndrule{} or \AOrrule{}.
If $\phi_i = \psi_1 \release \psi_2$ let $\phi_{i+1}$ be one of the successors $\psi'$ of $\phi_i$ contained in $B_{i+1}$ with 
$\pi_i \models \psi'$ and note that there is at least one. If
$\phi_i = \varphi \until \psi$, then set $\phi_{i+1} := \psi$ iff $\pi_i \models \psi$
and, otherwise, set $\phi_{i+1}$ to the other successor, that is $\varphi$ or $\nxt(\varphi \until \psi)$, of $\phi_i$ in $B_{i+1}$. 
Finally, if $\phi_i = \psi_1 \wedge \psi_2$ or $\phi_i = \psi_1 \vee \psi_2$, then set $\phi_{i+1} := \psi_k$
s.t.\ $\psi_k$ is connected to $\phi_i$ in $B_{i+1}$ and $\pi_i \models \psi_k$.

Putting suitable formulas in front of the sequence $\phi_n, \phi_{n+1}, \ldots$ entails
a thread in the trace $B_0, B_1, \ldots, B_n, B_{n+1}, \ldots$.  By assumption the trace is bad and by Lemma~\ref{every thread form},
the thread is a $\until$-thread.

\end{description}

\noindent Since $\phi_0, \phi_1, \ldots$ is a $\until$-thread, there are formulas $\varphi_0$ and $\varphi_1$ such that
$\phi_i = \varphi_0 \until \varphi_1$ for infinitely many indices $i$.
The set
\[
   A := \{i > n \mid \phi_{i-1} = \nxt(\varphi_1 \until \varphi_2) \textrm{ and } \phi_i = \varphi_1 \until \varphi_2 \}
\]
is infinite by Lemma~\ref{modal rule infinitely often} and~\ref{every UR-thread form}.
Let $i_0<i_1<\ldots$ be the ascending enumeration of $A$.
Between every two immediately consecutive elements either the rule $\XruleWithE$ or $\XruleNoE$  
is applied exactly once.   %
Therefore, $\pi_{i_j}^1 = \pi_{i_{j+1}}$ for all indices $j \geq 0$.
By property~\ref{item:completeness:Phi:model} we have 
$\pi_{i_0} \models \varphi_1 \until \varphi_2$.  Hence, there is a $k \geq 0$ such that 
$\pi_{i_0}^k \models \varphi_2$. 
In particular, $\pi_{i_k} \models \varphi_2$ and so $\pi_{i_k} \models \varphi_1 \until \varphi_2$.
For some $\ell$ between $i_k$ and $i_{k+1}$ the formula $\phi_\ell$ must be turned from 
$\varphi_1 \until \varphi_2$ into $\nxt(\varphi_1 \until \varphi_2)$
to finally pass the application of \XruleNoE{} and \XruleWithE{} at position $i_{k+1}-1$.  
However, the property~\ref{item:completeness:Phi:subf} shows that $\phi_\ell$ is just $\varphi_2$.
\end{proof}

\section{A Decision Procedure for \ctlstar}
\label{sec:decproc}

\subsection{Using Deterministic Automata to Check the Winning Condition}
\label{subsec:win with det automata}

Plays can be represented as infinite words over a certain alphabet, and we will show that the
language of plays that are won by player 0 is $\omega$-regular, i.e.\ recognisable by a nondeterministic B\"uchi automaton
for instance. 

The goal is then to replace the global condition on plays of having only good traces by an 
annotation of the game configurations with automaton states and a global condition on these states.
For instance, if the resulting automaton was of B\"uchi type, then the game would become a B\"uchi 
game: in order to solve the satisfiability game it suffices to check whether player 0 has a winning
strategy in the game with the annotations in the sense that she can enforce plays which are 
accepted by the B\"uchi automaton for the annotations. 

Now note that the automaton recognising such plays needs to be deterministic: suppose there
are two plays $uv$ and $uw$ with a common prefix $u$ s.t.\ both are accepted 
by an automaton $\mathcal{A}$. If $\mathcal{A}$ is nondeterministic then it may have two different
accepting runs on $uv$ and $uw$ that differ on the common prefix $u$ already. This could be resolved
by allowing two annotations on the nodes of the common prefix, but an infinite tree can have infinitely
many branches and it is not clear how to bound the number of needed annotations. However, if
$\mathcal{A}$ is deterministic then it has a unique run on the common prefix, and an annotation with
a single state of a deterministic automaton suffices.

It is known that every $\omega$-regular language can be recognised by a deterministic 
Muller \cite{IC::McNaughton1966}, Rabin~\cite{FOCS88*319} or parity automaton~\cite{conf/lics/Piterman06}.
A simple consequence of the last result is the fact that every game with an $\omega$-regular
winning condition can be reduced to a parity game. Thus, we could simply show that the winning
conditions of the satisfiability games of \refsection{sec:tableaux} are $\omega$-regular and
appeal to this result as well as known algorithms for solving parity games in order to have a
decision procedure for \ctlstar. While this does not seem avoidable entirely, it turns out that
the application of this technique, which is not very efficient in practice, can be reduced to a
minimum. The rest of this section is devoted to the analysis of the satisfiability games' winning 
conditions as a formal and $\omega$-regular language with a particular focus on the question of
determinisability. 

In our proposed reduction to parity games we will use annotations with states from two different 
deterministic automata: one checks that all $\expath$-traces are good, the other one checks that all 
$\allpath$-traces are good. The reason for this division is the fact that the former check is much 
simpler than the latter. It is possible to directly define a deterministic automaton that checks for 
absence of a bad $\expath$-trace. It is not clear at all though, how to directly define a deterministic 
automaton that checks for absence of a bad $\allpath$-trace. We therefore use nondeterministic automata 
and known constructions for complementing and determinising them. The next part recalls the automata theory
that is necessary for this, and in particular shows how these two automata used in the annotations
can be merged into one. 

\subsection{B\"uchi, co-B\"uchi and Parity Automata on Infinite Words}

We will particularly need the models of B\"uchi, co-B\"uchi and parity automata~\cite{lncs2500}. 

\begin{definition}
A \emph{nondeterministic parity automaton} (NPA) is a tuple
$\mathcal{A} = (Q,\Sigma,q_0,\delta,\Omega)$ with $Q$ being a finite set of \emph{states}, $\Sigma$ a 
finite \emph{alphabet}, $q_0 \in Q$ an \emph{initial state}, $\delta \subseteq Q \times \Sigma \times Q$
the \emph{transition relation} and $\Omega: Q \to \mathbb{N}$ a priority function. A \emph{run} of 
$\mathcal{A}$ on a $a_0 a_1 \ldots \in \Sigma^\omega$ is an infinite sequence $q_0,q_1,\ldots$ s.t.\ 
$(q_i,a_i,q_{i+1}) \in \delta$ for all $i \in \mathbb{N}$. It is accepting if $\max \{ \Omega(q) \mid q = q_i$ 
for infinitely many $i \in \mathbb{N} \}$ is even, i.e.\ if the maximal priority of a state that is seen
infinitely often in this run is even. The \emph{language} of the NPA $\mathcal{A}$ is $L(\mathcal{A}) = \{ w \mid$
there is an accepting run of $\mathcal{A}$ on $w \}$.
The \emph{index} of an NPA $\mathcal{A}$ is the number of different priorities occurring in it, i.e.\ 
$|\Omega[Q]|$. The \emph{size} of $\mathcal{A}$, written as $|\mathcal{A}|$, is the number of its states.

\emph{Nondeterministic B\"uchi} and \emph{co-B\"uchi automata} (NBA / NcoBA) are special cases of NPA.
An NBA is an NPA as above with $\Omega: Q \to \{1,2\}$, and an NcoBA is an NPA with $\Omega: Q \to \{0,1\}$.
Hence, an accepting run of an NBA has infinitely many occurrences of a state with priority $2$, and an
accepting run of an NcoBA has almost only occurrences of states with priority $0$. Traditionally, in an NBA 
the states with priority $2$ are called the \emph{final set}, and one defines an NBA as 
$(Q,\Sigma,q_0,\delta,F)$ where, in our terminology, $F := \{ q \in Q \mid \Omega(q) = 2 \}$. An NcoBA can equally
defined with an acceptance set $F$ rather than a priority function $\Omega$, but then $F := \{q \in Q \mid \Omega(q) = 0 \}$. 

An NPA / NBA / NcoBA with transition relation $\delta$ is \emph{deterministic} (DPA / DBA / DcoBA) 
if $|\{ q' \mid (q,a,q') \in \delta \}| = 1$ for all $q \in Q$ and $a \in \Sigma$.
In this case we may view $\delta$ as function from $Q \times \Sigma$ into $Q$.
\end{definition}

Determinism and the duality between B\"uchi and co-B\"uchi condition as well as the self-duality
of the parity acceptance condition makes it easy to complement a DcoBA to a DBA as well as a DPA 
to a DPA again. The following is a standard and straight-forward result~\cite[Section~1.2]{lncs2500} in the theory of
$\omega$-word automata.

\begin{lemma}
\label{lem:dpacomplement}
For every DcoBA, resp.\ DPA, $\mathcal{A}$ there is a DBA, resp.\ DPA, $\overline{\mathcal{A}}$ with 
$L(\overline{\mathcal{A}}) = \overline{L(\mathcal{A})}$ and $|\overline{\mathcal{A}}| = |\mathcal{A}|$.
\end{lemma}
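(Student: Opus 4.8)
The plan is to exploit the fact that a deterministic automaton has a \emph{unique} run on each input word, so that a word is rejected precisely when its single run fails the acceptance condition. This reduces complementation to negating the acceptance condition on the very same transition structure, which can be achieved by a uniform priority shift. I would treat both cases (DPA and DcoBA) together, since the co-B\"uchi-to-B\"uchi instance is just the special case of the parity construction restricted to two priorities.

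Concretely, given $\mathcal{A} = (Q,\Sigma,q_0,\delta,\Omega)$ I would define $\overline{\mathcal{A}} := (Q,\Sigma,q_0,\delta,\overline{\Omega})$ with $\overline{\Omega}(q) := \Omega(q)+1$, leaving the states, alphabet, initial state and the (deterministic) transition function untouched. This immediately gives $|\overline{\mathcal{A}}| = |\mathcal{A}|$ and preserves determinism. For the type constraint: if $\mathcal{A}$ is a DPA then so is $\overline{\mathcal{A}}$; if $\mathcal{A}$ is a DcoBA, i.e.\ $\Omega\colon Q \to \{0,1\}$, then $\overline{\Omega}\colon Q \to \{1,2\}$, so $\overline{\mathcal{A}}$ is a DBA, as required.

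The correctness argument is then short. Fix $w \in \Sigma^\omega$ and let $q_0,q_1,\ldots$ be the unique run of $\mathcal{A}$ on $w$; by determinism this is also the unique run of $\overline{\mathcal{A}}$ on $w$. Let $p$ be the maximal priority (w.r.t.\ $\Omega$) occurring infinitely often along this run. Then the maximal priority w.r.t.\ $\overline{\Omega}$ occurring infinitely often is exactly $p+1$, which is even iff $p$ is odd. Hence the run is accepting in $\overline{\mathcal{A}}$ iff it is rejecting in $\mathcal{A}$. Because each automaton has only this one run on $w$, I conclude $w \in L(\overline{\mathcal{A}})$ iff $w \notin L(\mathcal{A})$, i.e.\ $L(\overline{\mathcal{A}}) = \overline{L(\mathcal{A})}$.

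The only point requiring care --- and the place where the statement would fail for nondeterministic automata --- is the appeal to uniqueness of runs. For a nondeterministic automaton, membership is an existential condition over runs, so negating the per-run acceptance condition does not complement the language: the priority shift would instead yield an automaton accepting words on which \emph{some} run is dominated by an odd priority, rather than those on which \emph{all} runs are. Thus I expect no genuine obstacle beyond stating explicitly that determinism is what licenses the construction, and checking the parity and B\"uchi priority ranges after the shift.
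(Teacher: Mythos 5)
Your proof is correct and is exactly the standard argument the paper has in mind: the paper gives no proof of this lemma, merely citing it as a standard result, and the uniform priority shift on the unchanged deterministic transition structure (with the observation that $\{0,1\}$ becomes $\{1,2\}$, turning a DcoBA into a DBA) is the textbook construction, with determinism correctly identified as the point that licenses negating the per-run acceptance condition.
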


In order to be able to turn presence of a bad trace---which may be easy to recognise using a nondeterministic
automaton---into absence of such which is required by the winning condition, we need complementation of 
nondeterministic automata as well. Luckily, an NcoBA can be determinised into a DcoBA using the 
Miyano-Hayashi construction~\cite{Miyano:1984:AFA} which can easily be complemented into a DBA
according to Lemma~\ref{lem:dpacomplement}.

\begin{theorem}[\cite{Miyano:1984:AFA}]
\label{thm:ncoba2dba}
For every NcoBA $\mathcal{A}$ with $n$ states there is a DBA $\overline{\mathcal{A}}$ with at most $3^n$ states 
s.t.\ $L(\overline{\mathcal{A}}) = \overline{L(\mathcal{A})}$.
\end{theorem}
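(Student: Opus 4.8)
The plan is to obtain the desired DBA in two stages, following the route indicated by the surrounding text: first determinise $\mathcal{A}$ into a \emph{deterministic} co-B\"uchi automaton $\mathcal{B}$ of size at most $3^n$ with $L(\mathcal{B}) = L(\mathcal{A})$, and then appeal to Lemma~\ref{lem:dpacomplement} to complement $\mathcal{B}$ into a DBA $\overline{\mathcal{B}}$ with $L(\overline{\mathcal{B}}) = \overline{L(\mathcal{B})} = \overline{L(\mathcal{A})}$ and the same number of states. Thus the real work is the determinisation of the co-B\"uchi automaton, which I would carry out by the breakpoint (subset) construction. Write $F := \{q \mid \Omega(q) = 0\}$ for the ``good'' states, so that $\mathcal{A}$ accepts a word iff it has some run that eventually stays inside $F$; and for a set $P \subseteq Q$ and a letter $a$ let $\delta(P,a) := \{q' \mid \exists q \in P, (q,a,q') \in \delta\}$ denote the ordinary subset-construction successor.

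The states of $\mathcal{B}$ are pairs $(S,O)$ with $O \subseteq S \subseteq Q$, where $S$ tracks all states reachable on the prefix read so far and $O$ tracks the \emph{owing} states, i.e.\ those reachable by a run-prefix that has stayed inside $F$ since the last breakpoint. The initial state is $(\{q_0\}, \{q_0\} \cap F)$, and the transition on a letter $a$ sets $S' := \delta(S,a)$ together with
\[
  O' := \begin{cases} \delta(S,a) \cap F & \text{if } O = \emptyset, \\ \delta(O,a) \cap F & \text{if } O \neq \emptyset. \end{cases}
\]
A configuration with $O = \emptyset$ is called a \emph{breakpoint}; I make these the priority-$1$ states of $\mathcal{B}$ and all others priority-$0$, so that $\mathcal{B}$, read as a DcoBA, accepts exactly when $O$ is non-empty from some point onwards, i.e.\ when the run passes through only finitely many breakpoints. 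Each $q \in Q$ is independently outside $S$, in $S \setminus O$, or in $O$, so there are at most $3^n$ reachable states, which gives the size bound immediately.

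It then remains to prove $L(\mathcal{B}) = L(\mathcal{A})$. For $L(\mathcal{A}) \subseteq L(\mathcal{B})$ I would take an accepting run $\rho = q_0 q_1 \ldots$ of $\mathcal{A}$ that stays in $F$ from some position $k$ on; if there are infinitely many breakpoints then, choosing the first one past $k$, an easy induction shows that the tracked state $q_i$ belongs to $O_i$ from that point forever, so $O_i$ never empties again --- contradiction --- whence there are only finitely many breakpoints and $\mathcal{B}$ accepts. The converse, and the genuinely delicate step, is to reconstruct an accepting run of $\mathcal{A}$ from a $\mathcal{B}$-run with only finitely many breakpoints: beyond the last breakpoint the sets $O_N, O_{N+1}, \ldots$ are non-empty, lie in $F$, and by construction every state of $O_{i+1}$ has an $a_i$-predecessor in $O_i$. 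I would view these backward edges as an infinite, finitely branching tree --- each level has at most $|Q| = n$ nodes --- and apply K\"onig's Lemma to extract an infinite forward chain $q'_N, q'_{N+1}, \ldots$ with $q'_i \in O_i \subseteq F$ and $(q'_i, a_i, q'_{i+1}) \in \delta$; prepending any prefix reaching $q'_N \in S_N$ yields a run of $\mathcal{A}$ on $w$ that visits $Q \setminus F$ only finitely often, so $w \in L(\mathcal{A})$. The main obstacle is precisely arranging the breakpoint bookkeeping --- the initialisation, the reset rule, and the exact acceptance condition --- so that this K\"onig-Lemma extraction goes through cleanly; once $L(\mathcal{B}) = L(\mathcal{A})$ is established, Lemma~\ref{lem:dpacomplement} finishes the argument.
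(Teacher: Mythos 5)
Your proof is correct and follows exactly the route the paper intends: the theorem is quoted from Miyano--Hayashi, and the surrounding text explicitly describes obtaining it by determinising the NcoBA into a $3^n$-state DcoBA via the breakpoint construction and then complementing with Lemma~\ref{lem:dpacomplement}; your $(S,O)$-construction, the $3^n$ count, and the two inclusions (induction on the tracked accepting run for one direction, K\"onig's Lemma on the non-empty $O$-sets for the other) are all the standard and correct details of that argument.
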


NBA cannot be determinised into DBA, but into automata with stronger acceptance conditions 
\cite{FOCS88*319,conf/lics/Piterman06,conf/icalp/KahlerW08,conf/fossacs/Schewe09}. We are particularly 
interested in constructions that yield parity automata.

\begin{theorem}[\cite{conf/lics/Piterman06}]
\label{thm:nba2dpa}
For every NBA with $n$ states there is an equivalent DPA with at most $n^{2n+2}$ states and index at most 
$2n-1$.
\end{theorem}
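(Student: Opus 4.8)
The plan is to prove this by generalising the subset construction so that the deterministic automaton tracks the entire tree of runs of the NBA, organised in a way that lets the event ``all tracked runs have just passed through the final set'' be detected and read off as a parity condition. I would follow Safra's tree construction in the form refined by Piterman, since the stated bounds ($n^{2n+2}$ states, index $2n-1$) are exactly his.

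First I would take the NBA to be $\mathcal{A} = (Q,\Sigma,q_0,\delta,F)$ with $|Q| = n$ and $F = \{q \mid \Omega(q) = 2\}$, and define the state space of the DPA to consist of labelled, ordered trees (\emph{Safra trees}) over $\mathcal{A}$. Each node carries a label in $2^Q$ together with a name from a fixed pool, and I would maintain the invariants that the root's label is the set of currently reachable states, that a child's label is a proper subset of its parent's label, and that the labels of siblings are pairwise disjoint. Intuitively a node named $v$ collects exactly those runs that have passed through $F$ since $v$ was created, so a run's depth in the tree records how many such rounds it has completed. Since labels strictly decrease along every root-to-leaf path and siblings are disjoint, every tree has at most $n$ nodes, and counting the labelled, named trees yields the claimed $n^{2n+2}$ bound on the number of DPA states.

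Next I would define the transition on trees for a letter $a$ in four phases: (i) apply the powerset transition induced by $\delta$ and $a$ to every label; (ii) spawn, as the youngest child of each node, a fresh node whose label is the intersection of the updated parent label with $F$; (iii) perform a \emph{horizontal merge}, deleting each state from all but the oldest node in which it occurs so as to restore disjointness; and (iv) perform a \emph{vertical merge / marking}: whenever a node's label becomes the union of its children's labels---signalling that every run it tracks has just traversed $F$---delete all its descendants and \textbf{mark} that node, while nodes whose label becomes empty are removed and their names freed. The crucial design point, and Piterman's improvement over Safra, is a dynamic renaming discipline on nodes that lets the Rabin-style condition ``marked infinitely often, removed only finitely often'' be expressed as a single parity condition: I would assign priorities so that marking a node yields an even value and removing it yields an odd value, with nodes of smaller rank dominating. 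Because at most $n$ names are ever in play, this yields index at most $2n-1$.

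Finally I would prove $L(\text{DPA}) = L(\mathcal{A})$. For completeness, given an accepting run of $\mathcal{A}$, I would show its states remain tracked along some eventually stable node that is marked infinitely often, so the unique DPA run is accepting. For soundness, given an accepting DPA run, some node of minimal rank survives from some point on and is marked infinitely often; by a K\"onig's-lemma argument on the sequence of trees I would extract an infinite run of $\mathcal{A}$ that passes through $F$ between consecutive markings, hence infinitely often. I expect the soundness direction to be the main obstacle: one must verify that the marking of a node genuinely certifies that \emph{every} run it tracks has passed through $F$ since the previous marking, and then stitch together the local run fragments across infinitely many markings into one global accepting run. Arranging the dynamic naming so that it produces a clean parity condition rather than the more natural Rabin condition, while keeping the index down to $2n-1$, is the other delicate point, and is precisely where Piterman's refinement is indispensable.
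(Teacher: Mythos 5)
The paper does not actually prove this statement --- it is imported verbatim from Piterman's determinisation result \cite{conf/lics/Piterman06} --- and your sketch is a faithful outline of exactly that construction: compact Safra trees with labels that nest strictly and are disjoint across siblings, the four-phase transition with horizontal and vertical merges, marking on vertical merges, and the dynamic renaming that converts the Rabin-style ``marked infinitely often, deleted finitely often'' condition into a parity condition of index $2n-1$ over roughly $n^{2n+2}$ trees. Your proposal therefore matches the intended source; the only caveat is that, as you yourself note, the genuinely delicate steps (the invariant that a marking certifies every tracked run has revisited $F$, and the verification that renaming preserves this across deletions) are deferred rather than carried out.
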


For the decision procedure presented below we also need a construction that intersects a deterministic
B\"uchi and a deterministic parity automaton. This will allow us to consider absence of bad $\expath$- and bad
$\allpath$-traces separately.

\begin{lemma}
\label{lem:bucparintersect}
For every DBA $\mathcal{A}$ with $n$ states and DPA $\mathcal{B}$ with $m$ states and index $k$ there is a 
DPA $\mathcal{C}$ with at most $n\cdot m \cdot k$ many states and index at most $k+1$ s.t.\ 
$L(\mathcal{C}) = L(\mathcal{A}) \cap L(\mathcal{B})$.
\end{lemma}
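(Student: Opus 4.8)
The plan is to build $\mathcal C$ as a synchronous product of $\mathcal A$ and $\mathcal B$ enriched with a small counter, and to fuse the B\"uchi acceptance of $\mathcal A$ with the parity acceptance of $\mathcal B$ into a single parity condition. The essential difficulty is that a parity condition speaks about the \emph{maximal} priority seen infinitely often, whereas a B\"uchi condition is a conjunctive requirement (``visit the final set again and again''); naively taking the maximum of the two priority assignments fails, because a high even priority of $\mathcal B$ would still be accepted even when $\mathcal A$'s final states dry up. The counter is precisely what repairs this: it lets an even priority of $\mathcal B$ count as \emph{good} only while $\mathcal A$ keeps visiting its final set, and it does so at the cost of a single extra priority.

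Concretely, I would first normalise $\mathcal B$ by the standard parity compression so that its priorities form the range $\{0,1,\dots,k-1\}$; this preserves the language and does not increase the index. Writing $\mathcal A=(Q_A,\Sigma,q_0^A,\delta_A,F_A)$ as a DBA with final set $F_A$ and $\mathcal B=(Q_B,\Sigma,q_0^B,\delta_B,\Omega_B)$, I set the state space of $\mathcal C$ to $Q_A\times Q_B\times\{0,\dots,k-1\}$, so that $|\mathcal C|\le n\cdot m\cdot k$. The third component $c$ is meant to store the maximal $\mathcal B$-priority encountered since the last visit to $F_A$ (the ``running maximum of the current segment''). The transition function runs $\delta_A$ and $\delta_B$ in lockstep and updates $c$ to $0$ right after a final state of $\mathcal A$ and to $\max(c,\Omega_B(r))$ otherwise. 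The priority of a state $(q,r,c)$ is $1$ when $q\notin F_A$ (a \emph{pending} segment) and $\max(c,\Omega_B(r))+2$ when $q\in F_A$ (a \emph{commit} closing off a segment). Thus the priorities used are exactly $\{1\}\cup\{2,\dots,k+1\}$, giving index at most $k+1$, and the whole assignment is a genuine state-based priority function because it reads only the current state.

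For correctness I would run the unique (deterministic) run of $\mathcal C$ on a word $w$ and split on whether $\mathcal A$ visits $F_A$ infinitely often. If not, then from some point on every state is pending and emits $1$, while commits occur only finitely often, so the maximal priority seen infinitely often is the odd value $1$ and $\mathcal C$ rejects---correctly, since $w\notin L(\mathcal A)$. If $F_A$ is visited infinitely often, the run decomposes into infinitely many segments, each closed by a commit emitting $M_i+2$, where $M_i$ is the maximal $\mathcal B$-priority inside segment $i$. The pivotal observation is that $\limsup_i M_i$ equals the maximal $\mathcal B$-priority seen infinitely often along the run: a priority occurs in infinitely many segments iff it is seen infinitely often. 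Since every commit is at least $2$ and dominates the pending value $1$, the maximal priority of $\mathcal C$ seen infinitely often is $(\limsup_i M_i)+2$, whose parity matches that of the maximal $\mathcal B$-priority seen infinitely often. Hence $\mathcal C$ accepts iff $\mathcal B$ does; combined with $w\in L(\mathcal A)$ this yields $L(\mathcal C)=L(\mathcal A)\cap L(\mathcal B)$.

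The step I expect to be the crux is the design of the priority function rather than its verification: one must ensure that even priorities of $\mathcal B$ are honoured only when the B\"uchi requirement is being met, yet that doing so inflates the index by just one (to $k+1$) rather than doubling it. The ``shift-by-two with a pending priority of $1$'' together with the running-maximum counter is exactly what achieves this, and checking the two segment edge cases (the positions before the first final state, and runs of consecutive final states) is the only remaining routine bookkeeping.
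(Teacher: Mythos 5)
Your construction is exactly the one the paper uses: a deterministic product of $\mathcal{A}$ and $\mathcal{B}$ whose third component records the maximal $\mathcal{B}$-priority since the last visit to $\mathcal{A}$'s final set, emitting priority $1$ on pending states and the recorded maximum shifted by $2$ on final states, with the same segment-decomposition argument for correctness. The only differences are cosmetic bookkeeping (you normalise $\mathcal{B}$'s priorities and fold the current priority into $\Omega'$ rather than into the counter), so the proposal matches the paper's proof.
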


\begin{proof}
Let $\mathcal{A} = (Q_1,\Sigma,q^0_1,\delta_1,F)$ and $\mathcal{B} = (Q_2,\Sigma,q^0_2,\delta_2,\Omega)$.
Define $\mathcal{C}$ as
$(Q_1 \times Q_2 \times \Omega[Q_2], \Sigma, (q^0_1,q^0_2,\Omega(q^0_2)), \delta, \Omega')$
where
\begin{displaymath}
\delta\big((q_1,q_2,p),a\big) \enspace := \enspace 
\begin{cases}
\big(\delta_1(q_1,a),\delta_2(q_2,a),\Omega(\delta_2(q_2,a))) &, \text{ if } q_1 \in F \\
\big(\delta_1(q_1,a),\delta_2(q_2,a),\max \{p, \Omega(\delta_2(q_2,a)) \}\big) &, \text{ if } q_1 \not\in F
\end{cases}
\end{displaymath}
Note that $\mathcal{C}$ simulates two runs of $\mathcal{A}$ and $\mathcal{B}$ in parallel on a word $w \in \Sigma^\omega$,
and additionally records in its third component the maximal priority that has been seen in $\mathcal{B}$'s run since the
last visit of a final state in the run of $\mathcal{A}$ if it exists. Thus, in order to determine whether or not both simulated runs
are accepting it suffices to examine the priorities at those positions at which the $\mathcal{A}$-component is visiting a final
state. In all other cases we choose a low odd priority. 
\begin{displaymath}
\Omega'(q_1,q_2,p) \enspace := \enspace 
\begin{cases}
p+2 &, \text{ if } q_1 \in F \\
1 &, \text{ if } q_1 \not\in F
\end{cases}
\end{displaymath}
Then the highest priority occurring infinitely often in a run of
$\mathcal{C}$ is even iff so is the one in the simulated run of $\mathcal{B}$ and $\mathcal{A}$ visits infinitely many final
states at the same time.

It should be clear that the number of states in $\mathcal{C}$ is bounded by $n\cdot m \cdot k$, and that it uses at most
one priority more than $\mathcal{B}$.
\end{proof}

To define an automaton which checks the absence of bad $\allpath$-traces, we need 
the intersection of B\"uchi with co-B\"uchi automata as well as alphabet projections of B\"uchi automata.

\begin{lemma}
\label{lem:DBAintersectDcoBA}
For every DBA $\mathcal A$ with $n$ states and every DcoBA $\mathcal B$ with $m$ states there is an NBA $\mathcal C$ 
with at most $n \cdot m \cdot 2$ states such that $L(\mathcal C) = L(\mathcal A) \cap L(\mathcal B)$.
\end{lemma}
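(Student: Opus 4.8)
The plan is to construct the NBA $\mathcal C$ as a product of the two deterministic automata $\mathcal A$ and $\mathcal B$, running them synchronously on the same input word, together with a single extra bit of nondeterminism to combine their two acceptance conditions into one B\"uchi condition. Writing $\mathcal A = (Q_1,\Sigma,q_1^0,\delta_1,F_1)$ as a DBA (so $F_1$ is its final set) and $\mathcal B = (Q_2,\Sigma,q_2^0,\delta_2,F_2)$ as a DcoBA (so the co-B\"uchi condition demands that states \emph{outside} $F_2$ are visited only finitely often), the basic state space is $Q_1 \times Q_2$, with the shared letter driving both transition functions deterministically. The difficulty is that the product must accept exactly those words on which $\mathcal A$ visits $F_1$ infinitely often \emph{and} $\mathcal B$ eventually stays inside $F_2$ forever; these are a B\"uchi and a co-B\"uchi requirement respectively, and a plain product does not merge them into one B\"uchi set.

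The idea I would use is the standard trick of guessing, via nondeterminism, the point from which $\mathcal B$ never again leaves $F_2$. Concretely I would add a Boolean flag as a third component, giving state space $Q_1 \times Q_2 \times \{0,1\}$ (hence at most $n \cdot m \cdot 2$ states, matching the bound). In flag-$0$ mode the automaton simply tracks both deterministic runs and makes no acceptance demands; at a nondeterministically chosen moment it may switch the flag to $1$, committing to the claim that $\mathcal B$'s run has entered its final region for good. In flag-$1$ mode the automaton still tracks both runs, but it now \emph{blocks} (has no outgoing transition) whenever $\mathcal B$ would move to a state outside $F_2$, thereby enforcing that the guess was correct along any surviving run. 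The B\"uchi acceptance set of $\mathcal C$ is then taken to be those flag-$1$ states whose $Q_1$-component lies in $F_1$, so that an accepting run must both have switched to mode $1$ and thereafter visit $F_1$ infinitely often.

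To verify $L(\mathcal C) = L(\mathcal A) \cap L(\mathcal B)$ I would argue both inclusions. For $\subseteq$: an accepting run of $\mathcal C$ eventually stays in flag $1$ (the flag never decreases), so from that point $\mathcal B$ never leaves $F_2$, giving $w \in L(\mathcal B)$; and it visits the final set infinitely often, which forces $F_1$ to occur infinitely often in the synchronised $\mathcal A$-run, giving $w \in L(\mathcal A)$. For $\supseteq$: if $w \in L(\mathcal A) \cap L(\mathcal B)$, then since $\mathcal B$ is deterministic there is a position $N$ after which its unique run stays within $F_2$; I let $\mathcal C$ follow both deterministic runs, flip the flag to $1$ at position $N$, and observe that no blocking transition is ever triggered afterwards, while $\mathcal A$'s acceptance supplies infinitely many visits to $F_1$ in flag-$1$ mode, so the run is accepting.

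The only subtle point, and the step I expect to need the most care, is the flag-$1$ blocking behaviour: because $\mathcal C$ is nondeterministic only through the timing of the flag flip, I must ensure that a premature or never-occurring flip simply yields a non-accepting run rather than corrupting the language, and that making a transition undefined when $\mathcal B$ exits $F_2$ is legitimate in the nondeterministic model (dead runs are discarded, which is exactly the intended effect). Everything else---the product transition function, the state count $n \cdot m \cdot 2$, and the designation of the B\"uchi set---is routine once this guessing mechanism is set up correctly.
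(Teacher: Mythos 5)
Your construction is essentially the paper's own: a synchronous product with a nondeterministically flipped flag that commits to $\mathcal B$'s run staying in $F_2$, with flag-$1$ transitions blocked when $\mathcal B$ would leave $F_2$ and acceptance on flag-$1$ states whose first component is in $F_1$. The paper merely presents the same idea by restricting the flag-$1$ part of the state space to $Q_1 \times F_2 \times \{1\}$ outright, which is equivalent to your blocking convention and yields the same $n \cdot m \cdot 2$ bound.
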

\begin{proof}
  Let $\mathcal A$ be $(Q_1, \Sigma, q_1^0, \delta_1, F_1)$ and $\mathcal B$ be $(Q_2, \Sigma, q_2^0, \delta_2, F_2)$.
  Then define the NBA $\mathcal{C}$ as $(Q,\Sigma,(q_1^0, q_2^0,0),\delta,F_1 \times F_2 \times \{1\})$ with 
  $Q = (Q_1 \times Q_2 \times \{0\}) \cup (Q_1 \times F_2 \times \{1\})$, where $\delta$ realises the 
  synchronous product of $\delta_1$ and $\delta_2$ on $Q_1 \times Q_2 \times \{0\}$ and on $Q_1 \times F_2 \times \{1\}$.
  In addition, for every transition from $(q_1,q_2,0)$ to $(q'_1,q'_2,0)$ there is also one with the same alphabet symbol
  to $(q'_1,q'_2,1)$ if $q'_2 \in F_2$. Note that this creates nondeterminism.
\end{proof}

\begin{lemma}
\label{lem:NBAprojection}
Let $\mathcal C$ be an NBA over the alphabet $\Sigma_A \times \Sigma_B$.
There is a NBA $\mathcal A$ over the alphabet $\Sigma_A$ such that 
$|\mathcal A| \leq |\mathcal C|$ and
for all words $a_0 a_1 \ldots \in \Sigma_A^\omega$ it holds that
\[
  a_0 a_1 \ldots \in L(\mathcal A)
  \quad\text{ iff }\quad
  \text{ there is a word } b_0 b_1 \ldots \in \Sigma_B^\omega
  \text{ with } (a_0, b_0) (a_1, b_1) \ldots \in L(\mathcal C)
  \,\text.
\]
\end{lemma}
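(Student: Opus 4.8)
The plan is to use the standard projection construction for nondeterministic automata, which simply forgets the $\Sigma_B$-component of each letter. Concretely, I would write $\mathcal{C} = (Q, \Sigma_A \times \Sigma_B, q_0, \delta, F)$ and let $\mathcal{A}$ share the same state set $Q$, the same initial state $q_0$, and the same final set $F$, setting its transition relation to be the projection
\[
  \delta_A \enspace := \enspace \{ (q, a, q') \mid \text{there is some } b \in \Sigma_B \text{ with } (q, (a,b), q') \in \delta \}.
\]
Since $\mathcal{A}$ and $\mathcal{C}$ have identical state sets, the bound $|\mathcal{A}| \leq |\mathcal{C}|$ (in fact equality) is immediate.

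For correctness, the key observation I would make is that runs of $\mathcal{A}$ and $\mathcal{C}$ stand in a tight correspondence that leaves the sequence of visited states unchanged. A sequence $q_0, q_1, \ldots$ is a run of $\mathcal{A}$ on $a_0 a_1 \ldots$ exactly when $(q_i, a_i, q_{i+1}) \in \delta_A$ for every $i$, and by the definition of $\delta_A$ this holds if and only if for every $i$ there is some $b_i \in \Sigma_B$ with $(q_i, (a_i, b_i), q_{i+1}) \in \delta$; that is, if and only if the very same state sequence $q_0, q_1, \ldots$ is a run of $\mathcal{C}$ on $(a_0, b_0)(a_1, b_1) \ldots$ for a suitable choice of the $b_i$. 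Because the B\"uchi acceptance condition depends only on which states occur infinitely often, and the state sequence is preserved verbatim, acceptance transfers in both directions.

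From here the two implications are routine. If $a_0 a_1 \ldots \in L(\mathcal{A})$, I would fix an accepting run, choose witnesses $b_i$ as above, and observe that the same run accepts $(a_0, b_0)(a_1, b_1) \ldots$ in $\mathcal{C}$, supplying the required word over $\Sigma_B$. Conversely, given $b_0 b_1 \ldots$ with $(a_0, b_0)(a_1, b_1) \ldots \in L(\mathcal{C})$ witnessed by some accepting run, that same run is accepting for $\mathcal{A}$ on $a_0 a_1 \ldots$ by the definition of $\delta_A$.

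There is essentially no obstacle here: the lemma is the textbook fact that B\"uchi-recognisable languages are closed under projection of the alphabet. The only point requiring any care is to phrase the transition projection so that a single run object serves both automata, which makes the preservation of the acceptance condition transparent rather than something one must re-verify separately for each direction.
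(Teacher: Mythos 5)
Your proposal is correct and follows exactly the same route as the paper: the paper's (very terse) proof also keeps the state set, initial state and final set of $\mathcal C$ and replaces the transition relation by its projection onto $\Sigma_A$. You merely spell out the run correspondence that the paper leaves implicit, which is the right level of detail for this textbook closure property.
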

\begin{proof}
  The automaton $\mathcal C$ is almost $\mathcal A$.  Let $\delta_A$ be the transition relation of $\mathcal A$.
  Clearly, the set $\{(q,a,q') \mid (q,(a,b),q') \in \delta_A \text{ for some }b \in \Sigma_B\}$
  is adequate as a transition relation for $\mathcal C$.
\end{proof}

\subsection{An Alphabet of Rule Applications}
\label{subsec:decproc:alphabet}

Clearly, an infinite play in the game for some formula $\vartheta$ can be regarded as an $\omega$-word over
the alphabet of all possible configurations. This alphabet would have doubly
exponential size in the size of the input formula. It is possible to achieve the goals stated above with a
more concise alphabet. 

\begin{definition}
A \emph{rule application} in a play for $\vartheta$ is a pair of a configuration and one of its successors. 
Note that such a pair is entirely determined by the
principal block and the principal formula of the configuration and a number specifying the successor. This
enables a smaller symbolic encoding. For instance, the transition from the configuration
$\allpath(\expath\varphi,\Sigma),\Phi$ to the successor $\allpath\Sigma,\Phi$ in rule $\AErule$ can be
represented by the quadruple $(\allpath, \{\expath\varphi\}\cup \Sigma, \expath\varphi, 1)$. The other
possible successor would have index $0$ instead. There are three exceptions to this: applications of
rules $\Ettrule$ and $\XruleNoE$ can be represented using a constant name, and the successor in rule
$\XruleWithE$ is entirely determined by one of the $\expath$-blocks in the configuration. Hence, let
\[
\Sigma^{\mathsf{pl}}_\vartheta \enspace := \enspace 
  \big(\{ \allpath,\expath\} \times 2^{\fl{\vartheta}} \times \fl{\vartheta} \times \{0,1\}\big) \enspace 
  \cup \enspace \{ \expath\true, \nxt_0 \} \enspace \cup \enspace \big(\{\nxt_1\} \times 2^{\fl{\vartheta}}\big)
\]
Note that $|\Sigma^{\mathsf{pl}}_\vartheta| = 2^{\mathcal{O}(|\vartheta|)}$.
\end{definition}

An infinite play $\pi = C_0,C_1,\ldots$ then induces a word $\pi' = r_0,r_1,\ldots \in (\Sigma^{\mathsf{pl}}_\vartheta)^\omega$ 
in a straight-forward way: $r_i$ is the symbolic representation of the configuration/successor pair $(C_i,C_{i+1})$. We will not
formally distinguish between an infinite play $\pi$ and its induced $\omega$-word $\pi'$ over~$\Sigma^{\mathsf{pl}}_\vartheta$.

\phantomsection
\label{para:decproc:alphabet:con}  
For every $r \in \Sigma^{\mathsf{pl}}_\vartheta$ let $\mathrm{con}^{\expath}_r(\cdot)$ be a partial function
from $\expath$-blocks to $\expath$-blocks which satisfies the connection relation~$\leadsto$ and avoids spawning connections.
Thus, the function is undefined for $r=\expath\true$ and the argument $\expath\emptyset$, only.
For all other parameters and arguments the function is uniquely defined.

\subsection{DPA for the Absence of Bad \allpath-Traces} 
\label{subsec:decproc:Atraces}

An $\allpath$-trace-marked play is a (symbolic representation of a) play together with
an $\allpath$-trace therein.  It can be represented as an infinite word over the extended alphabet
\[
  \Sigma^{\mathsf{tmp}}_\vartheta := \Sigma^{\mathsf{pl}}_\vartheta \times \{\allpath, \expath\} \times 2^{\fl{\vartheta}}
  \,\text.
\]
The second and the last components of the alphabet simply name a block on the
marked trace.  Note that these components are half a step behind the first component because 
the latter links between two consecutive configuration.
Remember that an $\allpath$-trace can proceed through finitely many $\expath$-blocks before it
gets trapped in $\allpath$-blocks only.
We define a co-B\"uchi automaton $\mathcal{C}^{\allpath}_\vartheta$ which recognises exactly those 
$\allpath$-trace-marked plays which contain an $\release$-thread in the marked trace. It is 
$\mathcal{C}^{\allpath}_\vartheta = (\{\mathtt{W}\} \cup \flr{\vartheta},\Sigma^{\mathsf{tmp}}_\vartheta,\mathtt{W},\delta,\flr{\vartheta})$.
We describe the transition relation~$\delta$ intuitively. A formal definition can easily be derived from this. 
Starting in the waiting state~$\mathtt{W}$ it eventually guesses a formula of the form 
$\psi_1 \release \psi_2$ which occurs in the marked $\allpath$-trace. It then tracks this formula in
its state for as long as it is unfolded with rule \ARrule\ and remains in the marked trace. If it leaves
the marked trace in the sense that the trace proceeds through a block which does not contain this subformula
anymore, or an $\expath$-block occurs as part of the marked trace 
then $\mathcal{C}^{\allpath}_\vartheta$ simply stops. The following 
proposition is easily seen to be true using Definition~\ref{def:trace} and Lemma~\ref{every UR-thread form}.

\begin{lemma}
\label{lem:ncobaCcorrect}
Let $w \in (\Sigma^{\mathsf{tmp}}_\vartheta)^\omega$ be an $\allpath$-trace-marked play of a game for~%
$\vartheta$. Then $w \in L(\mathcal{C}^{\allpath}_\vartheta)$ iff the marked trace of $w$ contains an 
$\release$-thread.
\end{lemma}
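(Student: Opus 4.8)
The plan is to prove Lemma~\ref{lem:ncobaCcorrect} by showing both directions of the biconditional, essentially verifying that the automaton $\mathcal{C}^{\allpath}_\vartheta$ correctly captures the existence of an $\release$-thread in the marked $\allpath$-trace. Since the automaton is described only intuitively, the main work is to pin down its behaviour precisely enough to argue correctness, and then connect accepting runs to $\release$-threads using the structural lemmas already established. The co-B\"uchi acceptance set is $\flr{\vartheta}$, so an accepting run is one that eventually stays within states of the form $\psi_1 \release \psi_2$ or $\nxt(\psi_1 \release \psi_2)$ forever (never returning to $\mathtt{W}$). This already suggests a tight correspondence with the definition of an $\release$-thread.

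First I would make the automaton's transition relation explicit. From the waiting state $\mathtt{W}$ the automaton may either stay in $\mathtt{W}$ or, upon reading a marked block, nondeterministically guess an $\release$-formula $\psi_1 \release \psi_2$ present in the marked $\allpath$-block and move to the corresponding state. From a state $\psi_1 \release \psi_2$ it follows the connection relation $\leadsto$ restricted to the marked trace: when \ARrule{} is applied with this formula principal, it transitions to $\nxt(\psi_1 \release \psi_2)$ (tracking the unfolded copy), and when a modal rule fires it returns to $\psi_1 \release \psi_2$; a copy transition keeps the state unchanged. The automaton \emph{stops} (transitions out of $\flr{\vartheta}$, hence fails the co-B\"uchi condition) exactly when the tracked formula disappears from the marked block or the marked block becomes an $\expath$-block.

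For the direction from an accepting run to an $\release$-thread, I would take an accepting run and extract the sequence of tracked formulas. By the co-B\"uchi condition the run eventually remains in $\flr{\vartheta}$ forever, so from some point on the tracked formula is always $\psi_1 \release \psi_2$ or $\nxt(\psi_1 \release \psi_2)$ for a fixed $\release$-formula. The transition discipline guarantees that consecutive tracked formulas satisfy $\leadsto$ as formula connections within the marked trace, so this sequence (prefixed appropriately from the guessing point) is a thread in the trace in the sense of Definition~\ref{def:thread}, and since $\psi_1 \release \psi_2$ recurs infinitely often it is an $\release$-thread. For the converse, given an $\release$-thread I would invoke Lemma~\ref{every UR-thread form} to obtain a position $i$ and a fixed $\release$-formula such that from $i$ onward every thread element is $\psi_1 \release \psi_2$ or $\nxt(\psi_1 \release \psi_2)$; the automaton then has an accepting run that idles in $\mathtt{W}$ until $i$, guesses $\psi_1 \release \psi_2$, and thereafter tracks the thread, staying in $\flr{\vartheta}$ forever.

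The main obstacle I anticipate is bridging the gap between the block-level tracking that the automaton performs (its state only records a single formula, reading the marked block from the $\Sigma^{\mathsf{tmp}}_\vartheta$-component) and the formal notion of a formula thread inside a block trace from Definition~\ref{def:thread}. In particular I would need to check that the ``half a step behind'' offset of the marking components aligns correctly with the formula connections $\leadsto$, and that the automaton's stopping condition exactly matches the cases where a thread cannot continue as an $\release$-thread — namely when the formula is genuinely dropped rather than merely copied or unfolded. Confirming that the nondeterministic guess can always be placed at the right moment, and that no spurious acceptance arises from the automaton tracking a formula that does not in fact persist, is where the care lies; here Lemma~\ref{every UR-thread form} does most of the heavy lifting by guaranteeing the eventual stable form of any $\release$-thread.
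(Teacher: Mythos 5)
Your proposal is correct and takes essentially the same route as the paper, which gives no explicit proof but asserts the lemma is ``easily seen'' from the trace/thread definitions and Lemma~\ref{every UR-thread form} --- exactly the two ingredients you use, with the backward-prefixing of the tracked sequence to a genuine thread being the same device the paper employs elsewhere (e.g.\ in the completeness proof). Your elaboration of the transition discipline and the stopping condition is a faithful and more detailed account of what the paper leaves implicit.
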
 

On the way to construct an automaton which recognises plays without bad $\allpath$-traces
we need to eliminate the restriction on $w$ in the previous lemma.
In other words, an automaton is needed which decides whether or not the annotated sequence of blocks
is an $\allpath$-trace.

\begin{lemma}
\label{lem:dcobaAtrace}
  There is a DcoBA $\mathcal{B}^\allpath_\vartheta$ over $\Sigma^{\mathsf{tmp}}_\vartheta$
  of size $\mathcal O(2^{|\vartheta|})$ such that the equivalence
  \[
    (( r_i, Q_i, \Delta_i ))_{i \in \Nat} \in L(\mathcal{B}^\allpath_\vartheta)
    \quad\text{ iff }\quad
    (Q_i \Delta_i)_{i \in \Nat} \text{ is an $\allpath$-trace in the play $(r_i)_{i \in \Nat}$}
  \]
  holds for all infinite plays $r_0, r_1, \ldots \in \Sigma^{\mathsf{pl}}_\vartheta$
  and all sequences of blocks $(Q_i \Delta_i)_{i \in \Nat}$.
\end{lemma}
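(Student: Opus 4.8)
The plan is to build $\mathcal{B}^{\allpath}_\vartheta$ as a deterministic automaton that simply re-checks, step by step, that the annotated block sequence $(Q_i\Delta_i)_i$ satisfies the connection relation $\leadsto$ of Definition~\ref{def:trace}, and that uses its co-B\"uchi condition to express the additional requirement that only finitely many of the marked blocks carry the quantifier $\expath$. The essential observation is that, unlike the guessing automaton $\mathcal{C}^{\allpath}_\vartheta$, this automaton needs no nondeterminism: the block to be tracked is handed to it explicitly in the second and third alphabet components, so all it has to do is verify a local, deterministic predicate at each transition.

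Concretely, I would let the state reached after reading the symbol at position $i$ record the currently marked block $(Q_i,\Delta_i)$ together with the bounded amount of information about the rule $r_i$ needed to describe its legal continuations --- namely which formula of $\Delta_i$ is principal, of which rule, and with which successor index. Since this is a pointed block plus a constant-size tag, there are only $2^{\mathcal O(|\vartheta|)}$ such states, matching the stated bound; in addition there is one rejecting sink $\bot$ and one fresh initial state. On reading $(r_{i+1},Q_{i+1},\Delta_{i+1})$ the automaton reconstructs from its state every block $\qpath\Delta$ with $(Q_i,\Delta_i)\leadsto\qpath\Delta$ under $r_i$ --- the unique copy when $(Q_i,\Delta_i)$ is not principal, and the finitely many descendants otherwise, including the $\nxt$-shifted blocks produced by the modal rules $\XruleNoE$ and $\XruleWithE$ --- and checks that $(Q_{i+1},\Delta_{i+1})$ lies among them. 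If so it moves to the state for $(Q_{i+1},\Delta_{i+1})$ and $r_{i+1}$; otherwise it moves to $\bot$ and loops there. Here the ``half a step behind'' synchronisation of the alphabet is exactly what makes the check local: at position $i$ the symbol already carries both $r_i$ (linking $\mathcal C_i$ and $\mathcal C_{i+1}$) and the source block $(Q_i,\Delta_i)$, so the only missing datum, the target block, arrives with the very next symbol.

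For the acceptance condition I would declare a state \emph{final} iff its stored block is an $\allpath$-block, and non-final otherwise, with $\bot$ non-final as well. A run then visits non-final states only finitely often precisely when (a) no connection check ever fails, i.e.\ $(Q_i\Delta_i)_i$ is a genuine trace, and (b) only finitely many $Q_i$ equal $\expath$. By Definition~\ref{def:trace} condition~(b) says exactly that the trace is eventually an $\allpath$-trace, and Lemma~\ref{every trace form} confirms the underlying picture: every trace is finitely spawning and the quantifier on the current block can change only at a spawning connection, so a genuine trace does stabilise. Combining~(a) and~(b) with the sink handling of non-traces yields the claimed equivalence $w\in L(\mathcal{B}^{\allpath}_\vartheta)$ iff $(Q_i\Delta_i)_i$ is an $\allpath$-trace.

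The main obstacle is bookkeeping rather than conceptual: I must verify that $\leadsto$ really is a function of the alphabet data alone, i.e.\ that ``$(Q_i,\Delta_i)$ is principal'', its descendants, and the copy relation are all determined by $r_i\in\Sigma^{\mathsf{pl}}_\vartheta$ and the two marked blocks, without recourse to the full configurations $\mathcal C_i$. This has to be checked rule by rule, the delicate cases being the modal rules $\XruleNoE$/$\XruleWithE$ (where every block is principal and is either $\nxt$-shifted or, for the dropped $\expath$-blocks of $\XruleWithE$, has no descendant, so that the trace legitimately dies and the input is rejected) and the spawning rules $\EErule$, $\EArule$, $\AArule$, $\AErule$ (where the quantifier flips). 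Once this is established, the remaining points --- that the sink turns a failed connection into a co-B\"uchi rejection, and that the state count stays singly exponential --- are immediate.
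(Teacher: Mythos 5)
Your construction is essentially the paper's own: a deterministic automaton whose state carries the previously announced block and which checks, one step later, that the next announced block is a legal $\leadsto$-successor under the recorded rule instance, with final states exactly the $\allpath$-blocks so that co-B\"uchi acceptance expresses ``genuine trace and eventually always $\allpath$'' (the paper stores only $(Q_i,\Delta_i)$ and phrases failure as the automaton getting stuck rather than via an explicit sink; storing $r_i$ as well, still within the $2^{\mathcal O(|\vartheta|)}$ bound, if anything makes determinism more transparent, since a principal block may have two descendants in the same successor configuration). The one detail to add is that the initial state must be anchored to $\expath\{\vartheta\}$, the unique block of the initial configuration --- the paper takes $(\expath,\{\vartheta\})$ as its initial state --- because a completely ``fresh'' initial state would let the automaton track a block sequence whose first block never occurs in $\mathcal C_0$ (and hence is no trace at all), something the purely local connection checks cannot detect on their own.
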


\begin{proof}
Take as $\mathcal{B}^\allpath_\vartheta$ the deterministic co-B\"uchi automaton with states
\[
  Q := \{\expath, \allpath\} \times 2^{\fl{\vartheta}}
  \,\text,
\]
initial state $(\expath, \{\vartheta\})$ and final states 
$\{\allpath\} \times 2^{\fl{\vartheta}}$.
The automaton verifies that the last two components of the input indeed form an $\allpath$-trace.  
For this purpose, the state bridges between two successive blocks in the input sequence.
Due to the co-B\"uchi acceptance condition, the input is accepted if the block quantifier eventually
remains $\allpath$.  However, these properties define an $\allpath$-trace.

Formally, given a state $(Q_0, \Delta_0)$ and a letter $(r, Q_1, \Delta_1)$, a move
into the state $(Q_2, \Delta_2)$ is only possible iff 
$Q_0 = Q_1$, $\Delta_0 = \Delta_1$, and the rule instance $r$ transfers the block 
$Q_1 \Delta_1$ into the block $Q_2\Delta_2$.
Note that the sequence of blocks might end if the rules \Ettrule{} and \XruleWithE{}
are applied.  In such a situation, the automaton gets stuck and rejects thereby.
\end{proof}

\begin{figure}
  \resizebox{\textwidth}{!}{
    \hspace*{-1em}
    \bgroup\small


\def\TEXT#1#2{#1\,#2}
\def\LABEL{\small\vphantom{pb}} 
\def\LABELL{\vphantom{pb}} 
\begin{tikzpicture}[
    node distance=2cm,
    >=stealth,
    NODE/.style={
      rectangle,
      rounded corners=2pt,
      fill=black!10
    }
  ]

\node[]                                     (3) {};
\node[NODE, anchor=base, left=.5em of 3]    (1) {\TEXT{DBA for}{``marked $\allpath$-trace is bad'' over $\Sigma^{\mathsf{tmp}}_\vartheta$}};
\node[NODE, above of=1]                     (0) {\TEXT{NcoBA $\mathcal{C}^\allpath_\vartheta$ for}{``marked $\allpath$-trace is good'' over $\Sigma^{\mathsf{tmp}}_\vartheta$}};
\node[NODE, anchor=base, right=.5em of 3]   (2) {\TEXT{DcoBA $\mathcal{B}^\allpath_\vartheta$ for}{``marking is an $\allpath$-trace'' over $\Sigma^{\mathsf{tmp}}_\vartheta$}};
\node[NODE, anchor=left, below=1.8cm of 3]  (4) {\TEXT{NBA for}{``marking witnesses a bad $\allpath$-trace'' over $\Sigma^{\mathsf{tmp}}_\vartheta$}};
\node[NODE, anchor=left, below=1.5cm of 4]  (5) {\TEXT{NBA for}{``some $\allpath$-trace is bad'' over $\Sigma^{\mathsf{pl}}_\vartheta$}};
\node[NODE, below=2cm of 5]                 (6) {\TEXT{DPA $\mathcal{A}^\allpath_\vartheta$ for}{``all $\allpath$-traces are good'' over $\Sigma^{\mathsf{pl}}_\vartheta$}};

\coordinate (I) at ($(4.north)+(0,.6cm)$) ;  

\draw[->,thick] (0.south) -- (1.north) node [left,  midway] {\LABEL complementation\;}
                                       node [right, midway] {\LABEL \;Theorem~\ref{thm:ncoba2dba}} ;

\draw[-,thick]  (I) -| (1.south)
                (I) -| (2.south) ;
\draw[->,thick] (I) -| (4.north) ;
\node[anchor=south] at (I) {intersection, Lemma~\ref{lem:DBAintersectDcoBA}} ;

\draw[->,thick] (4.south) -- (5.north) node [right,midway] {\LABEL \;Lemma~\ref{lem:NBAprojection}}
                                       node [left, midway] {\LABEL alphabet projection\;} ;
\draw[->,thick] (5.south) -- (6.north) node [right,midway] {\LABEL \;\parbox{10em}{Theorem~\ref{thm:nba2dpa} and \LABELL\\ Lemma~\ref{lem:dpacomplement}\LABELL}} 
                                       node [left, midway] {\LABEL \parbox{10em}{\flushright determinisation and \LABELL\\  complementation \LABELL}} ;

\end{tikzpicture}

\egroup
    \hspace*{-1em}
  }  
\caption{Construction of the DPA for Theorem~\ref{thm:dpa for no bad A-traces}.}
\label{fig:decproc:Atraces}
\end{figure}

Figure~\ref{fig:decproc:Atraces} explains how the previously defined automata 
$\mathcal{C}^{\allpath}_\vartheta$ and $\mathcal{B}^{\allpath}_\vartheta$ can then be 
transformed into a deterministic parity automaton, called $\mathcal{A}^{\allpath}_\vartheta$, 
that checks for presence of an $\release$-thread in all $\allpath$-traces of a given play.
It is obtained using complementation twice, intersection and the projection of the alphabet 
$\Sigma^{\mathsf{tmp}}_\vartheta$ to $\Sigma^{\mathsf{pl}}_\vartheta$.
The four automata shown at the top are defined over the extended alphabet of plays with marked traces,
whereas the others work on the alphabet $\Sigma^{\mathsf{pl}}_\vartheta$ of symbolic rule applications only.
Almost all operations keep the automata small besides the determinisation.
All in all, we obtain the following property.

\begin{theorem}
\label{thm:dpa for no bad A-traces}
For every \ctlstar-formula $\vartheta$ there is a DPA $\mathcal{A}^{\allpath}_\vartheta$ 
of size $2^{2^{\mathcal O(|\vartheta|)}}$ and of index $2^{\mathcal O(|\vartheta|)}$
s.t.\ for all plays $\pi \in (\Sigma^{\mathsf{pl}}_\vartheta)^\omega$ we have: 
$\pi \in L(\mathcal{A}^\allpath_\vartheta)$ iff $\pi$ does not contain a bad $\allpath$-trace.
\end{theorem}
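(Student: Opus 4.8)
The plan is to follow the pipeline sketched in Figure~\ref{fig:decproc:Atraces}, assembling $\mathcal{A}^{\allpath}_\vartheta$ from the two automata $\mathcal{C}^{\allpath}_\vartheta$ and $\mathcal{B}^{\allpath}_\vartheta$ already at hand, and to track size and index through each construction. The identity to realise is
\[
  \text{no bad } \allpath\text{-trace in } \pi
  \quad\iff\quad
  \neg\,\exists\,\mu:\ (\pi,\mu)\text{ is an } \allpath\text{-trace that is bad},
\]
where a marking $\mu$ attaches the trace components $\{\allpath,\expath\}\times 2^{\fl{\vartheta}}$ of $\Sigma^{\mathsf{tmp}}_\vartheta$ to the play. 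The strategy is to build, over $\Sigma^{\mathsf{tmp}}_\vartheta$, a deterministic automaton for the inner predicate ``(marking is an $\allpath$-trace) and (that trace is bad)'', then existentially project away $\mu$ to get an NBA over $\Sigma^{\mathsf{pl}}_\vartheta$ for ``some $\allpath$-trace is bad'', and finally determinise and complement.

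First I would complement $\mathcal{C}^{\allpath}_\vartheta$. By Lemma~\ref{lem:ncobaCcorrect}, on inputs that are genuine $\allpath$-trace-marked plays this NcoBA accepts exactly those whose marked trace has an $\release$-thread, i.e.\ is good. Miyano--Hayashi (Theorem~\ref{thm:ncoba2dba}) yields a DBA of size $3^{\mathcal{O}(|\vartheta|)}=2^{\mathcal{O}(|\vartheta|)}$ accepting the complement; restricted to such inputs it accepts precisely the markings whose trace contains no $\release$-thread, i.e.\ the bad ones in the sense of Definition~\ref{def:thread}. Next I intersect this DBA with the DcoBA $\mathcal{B}^{\allpath}_\vartheta$ of Lemma~\ref{lem:dcobaAtrace}, which accepts exactly those inputs whose marking is an $\allpath$-trace in the underlying play. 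By Lemma~\ref{lem:DBAintersectDcoBA} this gives an NBA of size $2^{\mathcal{O}(|\vartheta|)}\cdot 2^{|\vartheta|}\cdot 2 = 2^{\mathcal{O}(|\vartheta|)}$. The filter $\mathcal{B}^{\allpath}_\vartheta$ is precisely what lets me invoke the characterisation of $\mathcal{C}^{\allpath}_\vartheta$ legitimately: on the sublanguage where the marking is an $\allpath$-trace the complemented $\mathcal{C}^{\allpath}_\vartheta$ means ``trace is bad'', so the product accepts $(\pi,\mu)$ iff $\mu$ is a bad $\allpath$-trace of $\pi$.

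Then I apply the alphabet projection of Lemma~\ref{lem:NBAprojection} to drop the two trace components, obtaining an NBA over $\Sigma^{\mathsf{pl}}_\vartheta$ of size $\le 2^{\mathcal{O}(|\vartheta|)}$ that accepts a play $\pi$ iff there exists a marking witnessing a bad $\allpath$-trace, i.e.\ iff $\pi$ contains a bad $\allpath$-trace. Finally I determinise this NBA with Piterman's construction (Theorem~\ref{thm:nba2dpa}): from $N=2^{\mathcal{O}(|\vartheta|)}$ states one obtains a DPA with at most $N^{2N+2}$ states and index $\le 2N-1$, and since $N^{2N+2}=2^{(2N+2)\log_2 N}=2^{2^{\mathcal{O}(|\vartheta|)}}$ while $2N-1=2^{\mathcal{O}(|\vartheta|)}$, these match the stated bounds. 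A final complementation via Lemma~\ref{lem:dpacomplement} turns ``some $\allpath$-trace is bad'' into ``no bad $\allpath$-trace'' while preserving size and index, yielding $\mathcal{A}^{\allpath}_\vartheta$.

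The routine content is the bookkeeping of sizes; the only super-exponential jump is the determinisation, which is what forces the $2^{2^{\mathcal{O}(|\vartheta|)}}$ state bound together with the singly-exponential index. The one genuinely delicate point is correctness of the intersection step: the characterisation in Lemma~\ref{lem:ncobaCcorrect} is only asserted for inputs that are already $\allpath$-trace-marked plays, so I must argue that conjoining with $\mathcal{B}^{\allpath}_\vartheta$ restricts exactly to that sublanguage, on which ``complement of good $=$ bad'' holds, while on inputs whose marking is not an $\allpath$-trace the behaviour of $\mathcal{C}^{\allpath}_\vartheta$ is irrelevant because $\mathcal{B}^{\allpath}_\vartheta$ already rejects. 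Establishing this cleanly --- and checking that the half-step offset between the play and marking components noted in the definition of $\Sigma^{\mathsf{tmp}}_\vartheta$ is handled consistently across all three automata, using Lemma~\ref{every UR-thread form} to locate the $\release$-formula along a thread --- is where the care is needed.
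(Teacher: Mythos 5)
Your proposal is correct and follows exactly the pipeline the paper itself uses (Figure~\ref{fig:decproc:Atraces}): complement $\mathcal{C}^{\allpath}_\vartheta$ via Miyano--Hayashi, intersect with $\mathcal{B}^{\allpath}_\vartheta$ using Lemma~\ref{lem:DBAintersectDcoBA}, project with Lemma~\ref{lem:NBAprojection}, then determinise and complement, with the same size and index bookkeeping. The paper states this construction only at the level of the figure, so your added care about restricting Lemma~\ref{lem:ncobaCcorrect} to the sublanguage accepted by $\mathcal{B}^{\allpath}_\vartheta$ is a welcome elaboration rather than a deviation.
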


\subsection{DBA for the Absence of Bad \expath-Traces}
\label{subsec:decproc:Etraces}

Remember that a bad $\expath$-trace is one that contains a $\until$-thread. It is equally possible to construct 
an NcoBA which checks in a play for such a trace
and then use complementation and determinisation constructions as it is done for $\allpath$-traces. However, it is
also possible to define a DBA $\mathcal{A}_\vartheta^{\expath}$ directly which accepts a play iff it does not contain a bad $\expath$-trace. This
requires a bit more insight into the combinatorics of plays
but leads to smaller automata in the end.

Let $\varphi_0 \until \psi_0,\ldots, \varphi_{k-1} \until \psi_{k-1}$ be an enumeration of all $\until$-formulas in $\vartheta$.
The DBA $\mathcal{B}_\vartheta$ consists of the disjoint union of $k$ components $C_0,\ldots,C_{k-1}$ with 
$C_i = \{i\} \, \cup \, \{i\} \times 2^{\fl{\vartheta}}$.
In the $i$-th component, state $i$ is used to wait for either of two occurrences:
the $i$-th $\until$-formula gets unfolded or one of the rules for $\nxt$-formulas is being seen. In the first case the automaton 
starts to follow the thread of this particular $\until$-formula. In the second case, the automaton starts to look for the next
$\until$-formula in line to check whether it forms a thread. Hence, the transitions in state $i$ are the following. 
\begin{displaymath}
\delta(i,r) = \begin{cases}
(i,\Pi) & \text{ if } r = (\expath,\Pi,\varphi_i \until \psi_i,1) \\
(i+1)\bmod k & \text{ if } r = \nxt_0 \text{ or } r = (\nxt_1,\Pi) \text{ for some } \Pi \\
i & \text{ otherwise} 
\end{cases}
\end{displaymath}
In order to follow a thread of the $i$-th $\until$-formula, the automaton uses the states of the form $\{i\} \times 2^{\fl{\vartheta}}$
in which it can store the block that the current formula on the thread occurs in. It then only needs to compare this block to the
principal block of the next rule application to decide whether or not this block has been transformed. If it has been then the
automaton changes its state accordingly, otherwise it remains in the same state because the next rule application has left that
block unchanged. Once a rule application terminates the possible thread of the $i$-th $\until$-formula, the automaton starts 
observing the next $\until$-formula in line. There are two possibilities for this: either the next rule application fulfils the
$\until$-formula, or the $\expath$-trace simply ends, for instance through an application of rule $\XruleWithE$.
\[
  \delta((i,\Pi),r) = \begin{cases}    
    (i+1) \bmod k & \text{if }r=(\expath, \Pi, \varphi_i \until \psi_i, 0)
  \\
    (i, \Pi') & \text{otherwise, if }\mathrm{con}^{\expath}_r(\expath\Pi) = \expath\Pi'
  \\
    (i+1) \bmod k & \text{otherwise}
  \end{cases}
\]
where $\mathrm{con}^{\expath}_r$ is defined at the
end of \refsubsection{para:decproc:alphabet:con}.
The function $\delta$ is always defined as the second component of the state 
contains $\varphi_i \until \psi_i$ or $\nxt(\varphi_i \until \psi_i)$
whenever the first component is $i$. 

Note that there is no transition for the case of the next rule being $\XruleNoE$ because it only applies when there is no
$\expath$-block which is impossible if the automaton is following an $\until$-formula inside an $\expath$-trace.

It is helpful to depict the transition structure graphically.
\begin{center}
\begin{tikzpicture}[node distance=2cm]

  \node[state] (q0)               {$0$};
  \node[state] (q1) [right of=q0] {$1$};
  \node[state] (q2) [right of=q1] {$2$};
  \node[shape=circle,minimum size=8mm]        (i3) [right of=q2] {};
  \node        (ds) [right of=q2,node distance=2.5cm] {\ldots};
  \node[state] (qn) [right of=ds,node distance=2.5cm] {$k\!\!-\!\!1$};
  \node[shape=circle,minimum size=8mm]        (i4) [left of=qn]  {};

  \node        (c0) [above of=q0] {$C_0$};
  \node        (c1) [above of=q1] {$C_1$};
  \node        (c2) [above of=q2] {$C_2$};
  \node        (cn) [above of=qn] {$C_{k-1}$};

  \draw[rounded corners=3mm, thick=4pt, densely dotted] (-.8,-.6) rectangle (.8,2.6);
  \draw[rounded corners=3mm, thick=4pt, densely dotted] (1.2,-.6) rectangle (2.8,2.6);
  \draw[rounded corners=3mm, thick=4pt, densely dotted] (3.2,-.6) rectangle (4.8,2.6);
  \draw[rounded corners=3mm, thick=4pt, densely dotted] (8.2,-.6) rectangle (9.8,2.6);

  \path[->] (q0) edge (q1)
                 edge [loop below] ()
            (q1) edge (q2)
                 edge [loop below] ()
            (q2) edge (i3)
                 edge [loop below] ()
            (i4) edge (qn)
            (qn) edge [loop below] ();

  \path[->,draw] (qn) -- (8,-.6) .. controls (7,-1.2) .. (4.5,-1.2) .. controls (2,-1.2) .. (1,-.6) -- (q0);

  \path[->] (q0) edge (-.4,.9)
                 edge (0,.9)
                 edge (.4,.9)
            (q1) edge (1.6,.9)
                 edge (2,.9)
                 edge (2.4,.9)
            (q2) edge (3.6,.9)
                 edge (4,.9)
                 edge (4.4,.9)
            (qn) edge (8.6,.9)
                 edge (9,.9)
                 edge (9.4,.9);

  \path[->] (.6,.7) edge (q1)
            (.6,1.3) edge (q1)
            (.6,1.9) edge (q1)
            (2.6,.7) edge (q2)
            (2.6,1.3) edge (q2)
            (2.6,1.9) edge (q2)
            (4.6,.7) edge (i3)
            (4.6,1.3) edge (i3)
            (4.6,1.9) edge (i3)
            (7.6,.7) edge (qn)
            (7.6,1.3) edge (qn)
            (7.6,1.9) edge (qn);
\end{tikzpicture}
\end{center}
Note that every occurrence of rule $\XruleNoE$ or $\XruleWithE$ sends this automaton from any state~$i$ into the next
component modulo $k$. Furthermore, when unfolding the $i$-th $\until$-formula in state~$i$, it moves up into the
component $C_i$ where it follows the $\expath$-trace that it is in. From this component it can only get to state
$i+1 \bmod k$ if this $\until$-formula gets fulfilled. 

Thus, since any infinite play must contain infinitely many applications of rule $\XruleNoE$ or $\XruleWithE$, there
are only two possible types of runs of this automaton on such plays: those that eventually get trapped in some component
$C_i \setminus \{i\}$, and those that visit all of $0,1,\ldots,k-1$ infinitely often in this order.
  
It remains to be seen that this automaton---equipped with a suitable acceptance con\-dition---recognises exactly those
plays that do not contain a bad $\expath$-trace.

\begin{theorem}
\label{thm:dba for no bad E-traces}
For every \ctlstar formula $\vartheta$ with $k$ $\until$-subformulas there is a DBA $\mathcal{A}_\vartheta^{\expath}$ of 
size at most $k\cdot(1 + 2^{|\fl\vartheta|})$ s.t.\ for all plays $\pi \in \Sigma_\vartheta^\omega$: 
$\pi \in L(\mathcal{A}_\vartheta^{\expath})$ iff $\pi$ does not contain a bad $\expath$-trace.
\end{theorem}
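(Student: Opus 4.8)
The plan is to equip the deterministic automaton $\mathcal{B}_\vartheta$ built above with a B\"uchi acceptance condition and rename it $\mathcal{A}^{\expath}_\vartheta$: its final set is the collection of the $k$ waiting states $\{0,\dots,k-1\}$, so a run is accepting precisely when it passes through a waiting state infinitely often. Determinism and totality are read off the defining case distinctions for $\delta(i,\cdot)$ and $\delta((i,\Pi),\cdot)$, the latter being well defined because the second component always stores $\varphi_i \until \psi_i$ or $\nxt(\varphi_i \until \psi_i)$ while the first is $i$, so that $\mathrm{con}^{\expath}_r$ applies whenever it is invoked; the size bound $k\cdot(1+2^{|\fl{\vartheta}|})$ is the cardinality of the state set $\bigcup_i\big(\{i\}\cup\{i\}\times 2^{\fl{\vartheta}}\big)$. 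By Lemma~\ref{modal rule infinitely often} every infinite play contains infinitely many applications of $\XruleNoE$ or $\XruleWithE$, and each such application leaves a waiting state $i$ only for the next waiting state $(i+1)\bmod k$; hence no run can rest in a waiting state forever. Therefore every run is of exactly one of two shapes: it eventually gets trapped inside some tracking part $C_i\setminus\{i\}$ -- which, by the construction, it leaves only when the $i$-th eventuality is fulfilled (the successor-$0$ branch of \EUrule) or its tracked block dies -- or else it revisits waiting states, and thus cycles through all of $0,\dots,k-1$, infinitely often. Consequently the run accepts iff it is not trapped.

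The conceptual backbone is a uniqueness property of infinite $\expath$-traces that I would isolate first. Whenever \XruleWithE{} is applied, exactly one $\expath$-block is retained (the one selected in the play), and \XruleNoE{} applies only in the absence of $\expath$-blocks; hence immediately after every modal step at most one $\expath$-block is present. Since, by Lemma~\ref{every trace form}, every $\expath$-trace is finitely spawning and eventually stays inside $\expath$-blocks, and since $\mathrm{con}^{\expath}_r$ is a partial \emph{function} (the non-spawning $\leadsto$-successor of an $\expath$-block being unique), any two infinite $\expath$-traces must hold the same surviving block right after each of the infinitely many modal steps they both cross, and therefore coincide from some point on. Thus a play carries an essentially unique infinite $\expath$-trace $\Xi$, while every further $\expath$-block produced inside a round is transient and perishes at the next \XruleWithE{}. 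Checking ``no bad $\expath$-trace'' therefore amounts to checking that $\Xi$ is good, i.e.\ that no $\until$-formula is postponed on $\Xi$ forever.

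The equivalence then splits into two directions. If the run is trapped in $C_i\setminus\{i\}$ then from some point on it occupies states $(i,\Pi_n),(i,\Pi_{n+1}),\dots$ with $\mathrm{con}^{\expath}$ defined throughout and the successor-$0$ branch never taken; the tracked block sequence is consequently an infinite $\expath$-trace, and unfolding $\varphi_i\until\psi_i$ along the successor-$1$ branch in every round produces, via Definitions~\ref{def:trace} and~\ref{def:thread} and Lemma~\ref{every UR-thread form}, a $\until$-thread on it -- a bad $\expath$-trace. For the converse I would argue contrapositively: assume the run cycles and suppose $\Xi$ were bad. By Lemma~\ref{every UR-thread form} there are a position $N$ and an index $j$ so that from $N$ on $\Xi$ stays on $\varphi_j\until\psi_j$ (or its $\nxt$), is never fulfilled, and hence unfolds $\varphi_j\until\psi_j$ along the successor-$1$ branch in every round. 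Because the run cycles it re-enters waiting state $j$ infinitely often; the intended line is that at such a moment the only $\expath$-block still present is the current block of $\Xi$, so the automaton latches onto $\Xi$ and, $\mathrm{con}^{\expath}$ tracking its unique non-spawning continuation across all subsequent modal steps, remains inside $C_j\setminus\{j\}$ forever -- contradicting that the run cycles. Hence $\Xi$ is good and, being the only infinite $\expath$-trace, leaves the play free of bad $\expath$-traces.

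The step I expect to be the genuine obstacle is exactly the latching claim just used: one must show that, when the automaton re-enters waiting state $j$ after $N$, it really commits to $\Xi$ rather than to a \emph{transient} spawned block that might unfold $\varphi_j\until\psi_j$ earlier in the same round -- the order of rule applications inside a round being chosen by the player who owns the configuration. The resolution has to exploit the uniqueness observation quantitatively: a transient block is reached from $\Xi$'s block by a spawning $\leadsto$-step, so following it keeps $\mathrm{con}^{\expath}$ defined only until the next \XruleWithE{}, where the selection of $\Xi$ forces it to die and resets the automaton to $j+1$. I would therefore control the interplay between the cyclic resets and the modal steps with a careful invariant -- for instance by inspecting the subsequence of rounds in which the automaton sits in waiting state $j$ \emph{immediately} after a modal step, where $\Xi$'s block is the sole $\expath$-block present -- in order to conclude that the automaton cannot be deflected from the unfulfilled eventuality on $\Xi$ infinitely often, and so must eventually be trapped.
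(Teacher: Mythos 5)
Your construction is the paper's construction, and most of your argument tracks the paper's proof step for step: the size count, the totality discussion via the invariant on the second component, the observation (via Lemma~\ref{modal rule infinitely often}) that a run either cycles through all waiting states in order or gets trapped in some $C_i\setminus\{i\}$, and the direction ``run trapped $\Rightarrow$ bad $\expath$-trace exists''. Your acceptance condition (all $k$ waiting states final) differs from the paper's (only state $0$ is final), but the two are equivalent because waiting states can only ever be entered in the cyclic order $i\mapsto(i+1)\bmod k$. The one genuinely new ingredient is your uniqueness lemma for infinite $\expath$-traces: the paper never isolates it, relying instead on the observation --- which you use only implicitly --- that every $\until$-formula at top level of an $\expath$-block must be unfolded by \EUrule{} before the next application of \XruleNoE{} or \XruleWithE{}, so that a waiting automaton cannot slip past a round in which ``its'' formula is unfolded.

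The obstacle you flag is real, and your sketch does not close it. In a round where the automaton waits in state $j$, a block spawned \emph{inside that round} by \EErule{} or \AErule{} may unfold $\varphi_j\until\psi_j$ with successor index $1$ before the persistent trace $\Xi$ does; the deterministic automaton latches onto that spawned block, which dies at the next \XruleWithE{} and releases the automaton to $(j+1)\bmod k$. Your proposed invariant --- restrict attention to rounds where the automaton sits in state $j$ immediately after a modal step, when $\Xi$'s block is the sole $\expath$-block --- does not help, because the deflecting block is created after that moment yet still before $\Xi$'s own unfolding, and nothing in the rules prevents the play from ordering things this way in every such round. You should know that the paper's proof is no more explicit at exactly this point: it asserts in one sentence that if the automaton is in state $i$ when the $i$-th $\until$-formula of the bad thread is unfolded then it ``gets trapped in $C_i\setminus\{i\}$ for the rest of the run'', without considering that the first unfolding of $\varphi_i\until\psi_i$ read in that round may belong to a transient block rather than to the bad trace. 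So you have correctly located the one step that needs an argument beyond what either text supplies --- for instance a modified automaton that, using your uniqueness observation, latches only onto the unique $\expath$-block surviving the preceding modal step, or a combinatorial proof that repeated deflection is impossible; as written, both your proposal and the paper leave it open.
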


\begin{proof}
As above, suppose that $\varphi_0\until\psi_0,\ldots,\varphi_{k-1}\until\psi_{k-1}$ are all the $\until$-formulas occurring
in $\fl{\vartheta}$. Let $\mathcal{A}_\vartheta^\expath := (C_0 \cup \ldots \cup C_{k-1},\Sigma_\vartheta,0,\delta,\{0\})$
be a B\"uchi automaton whose state set is the (disjoint) union of the components defined above and whose transition relation
$\delta$ is also as defined above. It is easy to check that $\mathcal{A}_\vartheta^\expath$ is indeed deterministic and of
the size that is stated above. It remains to be seen that it is correct.

Let $\pi$ be play. First we prove completeness, i.e.\ suppose that 
$\pi \not\in L(\mathcal{A}_\vartheta^\expath)$. Observe that in states of the form $i$ it can always react to any input
symbol whereas in states of the form $(i,\Pi)$ it can react to all input symbols apart from $\XruleNoE$. However, such states
are only reachable from states of the former type by reading a symbol of the form $(\expath,\Pi,\varphi\until\psi,1)$ which
is only possible when there is an $\expath$-block to which this rule is being applied. Furthermore, the automaton only stays
in such states for as long as this block still contains this $\until$-formula, and $\expath$-blocks can only disappear with 
rule $\Ettrule$ when they become empty. Thus, $\mathcal{A}_\vartheta^\expath$ has a (necessarily unique) run on every
play, and $\pi$ can therefore only be rejected if this run does not contain infinitely many occurrences of state $0$.

Next we observe that $\mathcal{A}_\vartheta^\expath$ cannot get trapped in a state of the form $i$ because every infinite
play contains infinitely many applications of rule $\XruleNoE$ or $\XruleWithE$---cf.~Lemma~\ref{modal rule infinitely often}---which 
send it to state
$(i+1)\bmod k$. Thus, in order not to accept $\pi$ it would have to get trapped in some component of states of the form
$(i,\Pi)$ for a fixed $i$. However, it only gets there when the $i$-th $\until$-formula gets unfolded inside an $\expath$-block,
and it leaves this component as soon as this formula gets fulfilled. Thus, if it remains inside such a component forever,
there must be an $\until$-thread inside $\expath$-blocks, i.e.\ a bad $\expath$-trace.

For soundness suppose that $\pi$ contains a bad $\expath$-trace. We claim that $\mathcal{A}_\vartheta^\expath$ must get
trapped in some component $C_i \setminus \{i\}$. Since this does not contain any final states, it will not accept $\pi$.
Now note that at any moment in a play, all $\until$-formulas which are top-level in some $\expath$-block need 
to be unfolded with rule $\EUrule$ before rule $\XruleNoE$ or $\XruleWithE$ can be applied. Thus, if $\mathcal{A}_\vartheta^\expath$
is in some state $i$, and the $i$-th $\until$-formula occurs inside an $\expath$-block at top-level position, then it will
move to the component $C_i\setminus\{i\}$ instead of to $(i+1)\bmod k$ because the latter is only possible with a rule that
occurs later than the rule which triggers the former transition.

As observed above, $\mathcal{A}_\vartheta^\expath$ cannot remain in the only final state $0$ forever. In order to visit it
infinitely often, it has to visit all states $0,1,\ldots,k-1$ infinitely often in this order. Thus, if there is a bad
$\expath$-trace with an $\until$-thread formed by the $i$-th $\until$-formula then there will eventually be a moment in
which this $i$-th $\until$-formula gets unfolded and $\mathcal{A}_\vartheta^\expath$ is trapped in some component
$C_j \setminus\{j\}$ for $j\ne i$ and the rest of the run, or it is in state $i$. If the latter is the case then it gets
trapped in $C_i \setminus \{i\}$ for the rest of the run before the next application of rule $\XruleNoE$ or $\XruleWithE$.
In either case, $\pi$ is not accepted.
\end{proof}

\subsection{The Reduction to Parity Games}
\label{subsec:decproc:games}

A \emph{parity game} is a game $\mathcal{G} = (V,V_0,E,v_0,\Omega)$ s.t.\ $(V,E)$ is a finite, directed graph with 
total edge relation $E$. $V_0$ denotes the set of nodes owned by player 0, and we write $V_1 := V \setminus V_0$
for its complement. The node $v_0 \in V$ is a designated starting node, and $\Omega: V \to \mathbb{N}$ assigns priorities to the nodes.
A \emph{play} is an infinite sequence $v_0,v_1,\ldots$ starting in $v_0$ s.t.\ $(v_i,v_{i+1}) \in E$ for all
$i \in \mathbb{N}$. It is won by player~$0$ if $\max \{ \Omega(v) \mid v=v_i$ for infinitely many $i \}$ is
even. A \emph{(non-positional) strategy} for player $i$ is a function $\sigma: V^*V_i \to V$, s.t.\ for all sequences
$v_0 \ldots v_n$ with $(v_j,v_{j+1}) \in E$ for all $j=0,\ldots,n-1$, and all $v_n \in V_i$ we have:
$(v_n,\sigma(v_0\ldots v_n)) \in E$. A play $v_0 v_1 \ldots$ \emph{conforms} to a strategy $\sigma$ for player $i$
if for all $j \in \mathbb{N}$ we have: if $v_j \in V_i$ then $v_{j+1} = \sigma(v_0\ldots v_j)$. A strategy $\sigma$ for player $i$
is a \emph{winning strategy} in node $v$ if player $i$ wins every play that begins in $v$ and conforms to $\sigma$.
A \emph{(positional) strategy} for player $i$ is a strategy $\sigma$ for player $i$ s.t.\ for all $v_0\ldots v_n \in V^*V_i$
and all $w_0\ldots w_m \in V^*V_i$ we have: if $v_n = w_m$ then $\sigma(v_0\ldots v_n) = \sigma(w_0\ldots w_m)$. Hence,
we can identify positional strategies with $\sigma: V_i \rightarrow V$.
It is a well-known fact that for every node $v \in V$, there is a winning strategy for either player~0 or player~1 for node $v$.
In fact, parity games enjoy positional determinacy meaning that there is even a positional winning strategy for node $v$
for one of the two player~\cite{focs91*368}.
The problem of \emph{solving} a parity game is to determine which player has a winning strategy for $v_0$.
It is solvable~\cite{Schewe/07/Parity} in time polynomial in~$|V|$ and exponential in~$|\Omega[V]|$.

\begin{definition}\label{def::sat_game}
Let $\vartheta$ be a state formula, $\mathcal{A}_\vartheta^\allpath$ be the DPA deciding
absence of bad $\allpath$-traces according to Theorem~\ref{thm:dpa for no bad A-traces},
$\mathcal{A}_\vartheta^\expath$ be the DBA deciding absence of bad $\expath$-traces
according to Theorem~\ref{thm:dba for no bad E-traces} and $\mathcal{A}_\vartheta = (Q, \Sigma^{\mathsf{pl}}_\varphi, q_0, \delta, \Omega)$ the
DPA recognising the intersection of the languages of $\mathcal{A}_\vartheta^\allpath$ and
$\mathcal{A}_\vartheta^\expath$ according to Lemma~\ref{lem:bucparintersect}.
The \emph{satisfiability parity game} for $\vartheta$ is  
$\mathcal{P}_\vartheta = (V,V_0,v_0,E,\Omega')$, defined as follows.
\begin{itemize}[leftmargin=1em]
\item $V  := \goals{\vartheta} \times Q$
\item $V_1 := \{(C,q) \in V \mid$ rule \XruleWithE{} applies to $C \}$
\item $V_0 := V \setminus V_1$
\item $v_0 := (\expath\vartheta,q_0)$
\item $((C,q),(C',q')) \in E$ iff $(C,C')$ is an instance of a rule application which is symbolically 
      represented by $r \in \Sigma^{\mathsf{pl}}_\vartheta$ and $q' = \delta(q,r)$,
	  or no rule is applicable to $C$ and $C = C'$ and $q = q'$, 
\item $\Omega'(C,q) := \begin{cases}
         0 & \textrm{if $C$ is a consistent set of literals} \\
         \Omega(q) & \textrm{if there is a rule applicable to $C$} \\
         1 & \textrm{otherwise}
       \end{cases}$
\end{itemize}
\end{definition}

The following theorem states correctness of this construction. It is not difficult to prove. In fact, 
winning strategies in the satisfiability games and the satisfiability parity games basically coincide. 

\begin{theorem}
\label{thm:gamescorrect}
Player 0 has a winning strategy for $\mathcal{P}_\vartheta$ iff player 0 has a winning strategy for $\mathcal{G}_\vartheta$.
\end{theorem}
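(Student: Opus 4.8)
The plan is to exhibit $\mathcal{P}_\vartheta$ as nothing more than the synchronous product of the arena of $\mathcal{G}_\vartheta$ with the \emph{deterministic} automaton $\mathcal{A}_\vartheta$, and to read off the coincidence of winning strategies directly from this product structure. I would regard the edges of both arenas as labelled by the rule applications $r \in \Sigma^{\mathsf{pl}}_\vartheta$ that induce them, so that a play carries not only its sequence of configurations but also the sequence $r_0, r_1, \ldots$ of rule applications connecting them --- which is precisely the data on which both the trace/thread machinery of \refsubsection{sub:winning:cond} and the automaton $\mathcal{A}_\vartheta$ operate.

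First I would establish the arena correspondence. Because $\mathcal{A}_\vartheta$ is a DPA, its transition function $\delta$ is total and single-valued, so for each node $(C,q)$ of $\mathcal{P}_\vartheta$ and each labelled edge $C \xrightarrow{r} C'$ of $\mathcal{G}_\vartheta$ there is exactly one matching edge $(C,q) \to (C',\delta(q,r))$ in $\mathcal{P}_\vartheta$, and the self-loops placed on dead ends correspond as well. Hence the projection $(C,q) \mapsto C$ restricts to a bijection between the outgoing edges of $(C,q)$ and those of $C$, and by induction from $v_0 = (\expath\vartheta, q_0)$ it yields a bijection between the labelled histories of $\mathcal{P}_\vartheta$ and those of $\mathcal{G}_\vartheta$, and likewise between their maximal plays (a finite play of $\mathcal{G}_\vartheta$ ending in a dead end corresponds to the eventually self-looping play of $\mathcal{P}_\vartheta$). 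This bijection preserves ownership, since in both games a node belongs to player 1 exactly when $\XruleWithE$ applies to its configuration component, and it preserves conformance to strategies. Consequently a strategy for player 0 transfers between the two games --- by projecting away, resp.\ by re-attaching the forced automaton component --- so that the conforming plays of the transferred strategy are exactly the images of the conforming plays of the original.

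The remaining step is to verify that this play bijection is \emph{winning-condition preserving}, which I would do by distinguishing the kinds of maximal play. A play reaching a dead end splits into the two upper cases of $\Omega'$: if the dead end is a consistent set of literals it has priority $0$ and is won by player 0 in $\mathcal{P}_\vartheta$, matching the clause of $L$ that awards such finite plays to her; any other dead end (a consistent configuration with no consistent successor that is not itself a literal set) has priority $1$ and is lost, matching the fact that such finite plays lie outside $L$. For a genuinely infinite play $C_0, C_1, \ldots$ no dead end is ever reached, so a rule is applicable at every $C_i$ and thus $\Omega'(C_i,q_i) = \Omega(q_i)$ throughout; therefore the parity of the play in $\mathcal{P}_\vartheta$ equals the acceptance of $\mathcal{A}_\vartheta$'s unique run on $r_0, r_1, \ldots$. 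By Definition~\ref{def::sat_game} and Lemma~\ref{lem:bucparintersect}, $\mathcal{A}_\vartheta$ accepts iff both $\mathcal{A}_\vartheta^\allpath$ and $\mathcal{A}_\vartheta^\expath$ accept, i.e.\ by Theorems~\ref{thm:dpa for no bad A-traces} and~\ref{thm:dba for no bad E-traces} iff the play contains neither a bad $\allpath$-trace nor a bad $\expath$-trace; and since every trace is of one of these two kinds (Lemma~\ref{every trace form}), this is equivalent to the absence of any bad trace, which is exactly player 0's condition $L$ on infinite plays. (Lemma~\ref{modal rule infinitely often} confirms that such an infinite play indeed applies modal rules infinitely often, so the dichotomy between reaching a dead end and being genuinely infinite is exhaustive.)

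Combining the cases, the play bijection identifies the plays won by player 0 in $\mathcal{G}_\vartheta$ with those won by her in $\mathcal{P}_\vartheta$, so the transferred strategies are winning precisely when the originals are, giving the claimed equivalence in both directions. I expect the genuine mathematical content to reside entirely in the automaton constructions already proved; the only thing that needs real care here is the bookkeeping at the seams --- in particular matching the two finite-play clauses of $L$ with the two dead-end cases of $\Omega'$, and checking that the ``otherwise'' priority-$1$ case of $\Omega'$ is never entered along a genuinely infinite play. I therefore do not anticipate a serious obstacle, in accordance with the remark that the two notions of winning strategy essentially coincide.
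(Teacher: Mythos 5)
Your proposal is correct and follows essentially the same route as the paper: the paper's proof likewise exploits determinism of $\mathcal{A}_\vartheta$ to obtain an invertible projection between plays of $\mathcal{P}_\vartheta$ and $\mathcal{G}_\vartheta$, observes that a play is won by player~0 in $\mathcal{P}_\vartheta$ iff its projection is accepted by $\mathcal{A}_\vartheta$ or ends in a consistent set of literals, and transfers winning strategies by projecting away, resp.\ re-attaching, the automaton component. Your write-up merely makes explicit the bookkeeping (ownership preservation, the two dead-end cases of $\Omega'$, and the fact that the priority-$1$ ``otherwise'' case is never entered along a genuinely infinite play) that the paper leaves implicit.
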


\begin{proof}
Let $\pi$ be a play $(C_0,q_0), (C_1,q_1), \ldots$  of 
$\mathcal{P}_\vartheta$, and let $\pi' = C_0,C_1,\ldots$ be its projection onto the first components
which ends at the first configuration on which no rule can be applied.  The sequence $\pi'$ is indeed a 
play in $\mathcal{G}_\vartheta$. Note that this projection is invertible: for every play $\pi'$ in $\mathcal{G}_\vartheta$
there is a unique annotation with states of the deterministic automaton~$\mathcal{A}_\vartheta$ leading
to a play $\pi$ in $\mathcal{P}_\vartheta$. Now we have the following.
\begin{align*}
  \pi \text{\ is won by player } 0 
  \enspace &\Leftrightarrow \enspace 
  \pi' \text{\ is accepted by } \mathcal{A}_\vartheta \text{, or $\pi'$ ends in a consistent set of literals}
\\  
  \enspace &\Leftrightarrow \enspace 
  \pi' \text{\ is won by player } 0
\end{align*}
Thus, the projection of a winning strategy for player 0 in $\mathcal{P}_\vartheta$ is a winning strategy for her
in $\mathcal{G}_\vartheta$, and conversely, every winning strategy there can be annotated with automaton states in
order for form a winning strategy for her in $\mathcal{P}_\vartheta$.
\end{proof}

\begin{corollary}\label{cor:2EXPTIME}
Deciding satisfiability for some $\vartheta \in$ \ctlstar is in 2\EXPTIME. 
\end{corollary}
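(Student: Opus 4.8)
The plan is to chain the two correctness results already established and then read off the complexity from the size of the resulting parity game. By Theorem~\ref{thm:correctness}, $\vartheta$ is satisfiable iff player 0 wins the satisfiability game $\mathcal{G}_\vartheta$, and by Theorem~\ref{thm:gamescorrect} the latter holds iff player 0 wins the satisfiability parity game $\mathcal{P}_\vartheta$ of Definition~\ref{def::sat_game}. Hence deciding satisfiability of $\vartheta$ reduces to constructing and solving $\mathcal{P}_\vartheta$, and it remains to bound the number of nodes and the number of priorities of this game and to invoke the known complexity of parity-game solving.

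First I would collect the size and index bounds of the automata feeding into $\mathcal{P}_\vartheta$. The DBA $\mathcal{A}_\vartheta^\expath$ has at most $k \cdot (1 + 2^{|\fl\vartheta|})$ states by Theorem~\ref{thm:dba for no bad E-traces}, where $k \le |\vartheta|$ is the number of $\until$-subformulas; since $|\fl\vartheta| \le 2|\vartheta|$ this is $2^{\mathcal O(|\vartheta|)}$. The DPA $\mathcal{A}_\vartheta^\allpath$ has size $2^{2^{\mathcal O(|\vartheta|)}}$ and index $2^{\mathcal O(|\vartheta|)}$ by Theorem~\ref{thm:dpa for no bad A-traces}. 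Feeding these into the intersection construction of Lemma~\ref{lem:bucparintersect} yields the DPA $\mathcal{A}_\vartheta = (Q, \ldots, \Omega)$ whose state count is bounded by the product of the two sizes and the index of $\mathcal{A}_\vartheta^\allpath$, that is $2^{\mathcal O(|\vartheta|)} \cdot 2^{2^{\mathcal O(|\vartheta|)}} \cdot 2^{\mathcal O(|\vartheta|)} = 2^{2^{\mathcal O(|\vartheta|)}}$, and whose index satisfies $|\Omega[Q]| \le 2^{\mathcal O(|\vartheta|)} + 1 = 2^{\mathcal O(|\vartheta|)}$.

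Next I would combine these with the size of the configuration space. The node set of $\mathcal{P}_\vartheta$ is $\goals{\vartheta} \times Q$, and since $|\goals{\vartheta}|$ is at most doubly exponential in $|\vartheta|$, the overall node count is $|V| = 2^{2^{\mathcal O(|\vartheta|)}}$. The priority function $\Omega'$ takes only the values $0$, $1$, and $\Omega(q)$, so the number of distinct priorities is bounded by $|\Omega[Q]| + 2 = 2^{\mathcal O(|\vartheta|)}$, i.e.\ only singly exponential. Solving a parity game takes time polynomial in the number of nodes and exponential in the number of priorities; a polynomial in a doubly exponential quantity is again doubly exponential, as is an exponential of a singly exponential quantity, and their product stays doubly exponential. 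This bounds the running time of the solving step by $2^{2^{\mathcal O(|\vartheta|)}}$.

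The only point requiring care is that the whole pipeline --- building $\goals{\vartheta}$, the two trace automata, their intersection, and the product game --- can be carried out within doubly exponential time as well, so that construction does not dominate solving; this is routine once the size bounds above are in place, the single genuinely expensive ingredient being the B\"uchi determinisation hidden inside $\mathcal{A}_\vartheta^\allpath$. Putting the construction and the solving bound together with the two correctness theorems gives a $2^{2^{\mathcal O(|\vartheta|)}}$ decision procedure, i.e.\ membership in 2\EXPTIME.
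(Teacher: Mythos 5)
Your proposal is correct and follows essentially the same route as the paper: bound the number of nodes of $\mathcal{P}_\vartheta$ by $|\goals{\vartheta}|\cdot|Q| = 2^{2^{\mathcal O(|\vartheta|)}}$ using the automata bounds from Theorems~\ref{thm:dpa for no bad A-traces} and~\ref{thm:dba for no bad E-traces} together with Lemma~\ref{lem:bucparintersect}, note that the index stays singly exponential, and invoke the $m^{\mathcal O(k)}$ parity-game solving bound. Your additional remark that the construction of the game itself fits within doubly exponential time is left implicit in the paper but is a sound and welcome addition.
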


\begin{proof}
The number of states in $\mathcal{P}_\vartheta$ is bounded by
\begin{displaymath}
  |\goals{\vartheta}| 
  \enspace\cdot\enspace
  |Q|
  \enspace=\enspace
  2^{2^{\mathcal{O}(|\vartheta|)}} 
  \enspace\cdot\enspace
  2^{2^{\mathcal{O}(|\vartheta|)}}                        %
  \cdot 
  2^{\mathcal{O}(|\vartheta|)}                            %
  \cdot 
  |\vartheta| \cdot (1 + 2^{\mathcal{O}(|\vartheta|)})    %
  \enspace=\enspace 
  2^{2^{\mathcal{O}(|\vartheta|)}}
\end{displaymath}
Note that the out-degree of the parity game graph is at most $2^{|\vartheta|}$ because
of rule \XruleWithE. The game's index is $2^{\mathcal{O}(|\vartheta|)}$. 
It is known that parity games of size $m$ and index $k$ can be solved
in time $m^{\mathcal{O}(k)}$~\cite{Schewe/07/Parity} from which the
claim follows immediately.
\end{proof}

\subsection{Model Theory}
\label{subsec:decproc:modeltheory}

\begin{corollary}\label{cor:finite model}
Any satisfiable \ctlstar formula $\vartheta$ has a model of size at most $2^{2^{\mathcal{O}(|\vartheta|)}}$ and 
branching-width at most $2^{|\vartheta|}$.
\end{corollary}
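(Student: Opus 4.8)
The plan is to combine the game characterisation of satisfiability with positional determinacy of parity games and the model-extraction embedded in the soundness proof. First, since $\vartheta$ is satisfiable, Theorem~\ref{thm:correctness} gives player 0 a winning strategy in $\mathcal{G}_\vartheta$, and Theorem~\ref{thm:gamescorrect} transfers this to a winning strategy in the satisfiability parity game $\mathcal{P}_\vartheta$. Parity games enjoy positional determinacy, so I would fix a \emph{positional} winning strategy $\sigma$ for player 0 in $\mathcal{P}_\vartheta$. Let $G_\sigma$ be the finite subgraph of $\mathcal{P}_\vartheta$ consisting of all vertices reachable from $v_0 = (\expath\vartheta,q_0)$ along edges that agree with $\sigma$ at player-0 nodes. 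Every infinite path through $G_\sigma$ is a play conforming to $\sigma$, hence won by player 0.

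Second, I would run the construction from the proof of Theorem~\ref{thm:soundness} on $G_\sigma$ rather than on the (a priori infinite) unfolding tree. Projecting $G_\sigma$ onto its configuration components yields a finite subgame of $\mathcal{G}_\vartheta$ along which player 0 wins every infinite play; in particular no such play contains a bad trace. Taking $\mathcal{S}$ to be the configurations in this graph that are leaves or to which a modal rule $\XruleNoE$ or $\XruleWithE$ applies, and collapsing the intervening non-modal rule applications exactly as in the soundness proof, defines a transition system $\Transsys_\vartheta$. The verification that $\Transsys_\vartheta, \widehat r \models \expath\vartheta$ is then verbatim the argument of Theorem~\ref{thm:soundness}: it only uses that every infinite play in the strategy structure avoids bad traces, which holds for $G_\sigma$ as well.

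Third, the size bounds fall out of the earlier estimates. The states of $\Transsys_\vartheta$ correspond to the nodes of $G_\sigma$ that are leaves or carry a modal rule application, so their number is bounded by $|V| = 2^{2^{\mathcal{O}(|\vartheta|)}}$ from Corollary~\ref{cor:2EXPTIME}. The out-degree of such a state equals the number of successors of its configuration under the modal rule applied there: it is $1$ for $\XruleNoE$, and the number of $\expath$-blocks in the configuration for $\XruleWithE$, which, as observed in the proof of Corollary~\ref{cor:2EXPTIME}, is at most $2^{|\vartheta|}$. This yields the claimed branching-width.

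The main obstacle will be making precise that the soundness construction, phrased for an infinite strategy tree, may be applied to the finite folded graph $G_\sigma$ without breaking its invariants --- i.e.\ that folding the tree back into $G_\sigma$ preserves the property that all infinite plays avoid bad traces, so that the trace-based contradiction argument in Theorem~\ref{thm:soundness} still goes through. Once positional determinacy has been invoked this is routine, since the infinite plays of $G_\sigma$ are precisely the $\sigma$-conforming plays of $\mathcal{P}_\vartheta$, all of which are winning for player 0.
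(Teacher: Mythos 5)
Your proposal is correct and takes essentially the same route as the paper's own proof: both invoke Theorems~\ref{thm:correctness} and~\ref{thm:gamescorrect} to obtain a winning strategy in $\mathcal{P}_\vartheta$, pass to a positional one, represent it as a finite graph bounded by the game size, extract the model via the construction of Theorem~\ref{thm:soundness}, and read off the branching-width from the out-degree of rule $\XruleWithE$. The paper states this more tersely, but your added care about transferring the soundness argument from the infinite strategy tree to the folded finite graph is exactly the (routine) detail the paper leaves implicit.
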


\begin{proof}
Suppose $\vartheta$ is satisfiable. According to Theorems~\ref{thm:correctness} and~\ref{thm:gamescorrect}, 
player $0$ has a winning strategy for $\mathcal{P}_\vartheta$. It is well-known that she then also has a 
positional winning strategy~\cite{TCS::Zielonka1998}. A positional strategy can be represented as a finite
graph of size bounded by the size of the game graph. A model for $\vartheta$ can be obtained from this winning
strategy as it is done exemplarily in \refsection{sec:tableaux} and in detail in the proof of 
Theorem~\ref{thm:soundness}. The upper-bound on the branching-width is given by the fact that rule $\XruleWithE$ 
can have at most $2^{|\vartheta|}$ many successors. 
\end{proof}

The exponential branching-width stated in Corollary~\ref{cor:finite model} can be improved to a linear one by
restricting the rule applications.  The following argumentation
implicitly excludes the rules~\XruleNoE{} and~\XruleWithE{}.
Therefore, any considered rule application has exactly one principal
formula.

We limit the application of every rule besides~\XruleNoE{}
and~\XruleWithE{} to those applications where the principal formula is
a largest formula among those formulas in the configuration which do not have
$\nxt$ as their outermost connectives.
Following the proof of Theorem~\ref{thm:completeness}, any ordering
on the rules does not affect the completeness.

As a measure of a configuration we take the number of its $\expath$-blocks plus
the number of formulas having the form $\expath\varphi$ such that this
formula is a subformula, but not under the scope of an
$\nxt$-connective, of some formula in the configuration and such that
$\expath\{\varphi\}$ is not a block in this configuration.  This measure is
bounded by $|\vartheta|+1$ at the initial configuration $\expath\{\vartheta\}$ and at
every successor of the rules $\XruleNoE$ and $\XruleWithE$.

The size restriction ensures that any rule instance
apart from $\XruleNoE$ and $\XruleWithE$ weakly decreases the measure.
First, we consider the contribution of formulas to the measure.  An
inspection of the rules entails that any subformula $\expath\varphi$
which contributes to the measure of the configuration at the top of a rule occurs 
at the bottom as a subformula.  For the sake of contradiction, assume
that $\expath\varphi$ does not contribute to the measure of the configuration
at the bottom. Hence, the principal block is preventing $\expath\varphi$
from being counted and, hence, it has the shape
$\expath\{\varphi\}$. Therefore, the formula which hosts
$\expath\varphi$ is larger than the principal. But this situation
contradicts the size restriction.
Secondly, only the rules $\EErule$ and $\AErule$ can produce new $\expath$-blocks.  
If a formula $\expath\varphi$ is excluded from the measure of the configuration at the bottom
then and only then $\expath\{\varphi\}$ is a block in this configuration.
Therefore, in the positive case this block is not new at the top.
And in the negative case the new block at the top is paid by the
formula at the bottom and prevents other instances of this formula
at the top from being counted.

Putting this together with the argumentation in Corollary~\ref{cor:finite model}
yields the following.

\begin{corollary}
  \label{cor:finite model linear branch width}
  Any satisfiable \ctlstar formula $\vartheta$ has a model of size at most $2^{2^{\mathcal{O}(|\vartheta|)}}$ and 
  branching-width at most $|\vartheta|$.
\end{corollary}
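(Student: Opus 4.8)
The plan is to reuse the model construction underlying Corollary~\ref{cor:finite model} verbatim and to superimpose on it the rule-application discipline and the measure developed just above, so that only the branching-width estimate changes while the size estimate is inherited unchanged. Concretely, from a satisfiable $\vartheta$ I would first invoke Theorems~\ref{thm:correctness} and~\ref{thm:gamescorrect} to obtain a winning strategy for player~$0$ in $\mathcal{P}_\vartheta$, pass to a positional such strategy by positional determinacy of parity games, and collapse it to a transition system exactly as in the proof of Theorem~\ref{thm:soundness}. Since the restricted game arises from $\mathcal{P}_\vartheta$ merely by deleting edges, its size is still bounded by $2^{2^{\mathcal{O}(|\vartheta|)}}$, so the stated size bound carries over without further work.

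The new content is the branching-width. In the collapsed transition system the out-degree of a state is the number of successors produced by the single time-step rule \XruleWithE{}, which is exactly the number of $\expath$-blocks in the configuration at which \XruleWithE{} is applied. Hence it suffices to bound that number linearly in $|\vartheta|$. The plan is first to argue, following the proof of Theorem~\ref{thm:completeness}, that imposing the discipline ``the principal formula is a largest formula whose outermost connective is not $\nxt$'' does not destroy completeness, because that proof already tolerates an arbitrary but eligible ordering of the rules. I would then track the measure defined above along a play: it is bounded by $|\vartheta|+1$ at the initial configuration and at every successor of \XruleNoE{} and \XruleWithE{}, and every non-modal rule instance weakly decreases it. At any configuration to which \XruleWithE{} applies, all top-level formulas carry $\nxt$ as their outermost connective, so the second summand of the measure --- counting $\expath$-subformulas not under an $\nxt$ --- vanishes, and the measure there equals precisely the number of $\expath$-blocks. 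Combining this with the weak monotonicity and the initial bound yields the desired linear bound on the branching-width, and Corollary~\ref{cor:finite model} then supplies the size bound and the model construction, completing the proof.

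I expect the delicate point to be the weak-decrease claim for the only two rules that can manufacture fresh $\expath$-blocks, namely \EErule{} and \AErule{}. Here I would use the amortisation already sketched above: a newly created singleton block $\expath\{\varphi\}$ is paid for by the occurrence of the subformula $\expath\varphi$ that the second summand of the measure was previously counting as a \emph{potential} block, and once $\expath\{\varphi\}$ has become an actual block that potential contribution is suppressed. The size restriction is exactly what makes this bookkeeping sound: when \EErule{} or \AErule{} fires on $\expath\varphi$, that formula is a largest non-$\nxt$ formula of the configuration, so no strictly larger host formula can still be hiding the extracted $\expath\varphi$ from the count in a way that would let the measure grow. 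Verifying that this accounting covers all interaction cases between the two summands of the measure --- rather than re-deriving the size bound, which is routine --- is where the real work lies.
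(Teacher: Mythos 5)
Your plan is the paper's argument almost verbatim: the same ``largest non-$\nxt$ principal formula'' discipline, the same two-summand measure, the same invariant (value at most $|\vartheta|+1$ at the initial configuration and after every modal step, weak decrease under all other rules), the same observation that at an \XruleWithE{}-configuration every formula in a block is $\nxt$-prefixed, so the second summand vanishes and the measure is exactly the number of $\expath$-blocks, and the size bound inherited from Corollary~\ref{cor:finite model}. All of that is fine.

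However, you have located the one genuinely delicate step in the wrong place. For \EErule{} and \AErule{} the amortisation needs \emph{no} size restriction: the freshly created block $\expath\{\varphi\}$ adds one to the first summand, and at the same moment the formula $\expath\varphi$ --- which was a member of the principal block, hence a subformula of a formula of the configuration not under an $\nxt$ --- drops out of the second summand precisely because $\expath\{\varphi\}$ has become a block. Since the second summand counts distinct formulas rather than occurrences, the presence or absence of a larger host for $\expath\varphi$ is irrelevant there, and the net change is at most zero unconditionally. The step that actually requires the size restriction is the converse one, which your plan does not single out: a decomposition rule (say \EAndrule{} applied to a singleton block $\expath\{\psi_1\wedge\psi_2\}$) destroys the block $\expath\{\psi_1\wedge\psi_2\}$ while leaving the first summand unchanged, and if $\expath(\psi_1\wedge\psi_2)$ is still a subformula, not under an $\nxt$, of some \emph{other} formula $\chi$ of the configuration, then it re-enters the second summand and the measure strictly increases. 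The size restriction forbids exactly this situation: such a host $\chi$ would be a formula without outermost $\nxt$ that is strictly larger than the principal formula $\psi_1\wedge\psi_2$, so the restricted strategy must decompose $\chi$ first. If you carry out the careful verification only for \EErule{} and \AErule{}, as announced, you will certify the easy cases and miss the one where the measure can in fact grow.
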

These upper bounds are asymptotically optimal,
c.f.\,the proof of the 2\EXPTIME--lower-bound~\cite{STOC85*240} and
the satisfiable formula
$\bigwedge_{i=1}^n \expath\nxt (\neg p_i \wedge p_{i+1}) \;\wedge\; \bigwedge_{i=1}^n \allpath\nxt(p_i \to p_{i+1})$
which forces any model to be of branching-width~$n$.

\section{On Fragments of \ctlstar}
\label{sec:fragments}

The logic \ctlstar has two prominent fragments: \ctlplus and \ctl.
These logics allow refining the decision procedure detailed in
\refsection{sec:decproc}.
The obtained procedures are conceptionally simpler and have an optimal time-complexity.

\subsection{The Fragment \ctlplus} 
\label{sec:ctlplus}

The satisfiability problem for \ctlplus is 2\EXPTIME-hard~\cite{JL-icalp03}
and hence ---as a fragment of \ctlstar--- it is also 2\EXPTIME-complete.
Nevertheless, \ctlplus is as expressive as \ctl~\cite{EmersonHalpern85}.
Hence, the question arises whether the lower expressivity compared to \ctlstar
leads to a simpler decision procedure.

As \ctlplus is a fragment of \ctlstar we can apply the introduced games.
However, the occurring formulas will not necessarily be \ctlplus-formulas again,
because the fixpoint rules can prefix an  $\nxt$-constructor to the respective $\until$- or $\release$-formula.
Nevertheless, the grammar for \ctlplus can be expanded accordingly.
The new kinds are attached to line~\eqref{eq:ctlplus grammar:psi}.
\begin{align}\tag{\ref{eq:ctlplus grammar:psi}'}\label{eq:ctlplus grammar:psi'}
  \psi    \quad &{::=} \quad \varphi     \mid  \psi \lor \psi  \mid \psi \land \psi       \mid  \nxt \varphi     \mid  \varphi \until \varphi
  \mid  \varphi \release \varphi \mid \nxt (\varphi \release \varphi) \mid \nxt (\varphi \until \varphi)
\end{align}
The lines~\eqref{eq:ctlplus grammar:varphi} and~\eqref{eq:ctlplus grammar:psi'}
now define the grammar which every game follows.
The usage of these new formulas does not affect any of the used asymptotic measures.
The restriction to \ctlplus does not allow major simplification for the 
automata~$\mathcal{A}_\vartheta^{\expath}$ constructed in \refsubsection{subsec:decproc:Etraces}.  
However, the automata~$\mathcal{A}_\vartheta^{\allpath}$ 
which rejects plays containing bad $\allpath$-traces can be essentially simplified:
The refined construction bases on a coB\"uchi- instead of a B\"uchi-determinisation,
and hence leads to a simpler acceptance condition.
Due to Theorem~\ref{thm:ncoba2dba} it suffices to 
construct an exponentially sized NcoBA which detects an $\allpath$-trace which does
not contain any $\release$-thread.

For the rest of the subsection, fix a \ctlplus-formula $\vartheta$ and 
consider an infinite play in the game $\mathcal{G}_{\vartheta}$.
Let $(\qpath_i\Delta_i)_{i \in \Nat}$ be a trace in this play.
A position $i_0$ in this trace is called \emph{$\nxt$-stable} iff ---firstly--- the index $i_0$ addresses 
some top configuration either of the rule \XruleNoE{} or of \XruleWithE{}, and ---secondly---
the connection $\qpath_i\Delta_i \leadsto \qpath_{i+1}\Delta_{i+1}$ is not spawning
for every $i\geq i_0$.
By Lemma~\ref{modal rule infinitely often} and~\ref{every trace form} every trace 
has infinitely many $\nxt$-stable indices.

\begin{lemma}
  \label{lem:ctl+:seq state}
  Let $(\qpath_i\Delta_i)_{i \in \Nat}$ be a trace, let $i_0$ be one of its $\nxt$-stable positions,
  let $N \in \Nat$  and let $(\psi_i)_{i \leq N}$ be a sequence of connected formulas in the trace.
  If there is an $i_1 \geq i_0$ such that $\psi_{i_1}$ is a state formula then
  $\psi_{j}$ is a state formula for all~$j \geq i_1$.
\end{lemma}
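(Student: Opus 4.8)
The plan is to fix attention on the non-spawning tail determined by $i_0$ and prove the statement by induction on $j$ running from $i_1$ up to $N$. First I would pin down the right notion of \emph{state formula} for the formulas occurring in blocks of a \ctlplus-game: by the (extended) grammar of lines~\eqref{eq:ctlplus grammar:varphi} and~\eqref{eq:ctlplus grammar:psi'}, such a formula is a state formula exactly when it is generated by the $\varphi$-nonterminal, i.e.\ a positive Boolean combination of literals and formulas of the shape $\expath\chi$ or $\allpath\chi$. Two syntactic facts are then immediate and will do all the work: a state formula never has $\nxt$, $\until$ or $\release$ as its outermost connective, and if $\psi' \land \psi''$ or $\psi' \lor \psi''$ is a state formula then both $\psi'$ and $\psi''$ are. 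With these in hand it suffices to show, for each $i$ with $i_1 \le i < N$, that if $\psi_i$ is a state formula then so is $\psi_{i+1}$; since $i \ge i_0$, the block connection $\qpath_i\Delta_i \leadsto \qpath_{i+1}\Delta_{i+1}$ is not spawning, and this is the hypothesis I will exploit.

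For the inductive step I would distinguish whether $\psi_i$ is principal in the rule applied at step $i$. If it is not principal, then $\psi_{i+1}$ is a copy of $\psi_i$ and there is nothing to prove. If it is principal, I argue by the outermost shape of $\psi_i$. Because $\psi_i$ is a state formula it is not of the form $\nxt\chi$, so the step cannot be an application of $\XruleNoE$ or $\XruleWithE$: these fire only on configurations all of whose block-formulas carry a leading $\nxt$, whereas $\psi_i \in \Delta_i$ witnesses the contrary. For the same reason $\psi_i$ is neither a $\until$- nor a $\release$-formula, hence not principal in $\AUrule$, $\ARrule$, $\EUrule$ or $\ERrule$. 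If $\psi_i$ is a literal, the only rules making it principal are $\ALitrule$, $\ELitrule$ and $\Ettrule$, all of which move it into the literal part, leaving it without a descendant inside any block; this contradicts the existence of $\psi_{i+1} \in \Delta_{i+1}$, so a principal literal simply cannot occur for $i < N$. The genuinely Boolean case $\psi_i = \psi' \land \psi''$ or $\psi_i = \psi' \lor \psi''$, principal in one of $\AAndrule$, $\AOrrule$, $\EAndrule$, $\EOrrule$, is settled by the second syntactic fact: $\psi_{i+1}$ is one of $\psi', \psi''$, each a state formula.

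The remaining and decisive case is $\psi_i = \expath\chi$ or $\psi_i = \allpath\chi$, which can be principal only in $\EErule$, $\EArule$, $\AArule$ or $\AErule$. Here I would observe that each of these rules pulls $\psi_i$ out of its block into a fresh singleton block $\{\chi\}$, so the only descendant of $\psi_i$ sits in that new block and the corresponding block connection is, by definition, spawning. As we are in the non-spawning tail, the trace cannot follow the spawn; it continues through the residual block, which no longer contains $\psi_i$ nor any descendant of it. Consequently $\psi_i$ cannot be principal while $\psi_{i+1} \in \Delta_{i+1}$ exists, and the only remaining possibility is that $\psi_i$ is copied, giving $\psi_{i+1} = \psi_i$. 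This is the step I expect to be the main obstacle, since it requires recognising that a quantified state formula can be consumed only through a spawning connection --- precisely what the $\nxt$-stability of $i_0$ forbids --- and it is exactly where the \ctlplus restriction (no nesting of temporal operators, quantifier arguments being state formulas) is used. Collecting all cases yields that $\psi_{i+1}$ is a state formula whenever $\psi_i$ is, and the induction then gives the claim for all $j$ with $i_1 \le j \le N$.
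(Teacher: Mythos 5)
Your proof is correct and takes essentially the same route as the paper's (much terser) argument: the paper merely notes that a state formula on the trace either disappears, spawns a new block outside the trace, or is decomposed into a smaller state formula, which is exactly the case analysis you carry out rule by rule. Your explicit reading of ``state formula'' as a formula generated by the $\varphi$-nonterminal of the \ctlplus{} grammar, and your observation that a quantified formula can only be consumed through a spawning connection (excluded by $\nxt$-stability), are precisely what is needed to make the paper's sketch rigorous.
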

\begin{proof}
  Every state formula in this trace eventually either
  disappears entirely ---by the rule~\ALitrule{} for instance---, 
  forms a new block \emph{outside} the trace ---by rule~\EErule{} for instance---,
  or get decomposed into a smaller state formula ---by rule~\EOrrule{} for instance---.
  One of these cases must happen before the rules~\XruleNoE{} or~\XruleWithE{} are applied.
  Finally, one of the two modal rules must be applied eventually due to Lemma~\ref{modal rule infinitely often}.
\end{proof}

For every thread Lemma~\ref{every UR-thread form} reveals 
a position which describes the corresponding suffix of the thread.
Next, we can strengthen this position to an  $\nxt$-stable position.

\begin{lemma}
  \label{lem:ctl+:thread}
  Let $(\qpath_i\Delta_i)_{i \in \Nat}$ be a trace and let $i_0$ be one of its $\nxt$-stable positions. 
  Every thread $(\psi_i)_{i \in \Nat}$ in the trace satisfies: 
  $\psi_i = \psi_{i_0}$ or $\psi_i = \nxt\psi_{i_0}$, for all $i \geq i_0$.
\end{lemma}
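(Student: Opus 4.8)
The plan is to exploit the syntactic restriction of \ctlplus---that the immediate arguments of every $\until$- and $\release$-operator are \emph{state formulas} (lines~\eqref{eq:ctlplus grammar:varphi} and~\eqref{eq:ctlplus grammar:psi'})---in order to pin the thread down to the two values $\psi_{i_0}$ and $\nxt\psi_{i_0}$ from the $\nxt$-stable position $i_0$ onwards. First I would record what $\psi_{i_0}$ can be. Since $i_0$ addresses the top of an application of \XruleNoE{} or \XruleWithE{}, the formula $\psi_{i_0}$ arises by stripping a leading $\nxt$ off $\psi_{i_0-1}=\nxt\psi_{i_0}$ inside a block. By the expanded grammar, the only $\nxt$-formulas in $\fl{\vartheta}$ are $\nxt\varphi$ with $\varphi$ a state formula, $\nxt(\varphi\until\varphi')$, and $\nxt(\varphi\release\varphi')$; hence $\psi_{i_0}$ is a state formula, a $\until$-formula, or a $\release$-formula.

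The heart of the argument---and the step I expect to be the main obstacle---is to rule out that the thread ever carries a state formula at or after $i_0$. Suppose $\psi_{i_1}$ were a state formula for some $i_1\ge i_0$. By Lemma~\ref{lem:ctl+:seq state} every $\psi_j$ with $j\ge i_1$ is then a state formula, so the thread would remain within state formulas forever. I would derive a contradiction from the fact that a state formula sitting in a block must be fully transformed before the next application of a modal rule---which exists by Lemma~\ref{modal rule infinitely often}---because \XruleNoE{} and \XruleWithE{} require every non-literal block member to carry a leading $\nxt$. Inspecting the rules that can transform such a formula, the possibilities are: a Boolean rule (\AAndrule, \AOrrule, \EAndrule, \EOrrule), which replaces it by a \emph{strictly smaller} state formula and can therefore be iterated only finitely often by well-foundedness of the subformula order; a literal rule (\ALitrule, \ELitrule), which removes it and would end the thread; or a quantifier rule (\AArule, \AErule, \EErule, \EArule), whose only thread-preserving effect is to move the quantified formula into a fresh singleton block, i.e.\ to create a \emph{spawning} connection. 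The spawning case contradicts $\nxt$-stability of $i_0$, and the two terminal cases contradict the thread being infinite. Hence no $\psi_j$ with $j\ge i_0$ is a state formula, and in particular $\psi_{i_0}$ is a $\until$- or a $\release$-formula.

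Finally I would track the thread forward from $i_0$. The formula $\psi_{i_0}$ is carried unchanged as a non-principal copy until it becomes principal in its unfolding rule (\AUrule/\ARrule or \EUrule/\ERrule, or their $\finally$/$\generally$ variants). In each such rule the descendants of $\psi_{i_0}$ are exactly its two state-formula arguments together with $\nxt\psi_{i_0}$; since state formulas are excluded by the previous step, the thread is forced onto $\nxt\psi_{i_0}$. As $\nxt\psi_{i_0}$ is an $\nxt$-formula it is principal in no rule other than a modal one, so it persists as a copy until the next modal rule strips it back to $\psi_{i_0}$, and the cycle repeats. Thus $\psi_i\in\{\psi_{i_0},\nxt\psi_{i_0}\}$ for all $i\ge i_0$, as claimed. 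As an alternative packaging, one may instead extract a stabilising index from Lemma~\ref{every UR-thread form} and then use the state-formula exclusion together with the $\nxt$-strip at $i_0$ to push that stabilisation back to $i_0$ itself.
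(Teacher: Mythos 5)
Your proposal is correct and follows essentially the same route as the paper's (much terser) proof: exclude state formulas from the thread's tail via Lemma~\ref{lem:ctl+:seq state}, use the modal rule applied at position $i_0-1$ to pin down the shape of $\psi_{i_0}$, and conclude. The only difference is cosmetic: where the paper closes the state-formula case by observing it would contradict Lemma~\ref{every UR-thread form}, you re-derive that contradiction by a direct rule-by-rule analysis (Boolean rules decrease, literal rules terminate the thread, quantifier rules spawn) --- an alternative you yourself note at the end.
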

\begin{proof}
  The thread cannot hit any state formula, because by 
  Lemma~\ref{lem:ctl+:seq state} the thread would violate Lemma~\ref{every UR-thread form}.
  The application of the rule \XruleNoE{} or~\XruleWithE{} to the configuration at index $i_0-1$
  entails that $\psi_{i_0}$ is a $\until$- or an $\release$-formula.
  In particular along the remaining suffix, the thread must not hit a state formula.
  Therefore, the formula $\psi_i$ is either $\psi_{i_0}$ or $\nxt\psi_{i_0}$ for all $i \geq i_0$.
\end{proof}

\begin{theorem}
  \label{thm:ctl+:bad A trace}
  Let $(\qpath_i\Delta_i)_{i \in \Nat}$ be an $\allpath$-trace and let $i_0$ be one of its $\nxt$-stable positions. 
  We have that: the trace is bad,  iff $\Delta_i$ does not contain any $\release$- or $\nxt\release$-formula for some $i\geq i_0$.
\end{theorem}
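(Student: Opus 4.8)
The plan is to prove the two implications of the equivalence separately, using Lemma~\ref{lem:ctl+:thread} (the ``pinning'' of threads past an $\nxt$-stable position) and the restricted \ctlplus grammar as the two essential tools. I would set up notation by listing the modal positions $i_0 = m_0 < m_1 < m_2 < \cdots$ past $i_0$, where each $m_k$ is a top configuration of an application of \XruleNoE{} or \XruleWithE{}; these exist and are cofinal by Lemmas~\ref{modal rule infinitely often} and~\ref{every trace form}. The structural fact I would record first is that each $\Delta_{m_k}$ contains \emph{no} $\nxt$-formula and that its only non-state inhabitants are $\until$- and $\release$-formulas: since $\Delta_{m_k}$ arises by stripping the leading $\nxt$ from the block $\Delta_{m_k-1}$ to which the modal rule is applied, and that block consists solely of $\nxt$-formulas, while in \ctlplus every $\nxt$-formula has one of the shapes $\nxt\varphi$ (with $\varphi$ a state formula), $\nxt(\varphi\until\psi)$, or $\nxt(\varphi\release\psi)$, the arguments exposed at $m_k$ are exactly state-, $\until$-, and $\release$-formulas.

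For the direction ``the trace is good $\Rightarrow$ every $\Delta_i$ ($i\ge i_0$) contains a $\release$- or $\nxt\release$-formula'', suppose the trace carries a $\release$-thread $(\psi_j)_j$ with $\psi_j=\varphi\release\psi$ for infinitely many $j$. By Lemma~\ref{lem:ctl+:thread}, $\psi_j\in\{\psi_{i_0},\nxt\psi_{i_0}\}$ for all $j\ge i_0$; as $\nxt\psi_{i_0}$ cannot equal the non-$\nxt$ formula $\varphi\release\psi$, the pinned formula $\psi_{i_0}$ must itself be $\varphi\release\psi$. Hence $\psi_j\in\{\varphi\release\psi,\ \nxt(\varphi\release\psi)\}$ and, since $\psi_j\in\Delta_j$, every such $\Delta_j$ meets the condition. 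This is precisely the contrapositive of one half of the claimed equivalence.

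For the converse I argue contrapositively: assuming $\Delta_i$ contains a $\release$- or $\nxt\release$-formula for \emph{every} $i\ge i_0$, I construct a $\release$-thread. The engine is a ``flow'' argument on the finite set $\flr{\vartheta}$. Writing $\mathcal R_k := \{R \text{ a }\release\text{-formula} \mid R\in\Delta_{m_k}\}$, I would show (i) each $\mathcal R_k$ is non-empty, since at the boundary $m_k$ there are no $\nxt$-formulas, so the guaranteed $\release$/$\nxt\release$-formula occurs as a bare $R$; and (ii) $\mathcal R_{k+1}\subseteq\mathcal R_k$. For (ii) I track a $\release$-formula across one round: each $R\in\Delta_{m_k}$ is top-level and is unfolded once by \ARrule{} before the next modal rule, its left descendant being the state formula $\psi$ (which decomposes away before the round ends) and its right descendant the $\nxt$-formula $\nxt R$, which is a side formula, hence copied unchanged until the modal rule turns it back into $R$. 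Crucially, no $\release$-formula is ever freshly created in the trace past $i_0$: by the structural fact the only other inhabitants are state formulas, and Boolean/quantifier decomposition of a state formula yields smaller state formulas or, for $\expath$/$\allpath$-headed and literal formulas, is simply dropped (the spawning alternative being excluded on this suffix). Thus every $R\in\Delta_{m_{k+1}}$ already lay in $\Delta_{m_k}$ and went right there, giving (ii).

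Since the $\mathcal R_k$ form a decreasing chain of non-empty subsets of the finite set $\flr{\vartheta}$, they stabilise to some non-empty $\mathcal R^*$ from an index $K$ onward; fix $R^*\in\mathcal R^*$. Then $R^*\in\Delta_{m_k}$ for all $k\ge K$, and by (ii) $R^*$ is unfolded to the right in every round $k\ge K$. Stitching together the side-formula copies of $R^*$ up to its unfolding point, the descendant $\nxt R^*$ across the remainder of the round, and the modal step $\nxt R^*\leadsto R^*$, yields a genuine thread following the trace's own blocks, in which $R^*$ recurs at every $m_k$ with $k\ge K$; this is a $\release$-thread, so the trace is good. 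I expect the main obstacle to be item (ii), namely verifying rigorously that no new $\release$-formula can enter the trace past $i_0$: this is exactly where the \ctlplus restriction (state-formula arguments of temporal operators) together with the non-spawning property of the chosen suffix are indispensable, and it is the very point that fails for full \ctlstar.
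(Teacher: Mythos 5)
Your proof is correct, and the ``only if'' half (good trace $\Rightarrow$ every $\Delta_i$ carries an $\release$- or $\nxt\release$-formula) coincides with the paper's, both being immediate from Lemma~\ref{lem:ctl+:thread}. For the converse, however, you take a genuinely different route. The paper grows a finitely branching tree of connected-formula ancestries of the witnessing $\release$/$\nxt\release$-formulas, extracts an infinite branch by K\"onig's lemma, classifies it via Lemma~\ref{every thread form} as a $\until$- or $\release$-thread, and kills the $\until$ case by observing that the grammar forces a state formula between any $\until$-formula and an $\release$-subformula, contradicting Lemma~\ref{lem:ctl+:seq state}. You instead run a direct stabilisation argument: the sets $\mathcal R_k$ of $\release$-formulas present at successive modal positions are non-empty (since the block exposed by a modal rule contains no $\nxt$-formulas) and form a $\subseteq$-decreasing chain (since, past the $\nxt$-stable point, no rule inside a non-spawning $\allpath$-suffix can manufacture a fresh $\release$-formula), hence stabilise on some $R^*$ that must be unfolded to the right in every round, yielding the $\release$-thread explicitly. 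Your argument is more constructive --- it names the recurring formula rather than inferring its existence from a thread dichotomy --- and it bypasses Lemma~\ref{lem:ctl+:seq state} and K\"onig's lemma at the cost of re-deriving, in the guise of your claim (ii), the same \ctlplus-specific structural content (state formulas cannot regenerate temporal formulas on a non-spawning suffix). Two small points worth making explicit in a final write-up: the trace's quantifier is constantly $\allpath$ past $i_0$ because quantifier changes require spawning connections, which is what licenses your exclusive appeal to \ARrule{}; and the thread you build from $m_K$ onward must be prepended back to position $0$ to match Definition~\ref{def:thread} literally, which is routine since every formula in a block descends from one in the predecessor block.
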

\begin{proof}
  It suffices to show that the trace contains an $\release$-thread iff $\Delta_i$ contains a $\release$- or $\nxt\release$-formula 
  for every $i\geq i_0$. 
  The ``only if'' direction is a consequence of Lemma~\ref{lem:ctl+:thread}.
  As for the ``if'' direction, every $\release$- or $\nxt\release$-formula can be reached from the initial configuration of the game
  by a connected sequence of formulas.  Due to K\"onig's lemma there is a corresponding infinite sequence.
  By Lemma~\ref{every thread form}, this sequence is either a $\until$- or an $\release$-thread.
  If the latter case applies, we are done.  In the first case, infinitely many of the 
  said $\release$- and $\nxt\release$-formulas are reachable from a $\until$-formula.  
  Due to the grammar, a state formula must occur between the $\until$-formula
  and each of the considered $\release$- and $\nxt\release$-formulas.
  However, this situation contradicts Lemma~\ref{lem:ctl+:seq state}.
\end{proof}

The previous theorem is specific for \ctlplus.  For \ctlstar an $\allpath$-trace $(\qpath_i\Delta_i)_{i \in \Nat}$ can be good, 
even if $\Delta_i$ does not contain any $\release$- or $\nxt\release$-formula for some $i\geq i_0$.
Indeed, the $\release$-formula witnessing that the trace is good might be hosted within a $\until$-formula.
A play might delay the fulfillment of this $\until$-formula by several applications of \XruleNoE{} or~\XruleWithE{}.

Theorem~\ref{thm:ctl+:bad A trace} allows us to do without the determinisation
of B\"uchi-automata as used to construct $\mathcal{A}^{\allpath}_\vartheta$ in \refsubsection{subsec:decproc:Atraces}.
Indeed, there is a NcoBA which accepts every trace which contains a bad $\allpath$-traces.
Define the NcoBA $\mathcal{C}^{\allpath, \ctlplus}_\vartheta$ by $(Q,\Sigma^{\mathsf{br}}_\vartheta,\mathtt W,\delta,F)$ where
\[
  Q \enspace := \enspace \{\mathtt W\} \; \cup \; \Big(2^{\fl \vartheta} \times \{0,1,2\}\Big) \text,
\quad\text{ and }\quad
  F \enspace := \enspace 2^{\fl \vartheta} \times \{2\} \,\text.
\]
The automaton starts in the waiting state $\mathtt W$.
Every $\allpath$-trace contains a spawning connection for the last time ---
at least one such connection occurs because the initial configuration is an $\expath$-block.
This connection is generated either by the rule \AArule{} or by \EArule{}.
Thus, $\mathcal{C}^{\allpath, \ctlplus}_\vartheta$ eventually jumps after the corresponding input symbol, 
that is
$(\allpath, \placeholder, \allpath \varphi, 0)$
or
$(\expath, \placeholder, \allpath \varphi, \placeholder)$,
into the state $(\{\varphi\}, 0)$.
Then, $\mathcal{C}^{\allpath, \ctlplus}_\vartheta$ tries to successively 
guess an $\allpath$-trace using the first component.
If the block sequence stops or is spawning then the automaton rejects.
The value $0$ in the second component indicates the range between the last spawning
connection and the first application of rules~\XruleNoE{} and~\XruleWithE{} afterwards.
This application marks an $\nxt$-stable position.
The flags~$1$ and~$2$ are responsible for the remaining sequence starting with value~$1$.
The value is switched to~$2$ iff a block contains neither an $\release$- nor a $\nxt\release$-formula.
In such a situation, the automaton has to verify that the sequence does not break down.
Therefore, the final states of the NcoBA is defined as stated above.

The size of the automaton $\mathcal{C}^{\allpath, \ctlplus}_\vartheta$ is exponential in $|\vartheta|$.  
Hence, the complement of its Miyano-Hayashi determinisation is of double-exponential size 
---c.f.~Theorem~\ref{thm:ncoba2dba}--- 
and can be used in \refsubsection{subsec:decproc:games} instead of the general DPA~$\mathcal{A}^{\allpath}_\vartheta$.
Thus the time complexity of the whole decision procedure is double-exponential.

The advantage of this approach tailored to \ctlplus is the Miyano-Hayashi determinisation.
Their construction is simple to implement because it bases on an elaborated subset-construction only 
compared to known determinisation procedures for general B\"uchi automata~\cite{FOCS88*319, conf/lics/Piterman06}.

Because the small-formula strategy in Subsection~\ref{subsec:decproc:modeltheory} is indepenent of the fragement,
Corollary~\ref{cor:finite model linear branch width} also holds for \ctlplus.
The lower bound for the size is also doubly exponential~\cite{ipl-ctlplus08}.

\subsection{The Fragment \ctl}

The satisfiability problem for \ctl is \EXPTIME-complete.
Again, the question arises whether the lower expressivity compared to \ctlstar
leads to a simpler decision procedure.

As \ctl is a fragment of \ctlstar we could apply the introduced satisfiability
game. However,
this would lead to
games of doubly exponential size,
resulting in an unoptimal
decision procedure.

Hence, we define a new set of configurations and games rules that handle \ctl-formulas in an
optimal way. Due to the fact that subformulas of fixpoints in \ctl are always
state formulas, there is no need to keep the immediate subformulas in the respective
block after unfolding. By placing them at the top-level of the configurations, we can do
without the concept of blocks, since every block contains exactly one subformula.
Hence, these blocks can be understood as \ctl-formulas.

Here, a \emph{configuration (for $\vartheta$)} is a non-empty set of state
formulas of the set $\{\varphi, \expath\nxt \varphi, \allpath\nxt \varphi \mid \varphi \in \subf{\vartheta}\}$.
The additional formulas $\expath\nxt \varphi$ and $\allpath\nxt \varphi$ will be generated when unfolding fixpoints.  
In return, the Fischer-Ladner closure is replaced with the set of subformulas.
The definition of consistency etc.\ is exactly the same as before.

Again, we write $\goals{\vartheta}$ for the set of all consistent configurations for $\vartheta$. Note that this is a
finite set of at most exponential size in $|\vartheta|$.

\begin{figure}[t]%
\[
  \unaryRule{\Andrule}
    {\varphi_1, \varphi_2}
    {\varphi_1 \land \varphi_2}
\qquad
  \choiceRule{\Orrule}
	{\varphi_1}
    {\varphi_2}
    {\varphi_1 \lor \varphi_2}
\]
\[
  \choiceRule{\EUrule}
    {\varphi_2}
    {\varphi_1,\expath\nxt\expath(\varphi_1 \until \varphi_2)}
    {\expath(\varphi_1 \until \varphi_2)}
\qquad
  \choiceRule{\AUrule}
    {\varphi_2}
    {\varphi_1,\allpath\nxt\allpath(\varphi_1 \until \varphi_2)}
    {\allpath(\varphi_1 \until \varphi_2)}
\]
\[
  \choiceRule{\ERrule}
    {\varphi_1,\varphi_2}
    {\varphi_2,\expath\nxt\expath(\varphi_1 \release \varphi_2)}
    {\expath(\varphi_1 \release \varphi_2)}
\qquad
  \choiceRule{\AUrule}
    {\varphi_1,\varphi_2}
    {\varphi_2,\allpath\nxt\allpath(\varphi_1 \release \varphi_2)}
    {\allpath(\varphi_1 \release \varphi_2)}
\]
\[
  \def\addSideFormulas{}
  \unaryRule{\XruleNoE}
    {\varphi_1, \ldots, \varphi_n}
    {\allpath\nxt\varphi_1, \ldots, \allpath\nxt\varphi_n, \Lambda}
\qquad	
  \branchingRule{\XruleWithE}
    {\varphi_1', \varphi_1, \ldots, \varphi_n}
    {\varphi_m', \varphi_1, \ldots, \varphi_n}
    {\expath\nxt\varphi_1', \ldots, \expath\nxt\varphi_m', \allpath\nxt\varphi_1, \ldots, \allpath\nxt\varphi_n, \Lambda}
\]

\caption{The game rules for \ctl.}
\label{fig:pretableaurulesctl}
\end{figure}

\begin{definition}
The satisfiability game for a \ctl-formula $\vartheta$ is a directed graph $\mathcal{G}_{\vartheta} = (\goals{\vartheta},V_0,E,v_0,L)$
whose nodes are all possible configurations and whose edge relation is given by the game rules in Figure~\ref{fig:pretableaurulesctl}.
It is understood that the formulas which are stated explicitly under the line do not occur 
in the sets~$\Lambda$ or~$\sideFormulas$. The symbol $\ell$ stands for an arbitrary literal.
The initial configuration is $v_0 = \vartheta$. The winning condition $L$ will be described next.
\end{definition}

Again, we need to track the infinite behaviour of eventualities. However, the
situation is much easier here. First, we can do without the concept of blocks,
implying that we can do without the concept of traces as well. Second, there is
no structural difference in tracking bad threads contained in $\expath$- or
$\allpath$-blocks, any infinite trace contains exactly one thread, i.e.\
existential quantification and universal quantification over threads in traces
are interchangeable.

The definition of principal formulas and plays is the same as before, and we
again have the definition of connectedness and write
$(\mathcal{C},\varphi) \leadsto (\mathcal{C}',\varphi')$ to indicate that
$\varphi \in \mathcal{C}$ if connected to the subsequent formula $\varphi' \in \mathcal{C}'$.
There
are still infinitely many applications of rules $\XruleNoE$ or $\XruleWithE$ in
a play.

\begin{definition}
Let $\mathcal{C}_0, \mathcal{C}_1, \ldots$ be an infinite play.
A \emph{thread} $t$ within $\mathcal{C}_0, \mathcal{C}_1, \ldots$ again is an
infinite sequence of formulas $\varphi_0, \varphi_1, \ldots$ s.t.\ for all
$i \in \Nat$: $(\mathcal{C}_i,\varphi_i) \leadsto (\mathcal{C}_{i+1},\varphi_{i+1})$.

Again, such a thread $t$ is called a $\until$-thread, resp.\ an $\release$-thread if there is a formula
$\varphi \until \psi \in \subf{\vartheta}$, resp.\ $\varphi \release \psi \in \subf{\vartheta}$ s.t.\
$\psi_j = \varphi \until \psi$, resp.\ $\psi_j = \varphi \release \psi$ for infinitely many~$j$.
\end{definition}

Again, every play contains a thread and every thread is either an $\until$-thread
or a $\release$-thread.

\begin{definition}
An infinite play $\pi = C_0,C_1,\ldots$ belongs to the winning condition $L$ of $\mathcal{G}_{\vartheta} = (\goals{\vartheta},V_0,E,v_0,L)$
if $\pi$ does not contain a $\until$-thread.
\end{definition}

The following can be shown in similar way as Theorem~\ref{thm:correctness}:

\begin{theorem}
\label{thm: ctl correctness}
For all $\vartheta \in$ \ctl: $\vartheta$ is satisfiable iff player 0 has a winning strategy for the satisfiability game
$\mathcal{G}_{\vartheta}$.
\end{theorem}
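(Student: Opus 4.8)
The plan is to mirror the two halves of Theorem~\ref{thm:correctness}---soundness (Theorem~\ref{thm:soundness}) and completeness (Theorem~\ref{thm:completeness})---while exploiting that a \ctl-configuration is merely a set of \emph{state} formulas. This collapses the block machinery: every ``block'' is a single formula, spawning connections vanish, and the $\expath$-trace versus $\allpath$-trace distinction degenerates to the single question of whether a thread is a $\until$-thread. I would freely use the \ctl-analogues of Lemmas~\ref{modal rule infinitely often}, \ref{every thread form} and \ref{every UR-thread form} (infinitely many applications of \XruleNoE{} or \XruleWithE{} per infinite play; every thread is a $\until$- or an $\release$-thread; and such a thread eventually only repeats its fixpoint formula, up to a leading $\nxt$), which, as the excerpt remarks, carry over verbatim.

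For \textbf{soundness} I would read a winning strategy $\sigma$ as a tree and collapse the maximal segments between modal rules into a transition system $\Transsys_\vartheta$ exactly as in the proof of Theorem~\ref{thm:soundness}, labelling each collapsed state with the literals of its configuration. The goal is $\Transsys_\vartheta,\widehat r\models\vartheta$. Assuming the contrary, I would follow $\sigma$ at player-$0$ nodes and steer player~$1$'s choices at applications of \XruleWithE{} so as to maintain a single tracked formula $\phi_i$ with $\widehat{\mathcal C_i}\not\models\phi_i$; for a tracked $\allpath$-formula I additionally carry a witness path of the collapse along which the \emph{body} of $\phi_i$ fails and which advances by one at each modal step. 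Boolean rules shrink $\phi_i$, and a failing literal is impossible because asserted literals hold in the labelling, so the tracked thread is infinite. By the thread dichotomy it is a $\until$- or an $\release$-thread. An $\expath$-release cannot stay failing forever (perpetual unfolding would exhibit a path on which $\varphi_2$ holds throughout, satisfying the release), and for an $\allpath$-release the carried witness path forces $\varphi_2$ to fail after finitely many steps; hence no infinite $\release$-thread survives, so $\phi_i$ must form a $\until$-thread, contradicting that $\sigma$ is winning.

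For \textbf{completeness} I would start from a model $\Transsys,s^*\models\vartheta$ and build an $\States$-labelled strategy maintaining the invariant $s\models\bigwedge\mathcal C$ for every $s$-labelled configuration $\mathcal C$. At the choice rules player~$0$ reads her move off the model: a true disjunct for \Orrule, the fulfilling branch of \EUrule{}/\AUrule{} exactly when $\varphi_2$ already holds (using the minimal witness path $\xi_\Transsys$ for the existential case as in property~\ref{item:completeness:exists:EU}), and the corresponding branch of \ERrule{}/\ARrule. At an application of \XruleWithE{}, whatever $\expath\nxt\varphi'$ player~$1$ selects, player~$0$ labels the successor with a model successor $s'$ of $s$ realising $\varphi'$ as in property~\ref{item:completeness:exists:E1}; since every $\allpath\nxt\psi_j$ present at $s$ holds at \emph{all} successors, this same $s'$ satisfies the whole successor configuration, and \XruleNoE{} is handled by any successor. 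Finite plays end in consistent literals and are won. The crux is to rule out a $\until$-thread in any conforming play: an infinitely unfolded $\varphi_1\until\varphi_2$ would, by the invariant, be satisfied at the labelled states, yet its least-fixpoint semantics bounds the number of unfoldings before $\varphi_2$ must hold---guaranteed along the generated path by the minimal-path choice in the existential case and by the universal fulfilment of $\allpath(\varphi_1\until\varphi_2)$ in the universal case, with the relevant positions pinpointed by Lemma~\ref{every UR-thread form}---contradicting infinite unfolding.

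The main obstacle in both directions is the correct treatment of the $\allpath$-release (resp.\ $\allpath$-until) formulas: unlike in the purely existential situation, failing (resp.\ holding) of an $\allpath$-formula is a statement about \emph{some} (resp.\ \emph{every}) path, so naive single-formula tracking would admit a spurious $\release$-thread running along a \emph{good} path while the formula still fails via another path. I expect the carried witness path in soundness, and the universal-fulfilment argument via Lemma~\ref{every UR-thread form} in completeness, to be exactly the ingredients that close this gap; everything else should be a routine simplification of the corresponding \ctlstar arguments.
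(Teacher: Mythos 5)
Your proposal is correct and takes essentially the same approach as the paper: for this theorem the paper offers no separate argument at all, saying only that it ``can be shown in similar way as Theorem~\ref{thm:correctness}'', and your adaptation of the soundness proof (collapsing the strategy tree, tracking a single failing formula with a carried witness path for $\allpath$-formulas) and of the completeness proof (the $\Transsys$-labelled strategy with the minimal-path discipline for existential eventualities) is exactly the intended simplification of Theorems~\ref{thm:soundness} and~\ref{thm:completeness} to the block-free \ctl setting.
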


As decision procedure, we again propose to apply a reduction to parity games,
similar to the one of \refsubsection{subsec:decproc:games}. The parity game is
constructed the same way by using the reduced configuration set of this section.
Additionally, we can construct a much simpler DPA for checking the winning
conditions.

Due to the fact that there are no traces anymore resp.\ every trace now contains
a thread-singleton, we can either apply an automaton construction similar to
the one of \refsubsection{subsec:decproc:Atraces} or to the one of
\refsubsection{subsec:decproc:Etraces}. We follow the latter approach here.

Remember that the automaton of \refsubsection{subsec:decproc:Etraces} was composed
of the disjoint union of $k$ components $C_0,\ldots,C_{k-1}$ with 
$C_i = \{i\} \,\cup\, \{i\} \times 2^{\fl{\vartheta}}$, where
$\varphi_0 \until \psi_0,\ldots, \varphi_{k-1} \until \psi_{k-1}$ was an
enumeration of all $\until$-formulas in $\vartheta$.
We can simplify the automaton dramatically here by considering the components
$C_i = \{i\} \,\cup\, \{i\} \times \{\varphi, \expath\nxt\varphi, \allpath\nxt\varphi \mid \varphi \in \subf{\vartheta}\}$ 
instead. 
The transition function is updated accordingly, following now single formulas instead of blocks. 
We can get a result similar to Theorem~\ref{thm:dba for no bad E-traces}:

\begin{theorem}
\label{thm:dba for no ctl threads}
For every \ctl formula $\vartheta$ with $k$ $\until$-subformulas there is a DBA $\mathcal{A}$ of 
size at most $k\cdot(1 + 3|\vartheta|)$ s.t.\ for all plays $\pi$: 
$\pi \in L(\mathcal{A})$ iff $\pi$ does not contain a $\until$-thread.
\end{theorem}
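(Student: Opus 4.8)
The plan is to reuse, in a simplified form, the construction and the two-way correctness argument of Theorem~\ref{thm:dba for no bad E-traces}. First I would fix an enumeration $\varphi_0 \until \psi_0, \ldots, \varphi_{k-1} \until \psi_{k-1}$ of the $k$ until-subformulas of $\vartheta$ (each occurring under a fixed $\expath$ or $\allpath$ quantifier) and set $\mathcal{A} := (C_0 \cup \cdots \cup C_{k-1}, \Sigma, 0, \delta, \{0\})$ with components $C_i = \{i\} \cup \{i\} \times \{\varphi, \expath\nxt\varphi, \allpath\nxt\varphi \mid \varphi \in \subf{\vartheta}\}$. Since $|\subf{\vartheta}| \le |\vartheta|$, each component has at most $1 + 3|\vartheta|$ states and there are $k$ of them, which yields the claimed bound. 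The transition function is the formula-level analogue of the block-level one: in the waiting state $i$ the automaton stays put on most symbols, advances to $(i+1) \bmod k$ on any application of $\XruleNoE$ or $\XruleWithE$, and jumps into the tracking part of $C_i$ exactly when the $i$-th until formula is unfolded along its non-fulfilling branch; from a tracking state $(i,\chi)$ (the tracked formula $\chi$ ranging over the $3|\vartheta|$ forms $\varphi$, $\expath\nxt\varphi$, $\allpath\nxt\varphi$) it follows the unique connected successor of $\chi$, returning to state $(i+1)\bmod k$ precisely when the tracked until formula is fulfilled. Determinism is immediate because configurations carry no blocks, so each input symbol determines a single principal formula and successor.

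I would then isolate the same structural dichotomy that drove the earlier proof. Because every infinite play contains infinitely many applications of $\XruleNoE$ or $\XruleWithE$ (Lemma~\ref{modal rule infinitely often}, which still holds here), each of which sends state $i$ to $(i+1)\bmod k$, the automaton can never remain in a waiting state $i$ forever. Hence its unique run on any play either cycles through $0,1,\ldots,k-1$ infinitely often, thereby visiting the accepting state $0$ infinitely often, or gets trapped inside a single tracking part $C_i \setminus \{i\}$, which contains no accepting state. The B\"uchi condition $\{0\}$ therefore accepts exactly the plays whose run is of the former type.

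For correctness I would prove the two inclusions as in Theorem~\ref{thm:dba for no bad E-traces}. If $\pi \in L(\mathcal{A})$ then its run is not trapped in any $C_i\setminus\{i\}$, so no until formula is unfolded and then left unfulfilled along a tracked thread; as every thread is a $\until$- or $\release$-thread, this shows $\pi$ has no $\until$-thread. Conversely, if $\pi$ contains a $\until$-thread formed by the $i$-th until formula, I would use that every top-level until formula must be resolved by $\EUrule$ or $\AUrule$ before the next modal rule can fire: thus, when the automaton first sits in state $i$ with this until present at top level it is forced into $C_i\setminus\{i\}$ rather than advancing, and since the thread never fulfils it the run stays trapped there, so $0$ occurs only finitely often and $\pi$ is rejected.

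The main obstacle I expect is this soundness direction, specifically the argument that the cyclic waiting mechanism is guaranteed to \emph{catch} the bad thread: one must show that the run really does reach state $i$ at a moment when the $i$-th until formula is top-level and about to be unfolded along the thread, and that from there it cannot escape $C_i\setminus\{i\}$. This rests on the priority observation that a pending top-level until is unfolded strictly before the next modal step, now applied uniformly to $\expath$- and $\allpath$-quantified untils alike, which is precisely the simplification afforded by dropping blocks and by every trace contributing a single thread.
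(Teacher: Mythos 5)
Your proposal is correct and follows essentially the same route as the paper, which itself only sketches this theorem by pointing back to the construction and correctness argument of Theorem~\ref{thm:dba for no bad E-traces} with blocks replaced by single formulas from $\{\varphi, \expath\nxt\varphi, \allpath\nxt\varphi \mid \varphi \in \subf{\vartheta}\}$. The only cosmetic point is that in your soundness direction the run may already be trapped in some other component $C_j \setminus \{j\}$ with $j \neq i$ before it ever returns to the waiting state $i$; that case also yields rejection, as the paper's proof of Theorem~\ref{thm:dba for no bad E-traces} notes explicitly.
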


By attaching this automaton to our parity game, we obtain an optimal decision
procedure for \ctl:

\begin{corollary}\label{cor:EXPTIME ctl}
Deciding satisfiability for some $\varphi \in$ \ctl is in \EXPTIME. %
\end{corollary}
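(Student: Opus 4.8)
The plan is to mirror the 2\EXPTIME{} complexity argument of Corollary~\ref{cor:2EXPTIME}, but now using the smaller objects available in the \ctl{} setting, and to track how each of the three multiplicative factors in the state count of the parity game collapses by one exponential. First I would recall that by Theorem~\ref{thm: ctl correctness} deciding satisfiability of $\vartheta$ is equivalent to deciding whether player~0 has a winning strategy in $\mathcal{G}_\vartheta$, and that this in turn reduces to solving a parity game built exactly as in Definition~\ref{def::sat_game}: the game graph is $\goals{\vartheta} \times Q$, where $Q$ is the state set of a DPA recognising the winning condition $L$. The key difference from the \ctlstar{} case is threefold, and each difference saves an exponential.

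The first saving is in the configuration set. Here a configuration for a \ctl{} formula is a set of state formulas drawn from $\{\varphi, \expath\nxt\varphi, \allpath\nxt\varphi \mid \varphi \in \subf{\vartheta}\}$, so $|\goals{\vartheta}|$ is at most $2^{\mathcal{O}(|\vartheta|)}$ rather than doubly exponential; this was already noted in the text just before the definition of the game. The second saving is in the automaton checking the winning condition: by Theorem~\ref{thm:dba for no ctl threads} there is a DBA $\mathcal{A}$ with at most $k\cdot(1 + 3|\vartheta|)$ states recognising exactly the plays without a $\until$-thread, i.e.\ exactly the plays in $L$. Crucially this automaton is already deterministic and of \emph{polynomial} size, so the expensive B\"uchi-determinisation of Theorem~\ref{thm:nba2dpa}, which was the source of the second exponential in the \ctlstar{} procedure, is avoided entirely. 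Because $\mathcal{A}$ is a DBA it is in particular a DPA of index~$2$, and we may take it directly as the annotating automaton $\mathcal{A}_\vartheta$ of the parity game, dispensing with the intersection step of Lemma~\ref{lem:bucparintersect}.

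I would then assemble the parity game just as in Definition~\ref{def::sat_game}, using the reduced configuration set and the DBA $\mathcal{A}$ as the annotation. Its number of nodes is bounded by $|\goals{\vartheta}| \cdot |\mathcal{A}| \le 2^{\mathcal{O}(|\vartheta|)} \cdot k\cdot(1+3|\vartheta|) = 2^{\mathcal{O}(|\vartheta|)}$, i.e.\ \emph{singly} exponential in $|\vartheta|$. The third saving is in the index: since $\mathcal{A}$ is a B\"uchi automaton, the resulting parity game uses only the priorities inherited from a B\"uchi condition plus the constant priorities $0$ and $1$ that Definition~\ref{def::sat_game} assigns for consistent literal sets and dead ends, so the index is bounded by a constant. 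Invoking the fact (quoted after the definition of parity games) that a parity game of size $m$ and index $k$ is solvable in time $m^{\mathcal{O}(k)}$, we obtain a running time of $\bigl(2^{\mathcal{O}(|\vartheta|)}\bigr)^{\mathcal{O}(1)} = 2^{\mathcal{O}(|\vartheta|)}$, which is singly exponential and hence places the problem in \EXPTIME.

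The step that I expect to require the most care is verifying that the analogue of Theorem~\ref{thm:gamescorrect} genuinely carries over, i.e.\ that player~0 wins the constructed parity game iff she wins $\mathcal{G}_\vartheta$. This needs the projection argument: a play of the parity game projects onto a play of $\mathcal{G}_\vartheta$, the projection is invertible because $\mathcal{A}$ is deterministic (so the annotation is uniquely determined by the underlying play), and a play is won by player~0 in the parity game exactly when its projection either ends in a consistent set of literals or is accepted by $\mathcal{A}$, which by Theorem~\ref{thm:dba for no ctl threads} is precisely the winning condition $L$. Everything else is a routine substitution of the singly exponential bounds into the complexity estimate of Corollary~\ref{cor:2EXPTIME}; the only genuinely new content is checking that the absence of blocks and traces does not disturb the correspondence between winning strategies, which follows because in the \ctl{} game every infinite play contains exactly one thread, so existential and universal quantification over threads coincide and the DBA directly decides membership in $L$.
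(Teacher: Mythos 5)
Your proposal is correct and follows essentially the same route as the paper: bound the parity game by $|\goals{\vartheta}|\cdot|\mathcal{A}| = 2^{\mathcal{O}(|\vartheta|)}\cdot k\cdot(1+3|\vartheta|) = 2^{\mathcal{O}(|\vartheta|)}$ states using the singly exponential configuration set and the polynomial-size DBA of Theorem~\ref{thm:dba for no ctl threads}, observe the index is constant, and solve in singly exponential time. The only cosmetic difference is that the paper notes the index is $2$ and invokes the $\mathcal{O}(n\cdot m)$ algorithm for B\"uchi games rather than the general $m^{\mathcal{O}(k)}$ parity-game bound, which changes nothing asymptotically relevant here.
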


\begin{proof}
The number of states in the constructed parity game is bounded by
\begin{displaymath}
  2^{\mathcal{O}(|\vartheta|)} \cdot |\vartheta| \cdot (1 + 3|\vartheta|) 
  \enspace=\enspace
  2^{\mathcal{O}(|\vartheta|)}
\,\text.
\end{displaymath}
Note that the out-degree of the parity game graph is at most $|\vartheta|$ because
of rule \XruleWithE\, which is bounded by the number of $\expath$-formulas in 
$\vartheta$. The game's index is $2$ which makes it, in fact, a
B\"uchi game.
It is well-known \cite{ChatterjeeHenzingerPiterman06_AlgorithmsForBuchiGames}
that B\"uchi games with $n$ states and $m$ edges can be solved in time $\mathcal{O}(n \cdot m)$ from which the
claim follows immediately.
\end{proof}

The previous upper bound is optimal because 
the satisfiability problem for \ctl-fragment \pdl is \EXPTIME-hard~\cite{Fischer79}.
Since each block in the configurations is mainly a subformula of $\vartheta$,
the branching-width is bounded by $|\vartheta|$.  This bound is independent of
the strategy as compared with Corollary~\ref{cor:finite model linear branch width}.

\section{Comparison with Existing Methods}
\label{sec:compare}

\subsection{\ctlstar}

We compare the game-based approach with existing decision
procedures for \ctlstar, namely Emerson/Jutla's tree automata~\cite{Emerson:2000:CTA}, Kupferman/Vardi's 
automata reduction~\cite{conf/focs/KupfermanV05}, Reynolds' proof
system~\cite{Reynolds00}, %
and Reynolds' tableaux~\cite{conf/fm/Reynolds09} with respect to several aspects like computational
optimality, availability of an implementation etc., c.f.\ Table~\ref{fig:cmp}.

\afterpage{
\begin{table}[t]
{\small\noindent
\def\stacktext##1{{\def\arraystretch{1}\tabular{@{}c@{}}##1\endtabular}} %
\renewcommand{\arraystretch}{1.5}%
\begin{tabularx}{\linewidth}{>{\raggedright}X@{\qquad}c@{\qquad}c@{\qquad}c@{\qquad}c}
\toprule
\backslashbox[4.0cm]{Aspect}{Method} %
     & \stacktext{Emerson \\\& Jutla\\\cite{Emerson:2000:CTA}} 
     & \stacktext{Reynolds\\\cite{Reynolds00}} 
     & \stacktext{Kupferman \& Vardi \\(\& Wolper)\\ \cite{KVW00,conf/focs/KupfermanV05}}  & here
\\\midrule
Concept                       & automata                 & tableaux         & automata-reduction & games                    \\
Worst-case complexity         & 2\EXPTIME                & 2\EXPTIME        & 2\EXPTIME & 2\EXPTIME                \\
Implementation available      & no                       & yes              & no & yes\footnotemark{}                      \\
Model construction            & yes                      & yes              & no & yes                      \\
Out-degree                    & $\mathcal{O}(n)$         & $\mathcal{O}(n)$ & $\mathcal{O}(n)$ & $\mathcal{O}(n)$         \\
Requires small model property & no                       & yes              & no & no                       \\
Derives small model property  & $2^{2^{\mathcal{O}(n)}}$ & ---              & $2^{2^{\mathcal{O}(n)}}$ & $2^{2^{\mathcal{O}(n)}}$ \\
Needs B\"uchi determinisation & yes                      & no               & no & yes                      \\
\bottomrule
\end{tabularx}%
}
\caption{Comparison of the main decision methods for satisfiability in \ctlstar.}
\label{fig:cmp}
\end{table}
\footnotetext{\url{https://github.com/oliverfriedmann/mlsolver}}
}%

Emerson/Jutla's procedure transforms a \ctlstar-formula $\varphi$ in some normal form into a
tree-automaton recognising exactly the tree-unfoldings of fixed bran\-ching-width of all models of
$\varphi$. This uses a translation of linear-time formulas into B\"uchi automata and then into
deterministic (Rabin) automata for the same reasons as outlined in \refsubsection{subsec:win with det automata}. 
The game-based approach presented here does not use tree-automata as such, but player-0-strategies
resemble runs of a tree automaton. The crucial difference is the separation between the use of machinery
for the characterisation of satisfiability in \ctlstar and the use of automata only in order to make
the abstract winning conditions effectively decidable. In particular, we do not need translations of
linear-time temporal formula into $\omega$-word automata.  The relationship between input formula and
resulting structure (here: game) is given by the rules. Furthermore, this separation enables the 
branching-width of models of $\varphi$ to be flexible; it is given by the number of successors of the 
rule \XruleWithE{}. In a tree automaton setting it is a priori fixed to a number which is linear in the 
size of the input formula. While this does not increase the asymptotic worst-case complexity, it may have 
an effect on the efficiency in practice. Not surprisingly, we do not know of any attempt to implement the
tree-automata approach.

Kupferman/Vardi's approach is not just a particular decision procedure for \ctlstar. Instead, it is a general
approach to solving the emptiness problem for alternating parity tree automata. While this can generally
be done using determinisation of B\"uchi automata as in Emerson/Jutla's approach, Kupferman/Vardi have
found a way to avoid B\"uchi determinisation by using universal co-B\"uchi automata instead. These are
translated into alternating weak tree automata and, finally, into nondeterministic B\"uchi tree automata.
Emptiness of the latter is relatively easy to check. In the case of \ctlstar, a formula $\varphi$ can be
translated into a hesitant alternating automaton of size $\mathcal{O}(|\varphi|\cdot 2^{|\varphi|})$ 
\cite{KVW00} whose emptiness can be checked in time that is doubly exponential in $|\varphi|$.

The price to pay, though, is the use of a reduction that is only satisfiability-preserving. Thus, their 
approach reduces the satisfiability problem for branching-time temporal logics that can be translated into 
alternating parity tree automata to the emptiness problem for tree automata which accept some tree iff the input
formula is satisfiable. The translation does not preserve models, though. There is a way of turning
a tree model for the nondeterministic B\"uchi automaton back into a tree model for the branching-time temporal
logic formula because the alphabet that the universal co-B\"uchi automaton uses is just a projection of the 
hesitant alternating tree automaton's alphabet. Still, this procedure does not seem to keep a close connection 
between the subformulas of the input formulas and the structure of the resulting tree automaton which is being
checked for emptiness.

Reynolds' proof system \cite{Reynolds00} is an approach at giving a sound and complete finite axiomatisation for
\ctlstar. Its proof of correctness is rather intricate and the system itself is useless for practical
purposes since it lacks the subformula property and it is therefore not even clear how a decision procedure,
i.e.\ proof search could be done. In comparison, the game-based calculus has the subformula property---formulas in
blocks of successor configurations are subformulas of those in the blocks of the preceding one---and comes
with an implementable decision procedure. The only price to pay for this is the characterisation of 
satisfiability through infinite objects instead.

Reynold's tableau system \cite{Rey:startab} shares some similarities with the games presented here. He also uses sets
of sets of formulas as well as traces (which he calls threads), etc. Even though his tableaux
are finite, the difference in this respect is marginal. Finiteness is obtained through looping back,
i.e.\ those branches might be called infinite as well. One of the real differences between the
two systems lies in the way that the semantics of the \ctlstar operators shows up. In Reynolds' system
it translates into technical requirements on nodes in the tableaux, whereas the games come with
relatively straight-forward game rules. The other main difference is the loop-check. Reynolds says
that ``\ldots {\it we are only able to give some preliminary results on mechanisms for tackling repetition.}
[\ldots] {\it The task of making a quick and more generally usable repetition checker will be left to be 
advanced and presented at a later date.}'' The game-based method comes with a non-trivial repetition checker: it
is given by the annotated automata.

\subsection{The Fragments \ctlplus and \ctl}

\begin{table}[t]
{\small\noindent
\def\stacktext#1{{\def\arraystretch{1}\tabular{@{}c@{}}#1\endtabular}}
\renewcommand{\arraystretch}{1.5}%
\begin{tabularx}{\linewidth}{>{\raggedright}X@{\quad\enspace}c@{\quad\enspace}c@{\quad\enspace}c@{\quad\enspace}c}
\toprule
\backslashbox[4.0cm]{Aspect}{Method} %
                              & \stacktext{Emerson \& Halpern\\\cite{EmersonHalpern85}}         
                                                        & \stacktext{Vardi \& Wolper\\\cite{VW86a}}
                                                                                & \stacktext{Abate et al.\\\cite{conf/lpar/AbateGW07}}
                                                                                                             & here
\\\midrule
Concept                       & filtration              & automata              & tableaux                   & games \\
Worst-case complexity         & \EXPTIME                & \EXPTIME              & 2\EXPTIME                  & \EXPTIME \\
Implementation available      & no                      & no                    & yes                        & yes \\
Model construction            & yes                     & yes                   & yes                        & yes \\
Requires small model property & yes                     & no                    & no                         & no \\
Derives small model property  & ---                     & $2^{{\mathcal O}(n)}$ & $2^{2^{{\mathcal O}(n)}}$  & $2^{{\mathcal O}(n)}$ \\
\bottomrule
\end{tabularx}%
}
\caption{Comparison of the main decision methods for satisfiability of \ctl-formulas.}
\label{fig:cmp ctl}
\end{table}

To the best of our knowledge, there are no decision procedures that are especially tailored towards \ctlplus.
Thus, the restriction of the satisfiability games to \ctlplus as presented in \refsection{sec:ctlplus} is the
first decision procedure for this logic which does not also decide the whole of \ctlstar.

The situation for \ctl is entirely different. The first decision procedure for \ctl was given by Emerson and
Halpern~\cite{EmersonHalpern85} using filtration. It starts with a graph of Hintikka sets and successively 
removes edges from this graph in order to exclude unfulfilled eventualities. This is similar to the game-based
approach in that the game rules for Boolean connectives mimic the rules for being a Hintikka set. On the other
hand, the machinery for excluding unfulfilled eventualities is an entirely different one. 

There is a purely automata-theoretic decision procedure for \ctl~\cite{VW86a}: as such, it constructs a tree
automaton which recognises all tree-unfoldings of models of the input formula. In order to obtain an asymptotically
optimal decision procedure for \ctl, Vardi/Wolper use a new type of acceptance condition resulting in 
\emph{eventuality automata} whose emptiness problem can be decided in polynomial time. An exponential 
translation from \ctl into such automata then yields a decision procedure for \ctl. There are certain 
similarities to the game-based approach presented here: the design of the simpler type of acceptance condition
is reminiscent of the manual creation of deterministic automata that check the winning conditions.

There is a tableau-based decision procedure for \ctl~\cite{conf/lpar/AbateGW07}. As with Reynold's tableaux
for \ctlstar, the main difference to the game-based (and also automata-theoretic) approach is the fact that
the tableau calculi do not separate the decision procedure into a syntactical characterisation (e.g.\ winning 
strategy) and an algorithm deciding existence of such objects. This leads to correctness proofs which are even
more complicated than the ones for the \ctlstar games presented here. Also, this method does not yield a
common framework for dealing with unfulfilled eventualities which is given by the different types of 
(deterministic) automata which are being used here in order to characterise the winning conditions. 

The work that is most closely related to the one presented here consists of the focus game approach to \ctl~%
\cite{ls-lics2001}. These are also satisfiability games, and the rules there extend the rules here with a focus on 
a particular subformula which is under player 1's control. The focus game approach does not explicitly give an
algorithm for deciding satisfiability. A close analysis shows that the focus can be seen as an annotation with 
a nondeterministic co-B\"uchi automaton to the game configurations, and a decision procedure could be obtained by
determinising this automaton. In this respect, the games presented here improve over the focus games by showing
how small deterministic B\"uchi automata suffice for this task.

Table~\ref{fig:cmp ctl} tabulates the comparison of the \ctl satisfiability games with these other approaches.

\section{Further Work}
\label{sec:further}

The results of the previous section show that the game/automata approach to deciding \ctlstar is
reasonably viable in practice. Note that the implementation so far only features optimisations on one
of three fronts: it uses the latest and optimised technology for solving the resulting games. However,
there are two more fronts for optimisations which have not been exploited so far. The main advantage of
this approach is---as we believe---the combination of tableau-, automata- and game-machinery and
therefore the possible benefit from optimisation techniques in any of these areas. It remains to be
seen for instance whether the automaton determinisation procedure can be improved or replaced by a
better one. Also, the tableau community has been extremely successful in speeding up tableau-based
procedures using various optimisations. It also remains to be seen how those can be incorporated in the
combined method.

Furthermore, it remains to expand this work to extensions of \ctlstar, for example
\ctlstar with past operators, multi-agent logics based on \ctlstar, etc.

%
%
%

\bibliographystyle{alpha}
\bibliography{literature}

\end{document}